\newlang{\UNSAT}{Unsat}
\newlang{\TAUT}{Taut}
\newlang{\GapSVP}{GapSVP}
\newlang{\SIVP}{SIVP}
\newlang{\SIS}{SIS}
\newcommand{\gammaGSVP}{\gamma\text{-}\GapSVP}
\newcommand{\LA}{\mathsf{LA}}
\newcommand{\LAQ}{\mathsf{LA}_\mathbb{Q}}
    \newcommand{\LAint}{\mathsf{int}}
\theoremstyle{definition}
\newtheorem{definition}{Definition}[section]
\newtheorem{assumption}[definition]{Assumption}
\theoremstyle{plain}
\newtheorem{theorem}[definition]{Theorem}
\newtheorem{lemma}[definition]{Lemma}
\newtheorem{corollary}[definition]{Corollary}
\newtheorem{proposition}[definition]{Proposition}
\newenvironment{customprop}[1]
  {\innercustomprop}
  {\endinnercustomprop}
\newtheorem*{theorem*}{Theorem}
\newtheorem*{corollary*}{Corollary}
\theoremstyle{remark}
\newtheorem{remark}[definition]{Remark}
\newcommand{\LL}{\mathcal{L}}
\newcommand{\cB}{\mathcal{B}}
\newcommand{\bZ}{\mathbb{Z}}
\newcommand{\bQ}{\mathbb{Q}}
\newcommand{\bR}{\mathbb{R}}
\newcommand{\bN}{\mathbb{N}}
\newcommand{\bF}{\mathbb{F}}
\newcommand{\size}[2]{\mathsf{size}_{#1}(#2)}
\newcommand{\email}[1]{\href{mailto:#1}{\textsf{#1}}}
\title{Quantum Automating $\TC^0$-Frege Is LWE-Hard\footnote{last version: \today}}
\date{}
\author{
    Noel Arteche\footnote{Lund University and University of Copenhagen, \email{noel.arteche@cs.lth.se}} \and
    Gaia Carenini\footnote{École Normale Supérieure (ENS-PSL) and University of Cambridge, \email{gaia.carenini@ens.psl.eu}} \and
    Matthew Gray\footnote{University of Oxford, \email{matthew.gray@cs.ox.ac.uk}}
}
\begin{document}
\maketitle
\abstract{
We prove the first hardness results against efficient proof search by quantum algorithms. We show that under Learning with Errors (LWE), the standard lattice-based cryptographic assumption, no quantum algorithm can weakly automate ${\TC^0\text{-Frege}}$. This extends the line of results of \citeauthor{KP98} (\emph{Information and Computation}, 1998), \citeauthor{BPR1997} (\emph{SIAM Journal on Computing}, 2000), and Bonet, Domingo, Gavaldà, Maciel, and Pitassi (\emph{Computational Complexity}, 2004), who showed that Extended Frege, $\TC^0$-Frege and {$\AC^0\text{-Frege}$}, respectively, cannot be weakly automated by classical algorithms if either the RSA cryptosystem or the Diffie-Hellman key exchange protocol are secure. To the best of our knowledge, this is the first interaction between quantum computation and propositional proof search.
}


\section{Introduction}
\label{sec:intro}
Traditionally, propositional proof complexity has been primarily concerned with proving lower bounds for the length of proofs in propositional proof systems, with the ultimate goal of settling whether $\NP = \coNP$ \cite{cookReckhow}. In parallel, a growing line of research has focused on the computational hardness of finding propositional proofs. Efficient proof search is formally captured by the notion of \emph{automatability}, introduced by \citeauthor{BPR1997} \cite{BPR1997}: a propositional proof system $S$ is automatable if there exists an algorithm that given as input a tautology $\varphi$, outputs an $S$-proof of $\varphi$ in time polynomial in the size of the shortest proof. By relating proofs and computation, automatability connects proof complexity to central areas of theoretical computer science such as automated theorem proving, SAT solving and combinatorial optimization \cite{jakob}, learning theory \cite{AleknovichLearning, pichLearning}, and Kolmogorov complexity \cite{krajicekKolmo}.

Except for very weak proof systems like Tree-like Resolution, automatable in quasipolynomial time \cite{BP96}, most natural systems appear impossible to automate under standard hardness assumptions. Existing hardness results can be split into two broad categories. Work from the late 90s and early 00s showed that stronger proof systems are non-automatable under cryptographic assumptions, while more recent work has shown that weaker proof systems are non-automatable under the optimal assumption that $\P \neq \NP$.

The cryptography-based approach was initiated by the seminal work of \citeauthor{KP98} \cite{KP98}, who showed that Extended Frege is not automatable unless factoring can be solved efficiently, although the notion of automatability would only be defined slightly later by \citeauthor{BPR1997} \cite{BPR1997}, who showed that $\TC^0$-Frege is hard to automate unless Blum integers can be factored by polynomial-size circuits. Finally, Bonet, Domingo, Gavaldà, Maciel, and Pitassi \cite{BDGMP04} extended the existing result from $\TC^0$-Frege to $\AC^0$-Frege under the stronger assumption that Blum integers cannot be factored by subexponential-size circuits.

Building on a long line of work \cite{buss1995godel, Iwa97, pudlak03, AB04, ABMP98, AR08, mertz}, the first $\NP$-hardness result was shown in 2019, when \citeauthor{AM20} \cite{AM20} proved that Resolution is not automatable unless $\P = \NP$. This is optimal, as $\P = \NP$ implies the automatability of any proof system. Their proof uses a clever reduction from $\SAT$ that requires showing a specific lower bound for this system. The technique has since been adapted to other weak proof systems such as Regular and Ordered Resolution \cite{B20}, $k$-DNF Resolution \cite{G20}, Cutting Planes \cite{GKSMP20}, Nullstellensatz and Polynomial Calculus \cite{dRGNRS21}, the OBDD proof system \cite{IR22} and, more recently, even $\AC^0$-Frege \cite{P23}.

Though the latter works prove non-automatability under the optimal hardness assumption, their strength is incomparable to the cryptography-based results. The $\NP$-hardness results all rely on proving specific super-polynomial proof complexity lower bounds for each system, meaning this strategy fails for $\AC^0[2]$-Frege and systems above, for which no lower bounds are known. In contrast, the cryptographic hardness results work by ruling out \emph{feasible interpolation} for these systems, a property which allows one to extract computational content from proofs. For a proof system $S$ proving its own soundness (such as $\TC^0$-Frege), feasible interpolation is equivalent to the notion of \emph{weak} automatability introduced by \citeauthor{AB04} \cite{AB04}, the latter meaning that no proof system simulating $S$ is automatable. The question of whether weak systems such as Resolution are weakly automatable remains one of the major open problems in the field. In short, there exists a trade-off between the strength of the hardness assumption involved ($\P \neq \NP$ versus cryptographic) and the generality of the result (automatability versus weak automatability).

Our work is the first new contribution to the non-automatability of strong proof systems\footnote{We use the terms \emph{weak} and \emph{strong} informally throughout the paper. Traditionally, a strong proof system is a system that proves its own soundness, though it is often also intended to be a system for which lower bounds are lacking. For our purposes, \emph{strong} refers to anything simulating $\TC^0$-Frege, for which both of the previous conditions apply.} in more than two decades. The early results \cite{KP98, BPR1997, BDGMP04} relied on the assumption that factoring is hard, which does not hold for quantum models of computation due to Shor's breakthrough algorithm \cite{shor}. This raises the question of whether a quantum machine could carry out proof search efficiently for some strong proof system. Grover's search algorithm \cite{grover} already provides a quadratic speed-up over brute-force proof search for any system. While this is not enough to achieve automatability, the possibility of more powerful algorithms motivates the interest in new conditional hardness results. The $\NP$-hardness results outlined above imply that $\NP \not\subseteq \BQP$ suffices to rule out automatability for many weak systems, but for stronger systems no widely believed assumption had yet been proven sufficient.

In this work, we formally define the notion of \emph{quantum automatability} and show the first hardness results. We prove that $\TC^0$-Frege is not quantum automatable unless lattice-based cryptography can be broken by polynomial-size quantum circuits. Our results follow from the relationship between automatability and feasible interpolation suitably generalized to the quantum setting. This means that we also rule out quantum feasible interpolation and weak quantum automatability under the same cryptographic assumptions.

\subsection*{Contributions}
Our main contribution is proving the hardness of quantum automatability under the assumption that lattice-based cryptography is secure against quantum computers.

In 1996, Ajtai \cite{ajtai1996generating} gave the first worst-case to average-case reductions for lattice problems. In 1997, in joint work with Dwork \cite{ajtai1997public}, the worst-case hardness of such lattice problems was used to design public-key cryptosystems. Building on similar principles, the Learning with Errors (LWE) assumption of Regev \cite{regev2009lattices} has become the standard post-quantum cryptographic assumption. The LWE assumption is simple to state, surprisingly versatile, and does not seem susceptible to the period-finding technique crucial to Shor's algorithm.

In this work we show that any quantum algorithm that automates $\TC^0$-Frege can be used to break LWE.

\begin{theorem*}[Main theorem, informal]
    If there exists a polynomial-time quantum algorithm that weakly automates $\TC^0$-Frege, then LWE can be broken in polynomial time by a quantum machine.
\end{theorem*}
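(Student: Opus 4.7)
The plan is to adapt the classical Bonet--Pitassi--Raz strategy \cite{BPR1997} to the quantum setting, with an LWE-based public-key cryptosystem replacing the Blum-integer assumption. The argument proceeds in three stages: (i) establishing a quantum analogue of the equivalence between weak automatability and feasible interpolation, (ii) encoding the semantic security of a suitable LWE cryptosystem as a family of tautologies admitting short $\TC^0$-Frege proofs, and (iii) reading off a quantum LWE distinguisher from the quantum interpolant.

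First, I would define \emph{quantum weak automatability} of a proof system $S$ as the property that no proof system simulating $S$ is quantum automatable, and \emph{quantum feasible interpolation} as the property that every short $S$-proof of an unsatisfiable $\alpha(\bar x, \bar y) \land \beta(\bar x, \bar z)$ with disjoint $\bar y, \bar z$ yields a polynomial-size quantum circuit $C$ on the shared variables $\bar x$ such that $C(\bar x) = 0$ implies $\alpha(\bar x, \cdot)$ is unsatisfiable in $\bar y$ and $C(\bar x) = 1$ implies $\beta(\bar x, \cdot)$ is unsatisfiable in $\bar z$. The classical equivalence of these notions for systems proving their own reflection principle is due to Atserias--Bonet \cite{AB04}, and the proof is essentially agnostic to the computational model of the interpolant: one constructs an auxiliary tautology whose short proof encodes an interpolating circuit, and any quantum automator for $\TC^0$-Frege extracts this proof and thus a quantum interpolant. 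Since $\TC^0$-Frege proves its own soundness, the quantum version of the equivalence goes through, and it suffices to refute quantum feasible interpolation for $\TC^0$-Frege under LWE.

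Second, I would fix an LWE-based public-key encryption scheme such as Regev's. Let the common variables $\bar x$ encode a public key together with a ciphertext, let $\alpha(\bar x, \bar y)$ assert that $\bar y$ is a certificate (secret key and encryption randomness) that $\bar x$ encrypts the bit $0$, and let $\beta(\bar x, \bar z)$ assert the analogous statement for $1$. Correctness of the scheme makes $\alpha \land \beta$ unsatisfiable, and because all underlying operations --- modular arithmetic over $\bZ_q$, matrix-vector products, and norm comparisons --- are $\TC^0$-computable, this disjointness should admit polynomial-size $\TC^0$-Frege proofs via the standard formalization through the bounded-arithmetic theory $\VTC$, in direct analogy with the classical arguments of \cite{BPR1997, BDGMP04} for Blum-integer arithmetic.

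Applying quantum feasible interpolation to this proof family then yields a polynomial-size quantum circuit that, on input a public key and a ciphertext, recovers the encrypted bit with non-negligible advantage, contradicting semantic security and hence breaking decisional LWE through Regev's reduction \cite{regev2009lattices}. The main obstacle, as already in the classical analogues \cite{BPR1997, BDGMP04}, will be the bounded-arithmetic formalization of stage (ii): one must carefully verify that the disjointness of ``encrypts $0$'' and ``encrypts $1$'' admits short $\TC^0$-Frege proofs, which amounts to formalizing the correctness of LWE decryption inside $\VTC$ uniformly across keys and ciphertexts. The quantum ingredient is confined to stage (i) and enters only by replacing ``classical polynomial-size circuit'' with ``polynomial-size quantum circuit'' throughout the interpolation argument; the rest of the reduction is a purely proof-complexity exercise, albeit one that is arguably more delicate than its classical counterpart because LWE's correctness is approximate rather than exact.
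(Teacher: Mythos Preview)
Your high-level strategy is right and close in spirit to the paper's, but there is a real gap in stage (ii) that you acknowledge only in the last sentence without resolving. Regev's scheme has \emph{statistical}, not perfect, correctness: for some public keys and randomness there exist ciphertexts that are simultaneously valid encryptions of $0$ and of $1$, so $\alpha \land \beta$ is \emph{not} unsatisfiable as a formula with free variables for the public key. Hence there is no tautology here for $\TC^0$-Frege to prove, and feasible interpolation does not apply. You cannot simply say ``this is just a formalization exercise in $\VTC$'': the statement you want to formalize is false in general.

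The paper faces exactly the same obstacle (with the injective one-way function $f_A(s,\varepsilon)=As+\varepsilon$ in place of encryption) and resolves it via a \emph{certificate} mechanism, which is the technical core of the work. For most $A$ there exists a short certificate $(A_L^{-1},W)$---a left inverse plus a short basis of the dual $q$-ary lattice---witnessing that $f_A$ is injective; the implication $\textsc{Cert}(C_A)\to\textsc{Inj}(f_A)$ \emph{is} a tautology, and its proof (essentially the easy direction of Banaszczyk's transference theorem plus Cauchy--Schwarz) is carried out not in $\VTC$ but in a tailored quantifier-free theory $\LAQ$ of linear algebra over $\bQ$ that translates to $\TC^0$-Frege. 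Substituting a concrete certificate then yields, for each good $A_0$, a short proof of $\textsc{Inj}(f_{A_0})$, and interpolation on that proof inverts $f_{A_0}$ bit by bit. Your encryption-based route could in principle be made to work along the same lines---certify perfect correctness for a given key---but doing so amounts to the same lattice-geometric argument, and you would still need to identify the right certificate and formalize its verification inside the proof system. That missing piece is not a detail; it is the contribution.
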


We then exploit the simulation of $\TC^0$-Frege by $\AC^0$-Frege proofs of subexponential size to extend the result to $\AC^0$-Frege under a slightly stronger assumption, in the style of \citeauthor{BDGMP04} \cite{BDGMP04}.

\begin{corollary*}
    If there exists a polynomial-time quantum algorithm that weakly automates $\AC^0$-Frege, then LWE can be broken in subexponential time by a quantum machine.
\end{corollary*}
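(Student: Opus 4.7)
The plan is to derive the corollary by combining the main theorem with the classical subexponential-size simulation of $\TC^0$-Frege by $\AC^0$-Frege, following the strategy of \cite{BDGMP04}. Concretely, the aim is to show that a polynomial-time weak quantum automator for $\AC^0$-Frege yields a subexponential-time weak quantum automator for $\TC^0$-Frege, and then to check that the LWE reduction underlying the main theorem is sufficiently scale-sensitive that such a subexponential automator still breaks LWE in subexponential time.

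First I would recall the classical simulation used in \cite{BDGMP04}: every $\TC^0$-Frege proof of size $s$ can be translated in polynomial time into an $\AC^0$-Frege proof of size $2^{s^{\varepsilon}}$, for some constant $\varepsilon < 1$. By composition of simulations, if $Q$ is any propositional proof system simulating $\AC^0$-Frege, then $Q$ also simulates $\TC^0$-Frege with at most subexponential blow-up. Under the hypothesis of the corollary, there exists a quantum automator $B$ running in polynomial time for some such $Q$. Given any tautology $\varphi$ with a $\TC^0$-Frege proof of size $s$, feeding $\varphi$ to $B$ thus produces a $Q$-proof in time $\mathrm{poly}(2^{s^{\varepsilon}}) = 2^{s^{O(\varepsilon)}}$. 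In particular, $\TC^0$-Frege becomes weakly quantum automatable in subexponential time via the very same proof system $Q$.

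The last step is to re-run the proof of the main theorem with this weakened hypothesis in place of polynomial-time automation. The quantum LWE attacker constructed there invokes the automator polynomially many times on tautologies of size polynomial in the LWE security parameter $n$; replacing a polynomial-time automator by a subexponential one consequently upgrades the final LWE attack from polynomial time to subexponential time in $n$. The main obstacle I anticipate is the size bookkeeping in this final step: one must verify that the tautologies the reduction feeds into the automator have size polynomial in $n$, and that the interaction between the subexponential blow-up coming from the $\AC^0$-Frege simulation and the subexponential runtime of the automator compose into a genuinely subexponential attack on LWE, rather than a trivial $2^{\Omega(n)}$ one. Because the size of these tautologies is controlled by the encoding of the underlying lattice-based primitive and is polynomial in $n$, a careful choice of the constant $\varepsilon$ should make this composition go through.
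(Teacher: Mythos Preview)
Your proposal is correct and follows essentially the same route as the paper, which proves the corollary in a single sentence by invoking the subexponential simulation of $\TC^0$-Frege by $\AC^0$-Frege (citing \cite{BDGMP04} and \cite{krajicekBOOK}). Your sketch simply unpacks that sentence: the polynomial-size $\TC^0$-Frege proofs of the injectivity tautologies from the main theorem become subexponential-size $\AC^0$-Frege proofs, so a polynomial-time weak automator for $\AC^0$-Frege finds $Q$-proofs in subexponential time, and the Impagliazzo-style interpolation step then yields subexponential-size quantum circuits breaking LWE.

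Two minor remarks on your write-up. First, the constant $\varepsilon$ in the simulation is not a parameter you get to tune; it is determined by the (fixed, constant) depth of the $\TC^0$-Frege proofs produced by the formalization, so there is nothing to ``carefully choose'' --- the composition is automatically subexponential because the input tautologies have size $\poly(n)$. Second, keep in mind that the formula itself must be re-encoded when passing from $\TC^0$-Frege to $\AC^0$-Frege (threshold gates are replaced by subexponential $\AC^0$ circuits), so the object you hand to the automator $B$ is the translated $\AC^0$ formula, not the original $\varphi$; this is implicit in the simulation statement but worth making explicit.
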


In order to properly state and prove these results, we first formally define the notion of quantum automatability for quantum Turing machines. Note that a quantum algorithm might provide a wrong answer with small probability, so we need to be careful in choosing the right definitions. We show that our definition is equivalent to a similar one over uniform quantum circuits, and we verify that it is robust by reproving Impagliazzo's observation that weak automatability implies feasible interpolation, suitably translated to the quantum setting.

\subsection*{Techniques}
The overall structure of the proofs follows the strategy of the previous non-automatability results of \citeauthor{KP98} \cite{KP98} and \citeauthor{BPR1997} \cite{BPR1997}, but the technical details are quite different due to certain complications arising from lattice-based cryptography. We outline below the main hurdles and the techniques used to overcome them.

\paragraph{Quantum feasible interpolation.}
Our result follows from conditionally ruling out feasible interpolation by quantum circuits. As observed by Impagliazzo, weak automatability implies feasible interpolation. We use this observation contrapositively. Suppose that a proof system can prove the injectivity of a candidate one-way function. In the presence of feasible interpolation, we are guaranteed that there exists small circuits capable of inverting the one-way function one bit at a time. If one believes in the security of the cryptographic object, one must conclude that the proof system does not admit feasible interpolation, and in turn that it is not weakly automatable.

For this strategy to work, the candidate one-way function should fulfill two important conditions. First, its definition must be simple enough that the proof system can easily reason about it. For example, RSA requires modular exponentiation to be defined, which is conjectured not to be computable in $\TC^0$. This forced \citeauthor{BPR1997} to use instead the Diffie-Hellman protocol. Second, the candidate one-way function must be injective. The rather technical reason for injectivity is that feasible interpolation allows one to carry out the inversion bit by bit, which does not guarantee retrieving a correct preimage if there are multiple ones.

A few injective one-way functions based on lattice geometry have been proposed throughout the literature, e.g., see \cite{peikert2008lossy, goyal2020perfect, micciancio2012trapdoors}. However, we consider instead a simple scheme for worst-case lattice-based functions that closely resembles the one described by \citeauthor{Micciancio2011} \cite{Micciancio2011}. Such a scheme has the advantage that its injectivity can be easily verified, and that its worst case one way-ness is guaranteed by the assumed hardness of Learning with Errors, which we will now discuss.

\paragraph{Learning with Errors and certificates of injectivity.} 
We base our construction directly on the Learning with Errors assumption. The assumption is simple to define: roughly speaking, a vector $x$ should be hard to recover after being multiplied by some public matrix $A$, and summed with some Gaussian noise, $Ax + \varepsilon$. While the most naive functions based on LWE are not necessarily injective, we can bound the magnitude of the error vectors to construct a family of functions where almost all of the functions are injective. For most matrices $A$, the corresponding function $f_A$ in this family is worst-case one-way assuming the hardness of LWE \cite{Micciancio2011}.

However, the functions being injective and worst-case one-way is not sufficient, because their injectivity needs to be provable inside $\TC^0$-Frege. Unlike with the Diffie-Hellman construction, where a single proof showed the injectivity of the protocol for all generators, here each injective $f_A$ may require a tailored proof of injectivity. Fortunately, most of these $f_A$ can have their injectivity certified by a left-inverse of $A$ together with a short basis for the dual lattice of the $q$-ary lattice spanned by $A$. These short bases not only certify injectivity, but can also be used as trapdoors to invert the function \cite{peikert2016decade}. Though we do not exploit this directly, one may think of the automating algorithm as extracting such trapdoors from proofs. Instead, we use these certificates to prove the injectivity of most $f_A$ inside $\TC^0$-Frege.

With these properties, we can show that feasible interpolation can be used to invert almost all $f_A$, which is sufficient to break LWE and its associated worst-case lattice problems.

\paragraph{Formal theories for linear algebra.}
The most technical component of the previous work on $\TC^0$-Frege and $\AC^0$-Frege was the formalization of many basic properties of arithmetic directly inside the propositional proof systems, which can be quite cumbersome. While we could borrow a large part of the existing formalization of \citeauthor{BPR1997} \cite{BPR1997}, putting it together to carry out arguments about lattice geometry would still be quite convoluted.

Instead, we follow the approach of \citeauthor{KP98}, who showed the injectivity of RSA in Extended Frege by reasoning in Buss's theory $\mathsf{S}^1_2$ of bounded arithmetic. Universal theorems of this first-order theory translate into propositional tautologies with succinct proofs in Extended Frege. For $\TC^0$-Frege and its sequent calculus formalism $\textsf{PTK}$, the corresponding first-order theory of bounded arithmetic is the two-sorted theory $\mathsf{VTC}^0$ introduced by \citeauthor{cookBOOK} \cite{cookBOOK}. This theory is quite expressive and can reason even about analytic functions, as shown by \citeauthor{jerabekVTC0} \cite{jerabekVTC0}. However, since we are mostly interested in statements of matrix algebra, we use the more convenient formal theory $\LA$ for linear algebra introduced by \citeauthor{cooksoltys} \cite{cooksoltys}.

The theory $\LA$ is quantifier-free and operates directly with matrices. It is strong enough to prove their ring properties, but weak enough to allow all theorems in $\LA$ to translate into propositional tautologies with short $\TC^0$-Frege proofs. In order to handle all the concepts required in our arguments, we work over a conservative extension of $\LA$ over the rationals which we show still propositionally translates into $\TC^0$-Frege.

\subsection*{Open problems}
To the best of our knowledge, this is the first interaction between quantum computation and propositional proof search, and we believe further exploration of connections between the two fields is worthwhile. We outline below three open lines of research, ranging from the interaction between quantum computation and proof complexity to a classical problem in the theory of automatability.

\paragraph{Positive results?}
While hardness of proof search in most natural proof systems is now conditionally ruled out under different assumptions, there exists a handful of systems for which no non-automatability results  are known. This is the case for the $\text{Res}(\oplus)$, $\text{Res}(\log)$, Sherali-Adams and Sum-of-Squares proof systems. Could quantum algorithms automate any of these systems efficiently?

Even for proof systems where worst-case hardness is known, could quantum algorithms provide a significant speed-up over brute-force search? Clearly, Grover's algorithm already achieves a quadratic speed-up, but could this be pushed further in some cases?

\paragraph{Quantum proof complexity.}
Hardness results in automatability involve three key elements: the proof system, the hardness assumption and the model of computation for the automating algorithm. In this work we shifted the latter two to the quantum setting, by choosing a post-quantum cryptographic assumption and a quantum model of computation, but the proof systems considered remain classical.

What would it mean to have an inherently quantum proof system? In the same way that Extended Frege can be seen as $\P/\poly$-Frege, could we define a proof system where lines are quantum circuits? This could open the door to a quantum analogue of the Cook-Reckhow program, where showing lower bounds on quantum proof systems would be related to the question of whether $\QMA = \co\QMA$. We note that an analogous approach exists in the field of parameterized complexity, starting with the work of \citeauthor{dantchev2011parameterized} \cite{dantchev2011parameterized}, who defined parameterized proof complexity as a program to gain evidence on the $\W$-hierarchy being different from $\FPT$. As an intermediate step, it would make sense to consider the case of randomized proof systems and the relationship between $\MA$ and $\coMA$, though this has proven to be challenging so far.

We remark that while \citeauthor{Pud09} \cite{Pud09} already defined the notion of quantum derivation rules for propositional proof systems and defined the quantum Frege proof system, his approach is orthogonal to ours, in that those systems are still designed to derive propositional tautologies. In fact, he showed that classical Frege systems simulate quantum Frege systems, though classical Frege proofs cannot be extracted from quantum proofs by a classical algorithm unless factoring is in $\FP$.

\paragraph{Towards generic hardness assumptions.}
Like the original works on weak automatability, our proof requires \emph{concrete} cryptographic assumptions. That is, we assume that some specific candidate one-way function or cryptographic protocol is secure. The reason is that in order to obtain the upper bounds on which to apply feasible interpolation we need concrete formulas to manipulate inside the different proof systems.

A major open problem in the theory of automatability is to disentangle these results from concrete families of candidate one-way functions. That is, can we prove that $\TC^0$-Frege is not (weakly) automatable under the assumption that, say, one-way functions exist? Even better, can one obtain $\NP$-hardness of automating strong proof systems without the need to prove lower bounds first, in a way different from the strategy of \citeauthor{AM20} \cite{AM20}? This seems to require conceptual breakthroughs. 

\subsection*{Structure of the paper}
The paper is structured as follows. \Cref{sec:prelim} recalls the necessary concepts in proof complexity and lattice-based cryptography needed in the rest of the paper. \Cref{sec:qauto} defines automatability for quantum Turing machines and uniform quantum circuits and proves the equivalence between both models to then reprove Impagliazzo's observation on the relation between automatability and feasible interpolation, now in the quantum setting. \Cref{sec:outline} states and proves the main theorem of the paper. The section first presents a detailed overview of the main argument, while the subsections contain all the necessary technical work.

\section{Preliminaries}
\label{sec:prelim}

We assume basic familiarity with computational complexity theory, propositional logic and quantum circuits. We review the main concepts needed from proof complexity and refer the reader to standard texts like \cite{krajicekBOOK} for further details. We also recall some relevant notions from linear algebra and lattice geometry useful in our arguments.

\subsection{Proof complexity}
\label{subsec:proof-complexity}
Following Cook and Reckhow \cite{cookReckhow}, a \emph{propositional proof system} $S$ for the language $\TAUT$ of propositional tautologies is a polynomial-time surjective function $S : \{0,1\}^* \to \TAUT$. We think of $S$ as a proof checker that takes some proof $\pi\in \{0 ,1\}^*$ and outputs $S(\pi) = \varphi$, the theorem that $\pi$ proves. Soundness follows from the fact that the range is exactly $\TAUT$, and completeness is guaranteed by the fact that $S$ is surjective. One may alternatively define proof systems for refuting propositional contradictions. We move from one setup to the other depending on context.

We denote by $\size{S}{\varphi}$ the \emph{size} of the smallest $S$-proof of $\varphi$ plus the size of $\varphi$. We say that a proof system $S$ is \emph{polynomially bounded} if there exists a constant $c$ such that for every $\varphi \in \TAUT$, $\size{S}{\varphi} \leq |\varphi|^c$. We say that a proof system $S$ \emph{polynomially simulates} a system $Q$ if there exists a constant $d$ such that for every $\varphi \in \TAUT$, $\size{S}{\varphi} \leq \size{Q}{\varphi}^{d}$. For a family $\{ \varphi_n\}_{n\in\mathbb{N}}$ of propositional tautologies, we write $S \vdash \varphi_n$ whenever $\size{S}{\varphi_n} \leq |\varphi_n|^{c}$ for some constant $c$. Finally, a proof system $S$ is said to be \emph{closed under restrictions} if there is a constant $d$ such that whenever $S$ proves a formula $\varphi$ in size $s$, for every partial restriction $\rho$ to the variables in $\varphi$, there exists a proof of the restricted formula $\varphi_{\restriction\rho}$ in size $s^{d}$.

The focus of this work is on a specific class of proof systems known as  \emph{Frege systems}. A Frege system is a finite set of axiom schemas and inference rules that are sound and implicationally complete for the language of propositional tautologies built from the Boolean connectives negation ($\neg$), conjunction ($\land$), and disjunction ($\lor$). A Frege proof is a sequence of formulas where each formula is obtained by either substitution of an axiom schema or by application of an inference rule on previously derived formulas. As long as the set of inference rules is finite, sound and implicationally complete, the specific choice of rules does not effect the size of the proofs up to polynomial factors, as all Frege systems polynomially simulate each other (see, e.g.\ \autocite[Theorem 4.4.13]{krajicekBOOK}).

We can make gradations between Frege systems by restricting the complexity of their proof lines. For a circuit class $\mathcal{C}$, the system $\mathcal{C}$-Frege is any Frege system where lines are restricted to be $\mathcal{C}$-circuits (see \cite{jerabek2005} for a formal definition). In this setup, a standard Frege system amounts to $\NC^1$-Frege. We are mostly interested in the weaker systems $\AC^0$-Frege and $\TC^0$-Frege, where the proof lines are, respectively, circuits of constant depth and unbounded fan-in, and threshold circuits of constant depth and unbounded fan-in. A \emph{threshold circuit} is a Boolean circuit where gates can be the usual $\neg, \vee, \wedge$ as well as the threshold ones $\mathsf{Th}_k(x_1,\dots, x_n)$, where $\mathsf{Th}_k$ is true if at least $k$ of its inputs are true. For a concrete depth $d$, we denote the corresponding systems by $\AC^0_d$-Frege and $\TC^0_d$-Frege.

It is often convenient to consider an alternative formalism of $\TC^0$-Frege in the style of Gentzen's sequent calculus. The \emph{Propositional Threshold Calculus} $\mathsf{PTK}$ \cite[Chapter X.4.1]{cookBOOK} is a version of the propositional sequent calculus where the cuts are restricted to threshold formulas of constant depth. We refer to \cite[Section 2]{BPR1997} for a complete rendering of the derivational rules of $\mathsf{PTK}$.
 
\subsection{Lattice geometry}
\label{subsec:lattice-geo}
We recall some basic definitions from lattice geometry. For a linearly independent set of $n$ vectors $\mathcal{B}=\{b_1,\dots, b_n\} \subseteq\mathbb{R}^m$, which we often treat simply as an $m\times n$ matrix, the \emph{lattice} over $\mathcal{B}$ is defined to be the set of all integer linear combinations of vectors in $\mathcal{B}$,
\begin{equation*}
\mathcal{L}(\mathcal{B}) \coloneqq \{x \in\mathbb{R}^m \mid \text{there is } a\in\mathbb{Z}^n \text{ such that } x= \mathcal{B}a\}\,.
\end{equation*}

When the vectors in $\mathcal{B}$ belong in $\mathbb{Z}_q^m$ for some modulus $q$, we can further define a \emph{modular lattice} over $\mathcal{B}$, denoted $\mathcal{L}_q(\mathcal{B})$, to be the set of all integer linear combinations of the basis modulo $q$,
\begin{equation*}
\mathcal{L}_q(\mathcal{B}) \coloneqq \{x \in \mathbb{Z}_q^m \mid \text{there is } a \in \mathbb{Z}_q^{n}\text{ such that }\mathcal{B} a \equiv x \text{ mod } q\}\,,
\end{equation*}
where the mod function is applied element-wise in the vector.

We define the length of a vector $x$ in $\LL_q(\cB)$ to be the Euclidean norm of the shortest vector in $\bZ^m$ that is congruent to $x$ modulo $q$. Note that these shortest vectors will always fall in the domain ${[-\lfloor q/2 \rfloor, \lfloor {q}/{2} \rfloor]^m}$.

A \emph{$q$-ary lattice} $\Delta_q(\mathcal{B})$ can be thought of as an extension of a modular lattice back to $\bZ^m$ and is the set of all vectors $x\in\mathbb{Z}^m$ congruent to members of the modular lattice,
\[ \Delta_q(\mathcal{B}) \coloneqq \{x \in \mathbb{Z}^m \mid \text{there is } a \in \mathbb{Z}^{n}\text{ such that } \mathcal{B}a \equiv x \text{ mod } q\} \,.  \]

Note that because for all $x \in\{0,q\}^m$,  $x \in \Delta_q(\mathcal{B})$, we have that all $q$-ary lattices have rank $m$.

From the definitions above it is clear that $\LL_q(\cB) \subseteq \Delta_q(\cB)$. Consequently a proof that no vector in $\Delta_q(\cB)$ has length less than $\ell$ also proves that no vector in $\LL(\cB)$ has length less than $\ell$.

Another important concept in lattice geometry is that of a \emph{dual lattice}. Given a lattice $\mathcal{L}(\mathcal{B})$, its dual lattice $\mathcal{L}^*(\mathcal{B})$ is defined to be the set of vectors within the subspace spanned by $\mathcal{B}$ whose inner product with any element in $\mathcal{L}$ is an integer. Formally,
 \begin{equation*}
     \mathcal{L}^*(\mathcal{B}) \coloneqq \{ y \in \mathbb{R}^m \mid \text{there is } z \in \mathbb{R}^n \text{ such that } y=\mathcal{B} z  \text{ and for all }x \in \mathcal{L}(\mathcal{B}), \langle x, y\rangle \in \mathbb{Z}\}\,,
 \end{equation*}
 where $\langle \cdot, \cdot \rangle$ denotes the inner product. The dual lattice is also a lattice, whose basis admits a closed form.
\begin{lemma}
\label{lem:dual-basis}
For a basis $\mathcal{B} \in \bR^{m\times n}$, $\mathcal{L}^*(\mathcal{B}) = \mathcal{L}(\mathcal{B}(\mathcal{B}^\intercal \mathcal{B})^{-1})$.
\end{lemma}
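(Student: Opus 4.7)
The identity is a pure linear-algebra calculation, so the strategy is to unpack the definition of $\mathcal{L}^*(\mathcal{B})$ in coordinates via the Gram matrix $G \coloneqq \mathcal{B}^\intercal \mathcal{B}$ and show that the integrality-of-inner-products condition translates exactly to ``$Gz \in \mathbb{Z}^n$''. Since $\mathcal{B}$ has linearly independent columns, $G$ is symmetric positive definite and in particular invertible, so this condition describes the image of $\mathbb{Z}^n$ under $G^{-1}$. Pushing this back through $\mathcal{B}$ yields the claimed basis $\mathcal{B}(\mathcal{B}^\intercal \mathcal{B})^{-1}$.

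\textbf{Forward inclusion.} Suppose $y \in \mathcal{L}^*(\mathcal{B})$. By definition there exists $z \in \mathbb{R}^n$ with $y = \mathcal{B}z$, and for every $a \in \mathbb{Z}^n$ the scalar $\langle \mathcal{B}a, \mathcal{B}z\rangle = a^\intercal (Gz)$ lies in $\mathbb{Z}$. Taking $a$ to range over the standard basis vectors of $\mathbb{Z}^n$ forces each coordinate of $Gz$ to be an integer, so $Gz = b$ for some $b \in \mathbb{Z}^n$. Hence $z = G^{-1}b$ and $y = \mathcal{B}G^{-1}b \in \mathcal{L}(\mathcal{B}G^{-1})$.

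\textbf{Reverse inclusion.} Conversely, any $y = \mathcal{B}G^{-1}b$ with $b \in \mathbb{Z}^n$ lies in the column span of $\mathcal{B}$ (take $z = G^{-1}b$), and for any $x = \mathcal{B}a$ with $a \in \mathbb{Z}^n$ we have $\langle x, y\rangle = a^\intercal G G^{-1} b = a^\intercal b \in \mathbb{Z}$. Thus $y \in \mathcal{L}^*(\mathcal{B})$. Finally, to justify calling $\mathcal{B}G^{-1}$ a basis, note that it is the product of a matrix with linearly independent columns and an invertible matrix, hence still has linearly independent columns.

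\textbf{Main obstacle.} There is no real obstacle: the only subtle point is checking that the condition ``$\langle \mathcal{B}a, y\rangle \in \mathbb{Z}$ for all $a \in \mathbb{Z}^n$'' is equivalent to the integrality of every coordinate of $Gz$, which is immediate from testing against standard basis vectors. The only thing one should flag is the implicit use of $G$ being invertible, which relies on $\mathcal{B}$ having full column rank (part of the standing assumption that $\mathcal{B}$ is a basis).
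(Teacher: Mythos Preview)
Your proof is correct and is precisely the standard Gram-matrix computation one finds in the references; the paper itself does not prove this lemma but simply cites it as folklore (referring to \cite{Micciancio2011}), so your argument in fact supplies what the paper omits. The one point worth making explicit is that $G=\mathcal{B}^\intercal\mathcal{B}$ is invertible because the columns of $\mathcal{B}$ are assumed linearly independent, which you correctly flag at the end.
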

This lemma is standard and can be found, for example, in \cite{Micciancio2011}. Note that, if $\mathcal{B}\in \mathbb{Z}^{m \times n}$, it is easy to show that $ \mathcal{B}(\mathcal{B}^\intercal \mathcal{B})^{-1} \in \mathbb{Q}^{m \times n}$, and, therefore, that any $x \in \mathcal{L}^*(\mathcal{B})$ belongs to $\mathbb{Q}^m$. 

While a modular lattice and the $q$-ary lattice that extends it are closely related, they do have distinct bases. We use the following fact that given a matrix $\mathcal{B} \in\bZ_q^{m\times n}$ such that $\rank(\mathcal{B}) = n$, there exists a closed form for a matrix $\mathcal{B}'$ such that $\LL(\mathcal{B}')= \Delta_q(\mathcal{B})$. We defer the proof to \Cref{app:q-ary}.
\begin{lemma}[Full-rank modular lattices have $q$-ary lattice bases]\label{lem:q-basis}
    Let $\cB \in \bZ_{q}^{m \times n}$ and define $C \in \{0,1\}^{m \times m}$ to be the permutation matrix that swaps the appropriate rows so that the first $n$ rows of $C\cB$ are linearly independent. Let $(C\cB)_1 \in \bZ^{n \times n}$ and $(C\cB)_2 \in \bZ^{m - n \times n}$ be matrices such that \mbox{$C\cB = [(C\cB)_1^\intercal \mid (C\cB)_2^\intercal]^\intercal$}. Then, for $\cB \in \bZ_q^{m \times n}$, if $\rank(\cB) = n$, $\Delta_q(\cB) = \LL(\cB')$, where
    \[ \cB' = 
    C
    \begin{bmatrix}
    I_n & 0\\
    (C\cB)_2(C\cB)_1^{-1} & qI_{m-n} 
    \end{bmatrix}
    C^{-1}
    \,, \]
    and where the inverses $M^{-1}$ are defined over the modular lattice $\bZ_q^m$.
\end{lemma}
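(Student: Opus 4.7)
The plan is to establish both inclusions $\LL(\cB') \subseteq \Delta_q(\cB)$ and $\Delta_q(\cB) \subseteq \LL(\cB')$ by first permuting via $C$ to reduce to the case where the top $n \times n$ block of $\cB$ is invertible modulo $q$, then characterizing the resulting $q$-ary lattice explicitly, and finally unpermuting. The conceptual content lies entirely in the characterization; the permutation matrix $C$ is only bookkeeping. The key observation is the transformation rule $\Delta_q(C\cB) = C \cdot \Delta_q(\cB)$, obtained by multiplying the defining congruence $\cB a \equiv x \pmod{q}$ by $C$ on both sides. This reduces the problem to identifying $\Delta_q(C\cB)$ with $\LL(M)$, where $M \coloneqq \left[\begin{smallmatrix} I_n & 0 \\ (C\cB)_2 (C\cB)_1^{-1} & q I_{m-n} \end{smallmatrix}\right]$ is the middle factor of $\cB'$.

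For this characterization, I would observe that a vector $(x_1, x_2) \in \bZ^n \oplus \bZ^{m-n}$ lies in $\Delta_q(C\cB)$ iff there exists $a \in \bZ^n$ with $(C\cB)_1 a \equiv x_1$ and $(C\cB)_2 a \equiv x_2$ modulo $q$. The rank hypothesis together with the choice of $C$ ensures $(C\cB)_1$ is invertible over $\bZ_q$, so the first congruence forces $a \equiv (C\cB)_1^{-1} x_1 \pmod{q}$; substituting into the second collapses the membership condition to the single congruence $x_2 \equiv (C\cB)_2 (C\cB)_1^{-1} x_1 \pmod{q}$. Equivalently, $(x_1, x_2)$ can be written as $\bigl(a,\; (C\cB)_2 (C\cB)_1^{-1} a + q b\bigr)$ for some $a \in \bZ^n$ and $b \in \bZ^{m-n}$, and by inspection this is exactly the set $M \bZ^m$, yielding $\LL(M) = \Delta_q(C\cB)$.

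To conclude, I would use that right-multiplication by the permutation $C^{-1}$ does not change a lattice (since $C^{-1} \bZ^m = \bZ^m$) to write $\LL(\cB') = C M C^{-1} \bZ^m = C M \bZ^m = C \cdot \LL(M) = C \cdot \Delta_q(C\cB)$, and then identify this with $\Delta_q(\cB)$ via the transformation rule, using that $C$ may be taken to be an involution: any desired collection of $n$ independent rows can be brought to the top by a product of disjoint transpositions, so $C = C^{-1}$. The subtlest point I expect is arguing that $(C\cB)_1$ is invertible modulo $q$ rather than merely over $\bR$ or $\bQ$, since the notion of rank over $\bZ_q$ is delicate for composite $q$; this is automatic when $q$ is prime, which is the standard regime in LWE-based constructions. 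Everything else is essentially routine linear algebra.
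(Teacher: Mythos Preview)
The paper states this lemma without proof, treating it as a standard structural fact about $q$-ary lattices (in the same spirit as the surrounding Lemmas~2.1 and~2.4, which are deferred to references or an appendix). There is therefore no ``paper's own proof'' to compare against; your argument is exactly the standard one and is correct. The reduction via the permutation $C$ to the case where the top $n\times n$ block is invertible over $\bZ_q$, the elimination of the witness $a$ using that inverse, and the identification of the resulting congruence condition $x_2 \equiv (C\cB)_2(C\cB)_1^{-1}x_1 \pmod q$ with the column span of the block matrix $M$ are all sound, as is the final unpermuting step using $C^{-1}\bZ^m=\bZ^m$ and the involution property of $C$.

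Your flagged subtlety is real and worth keeping: for composite $q$ the ring $\bZ_q$ is not a field, and ``$\rank(\cB)=n$'' does not automatically furnish an $n\times n$ minor that is a unit in $\bZ_q$ (e.g.\ the column $(2,3)^\intercal$ over $\bZ_6$ has no invertible $1\times 1$ minor). The paper sidesteps this by simply stipulating that ``the inverses $M^{-1}$ are defined over the modular lattice $\bZ_q^m$'', and in the intended LWE regime $q$ is prime, so the issue does not arise. Your proof is complete under that reading.
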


Note that we can combine this with \Cref{lem:dual-basis} to get a closed form for $\mathcal{B}'$ such that $\Delta_q^*(\mathcal{B}) = \LL(\mathcal{B}')$.

The \emph{$i$-th successive minimum} of a lattice $\LL$ is $\lambda_i(\mathcal{L}) \coloneqq \inf \{ r \in \mathbb{Z} \mid \dim (\operatorname{span} (\mathcal{L} \cap B(0, r))) \geq i \}$, where $B(0,r)$ is the ball of radius $r$ around the origin. Less formally,  $\lambda_i(\LL)$ is the length of the $i$-th smallest linearly independent vector in the lattice.

There exists an intimate relationship between a lattice and its dual, as captured by Banaszczyk's Transference Theorem.
\begin{theorem}[Transference Theorem \cite{banaszczyk1993new}]
\label{thm:transference}
 For any rank-$n$ lattice $\mathcal{L} \subseteq \bZ^m$, $1\leq\lambda_1(\mathcal{L})\cdot\lambda_n(\mathcal{L}^*)\leq n$.
\end{theorem}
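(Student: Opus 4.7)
The statement splits into two inequalities of very different difficulty, and I would attack them separately.

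For the lower bound $1 \leq \lambda_1(\mathcal{L}) \cdot \lambda_n(\mathcal{L}^*)$, the plan is completely elementary. Let $u \in \mathcal{L}$ be a nonzero vector with $\|u\| = \lambda_1(\mathcal{L})$, and let $v_1, \dots, v_n \in \mathcal{L}^*$ be linearly independent witnesses of $\lambda_n(\mathcal{L}^*)$. Since the $v_i$ span the same $n$-dimensional real subspace as $\mathcal{L}$, at least one inner product $\langle u, v_i\rangle$ is nonzero; by definition of the dual lattice it is also an integer, so $|\langle u, v_i\rangle| \geq 1$. Cauchy--Schwarz then yields $\|u\| \cdot \|v_i\| \geq 1$, and since $\|v_i\| \leq \lambda_n(\mathcal{L}^*)$ we are done.

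The upper bound $\lambda_1(\mathcal{L}) \cdot \lambda_n(\mathcal{L}^*) \leq n$ is the real content of the theorem, and this is where I would follow Banaszczyk's Gaussian-measure approach. Define $\rho(x) = e^{-\pi \|x\|^2}$ and, for any lattice $\Lambda$, $\rho(\Lambda) = \sum_{x \in \Lambda} \rho(x)$. The workhorse is the Poisson summation formula applied to $\rho$, which relates $\rho(\Lambda)$ to $\det(\Lambda^*)\,\rho(\Lambda^*)$, and more generally lets one compare the Gaussian masses of $\Lambda$ and $\Lambda^*$ at reciprocal scales. Building on this, the central lemma to prove is a \emph{mass concentration inequality}: for every rank-$n$ lattice $\Lambda$ and every $c \geq 1/\sqrt{2\pi}$, the Gaussian mass on $\Lambda$ outside a ball of radius $c\sqrt{n}$ satisfies
\[
\rho\bigl(\Lambda \setminus c\sqrt{n}\,B\bigr) \leq \bigl(c\sqrt{2\pi e}\cdot e^{-\pi c^2}\bigr)^n \rho(\Lambda),
\]
so that almost all of the Gaussian mass sits inside the ball of radius $\sqrt{n}$ around the origin.

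From the concentration lemma, the bound $\lambda_1(\mathcal{L})\cdot\lambda_n(\mathcal{L}^*) \leq n$ follows by a contradiction and a scaling argument: assume the product exceeds $n$ and rescale so that $\mathcal{L}^*$ has no $n$ linearly independent vectors of norm below some threshold $t$; applying concentration to $\mathcal{L}^*$ forces its Gaussian mass to be essentially concentrated at $0$, and then Poisson summation transfers this to a lower bound on $\rho_s(\mathcal{L})$ at an appropriate scale $s$ that is incompatible with $\lambda_1(\mathcal{L})$ being large. The main obstacle, and the step I would save for last, is the concentration lemma itself: proving it sharply enough to get the clean constant $n$ on the right-hand side requires carefully chosen Gaussian test functions and precise control of the tail sums via Poisson summation and convexity, rather than looser union-bound estimates. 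Everything else reduces to bookkeeping and standard dual-lattice identities, for instance the closed form recalled in \Cref{lem:dual-basis}.
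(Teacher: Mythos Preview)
The paper does not prove \Cref{thm:transference}; it is stated in the preliminaries with a citation to Banaszczyk and used as a black box, so there is no ``paper's own proof'' to compare against. That said, your elementary argument for the lower bound is exactly the one the paper later formalizes inside $\LAQ$ in \Cref{lem:lwb-LAQ}: pick a nonzero lattice vector, observe its inner product with some dual-basis vector is a nonzero integer, and apply Cauchy--Schwarz. Your sketch for the upper bound via Gaussian mass and Poisson summation is the standard Banaszczyk argument and is correct in outline; the paper never reproduces it, invoking the upper bound only once (in the proof of Proposition~\ref{prop:W-equivalence}.ii) as a quoted fact.
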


Modular lattices $\LL_q(\mathcal{B})$ are subsets of $\bZ_q^m$, not $\bZ^m$, and therefore the Transference Theorem does not directly apply. However we are able to leverage the fact that 
 $\lambda_1(\Delta_q(\mathcal{B})) = \min(q, \lambda_1(\LL_q(\mathcal{B})))$ to indirectly apply it through the $q$-ary lattice.

We recall useful properties of random lattices. 

\begin{lemma}\label{lem:random-lattices}
  For a randomly sampled matrix $A \in \bZ_q^{m \times n}$, we have that
  \begin{itemize}
      \item[(i)] $\Pr_A[\rank(A) = n] \geq 1 - {n}/{q^{m-n+1}}$;
      \item[(ii)] $\Pr_A[\lambda_1(\LL_q(A)) < r \mid \rank(A) = n] \leq {(2r+1)^m}/{q^{m-2n-1}}$.
  \end{itemize}
\end{lemma}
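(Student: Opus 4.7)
The plan is to prove each part by a direct probabilistic argument over the columns of $A$, assuming (as is standard for LWE) that $q$ is prime.

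\emph{Part (i).} I read the stated inequality as equivalent to $\Pr_A[\rank(A) < n] \leq n/q^{m-n+1}$, since as literally written it is either trivial or vacuous. I would expose the columns of $A$ one at a time, each drawn independently and uniformly from $\bZ_q^m$. Conditioning on the first $k$ columns being linearly independent, they span a subspace of size $q^k$, so the probability that the $(k+1)$-th column falls inside this span is exactly $q^k/q^m = q^{k-m}$. A union bound over $k = 0, 1, \ldots, n-1$ gives
\[
  \Pr_A[\rank(A) < n] \;\leq\; \sum_{k=0}^{n-1} q^{k-m} \;\leq\; n \cdot q^{n-1-m} \;=\; \frac{n}{q^{m-n+1}},
\]
which is the claimed bound.

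\emph{Part (ii).} I would first bound $\Pr_A[\lambda_1(\LL_q(A)) < r]$ unconditionally and then divide by the probability from (i) to get the conditional statement. A nonzero $x \in \LL_q(A)$ of length less than $r$ corresponds to a nonzero integer vector $v \in \bZ^m$ with $\|v\|_2 < r$ such that $v \equiv Aa \pmod q$ for some $a \neq 0$. Since $\|v\|_2 < r$ forces $|v_i| < r$ for every $i$, the number of such integer $v$ is at most $(2r+1)^m$. For each fixed nonzero $a \in \bZ_q^n$, some coordinate of $a$ is a unit in $\bZ_q$ (using primality of $q$), which makes $Aa \bmod q$ uniform in $\bZ_q^m$; hence $\Pr_A[Aa \equiv v \pmod q] = q^{-m}$. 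A union bound first over $a \in \bZ_q^n \setminus \{0\}$ and then over candidate $v$ yields
\[
  \Pr_A[\lambda_1(\LL_q(A)) < r] \;\leq\; (2r+1)^m \cdot \frac{q^n - 1}{q^m} \;\leq\; \frac{(2r+1)^m}{q^{m-n}}.
\]
Dividing by $\Pr_A[\rank(A) = n] \geq 1 - n/q^{m-n+1}$, which is bounded below by a constant (e.g.\ $\tfrac12$) in the parameter regime where the lemma is used, produces a bound of order $(2r+1)^m / q^{m-n}$, which is amply contained in the claimed $(2r+1)^m/q^{m-2n-1}$ (looser by a factor of $q^{n+1}$).

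The main obstacle is ensuring that $Aa \bmod q$ is uniform in $\bZ_q^m$ for every nonzero $a$. For prime $q$ this is immediate since any nonzero $a$ has a unit coordinate, but for composite $q$ some nonzero $a$ may have all its coordinates in a proper ideal of $\bZ_q$, in which case $Aa \bmod q$ is concentrated on a proper subgroup. Treating this cleanly requires stratifying $a$ by the ideal $\langle a_1, \ldots, a_n \rangle \subseteq \bZ_q$ and adjusting the union bound accordingly; the substantial slack in the stated exponent $m-2n-1$ is clearly engineered to absorb exactly such losses, so no qualitative change in the argument is needed.
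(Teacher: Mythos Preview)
Your Part (i) is exactly the paper's argument: expose columns one by one, note that the $(k{+}1)$-th column lands in the span of the first $k$ with probability $q^{k-m}$, and union-bound to get $n/q^{m-n+1}$. Your reading of the inequality is also correct; the paper uses the statement only in the form $\Pr[\rank(A)\neq n]\leq n/q^{m-n+1}$.

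Your Part (ii) is correct but takes a genuinely different route. The paper argues combinatorially over \emph{lattices}, not matrices: it counts the rank-$n$ modular lattices in $\bZ_q^m$ exactly as $\prod_{i=0}^{n-1}(q^m-q^i)/(q^n-q^i)$, upper-bounds the number of rank-$n$ lattices containing a short vector by pairing a rank-$(n{-}1)$ lattice with one of the $\leq(2r+1)^m$ short vectors, takes the ratio, and simplifies via a small analytic lemma $\log_q(q+1)\cdot(n-1)\leq n$ to reach $(2r+1)^m/q^{m-2n-1}$. This uses implicitly that, conditioned on $\rank(A)=n$, the lattice $\LL_q(A)$ is uniform over rank-$n$ lattices (since every such lattice has the same number of bases). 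Your argument instead works directly with the random matrix $A$: fix a nonzero $a$, observe that $Aa$ is uniform in $\bZ_q^m$ (for prime $q$), and union-bound over $a$ and over short target vectors. This is more elementary, avoids the lattice-counting machinery and the auxiliary analytic inequality, and actually yields the tighter exponent $q^{m-n}$ rather than $q^{m-2n-1}$. Both arguments silently rely on $q$ being prime (the paper's counting needs $\bZ_q$ to be a field just as your uniformity-of-$Aa$ step does), so your caveat about composite $q$ applies equally to the paper's proof.
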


These properties are folklore. For the sake of completeness, we provide proofs in \Cref{app:counting}.

\subsection{Learning with Errors (\textsc{LWE})}
\label{subsec:lwe}

Learning with Errors (LWE) is a central problem of learning theory, introduced by Regev \cite{regev2009lattices}. 

For the sake of completeness, we introduce first the definition of discrete Gaussians, although the precise notion is not relevant for our proof. The essential point is that with high probability the error will be at most a few times the standard deviation times $\sqrt{m}$ \cite{regev2009lattices, Micciancio2011}. We follow here the definition of Peikert \autocite[Section 2.3]{peikert2016decade}.

\begin{definition}
    The \emph{discrete Gaussian} with standard deviation (or width) $w$ is defined to be the probability distribution over $\bZ^m$ where the probability of vector $x$ is proportional to $e^{-\pi\abs{\abs{x}}^2/w^2}$.
\end{definition}

\begin{assumption}[The Learning with Errors (LWE) assumption \cite{regev2009lattices, peikert2016decade}]
\label{ass:lwe}
    Let $n \in \bN$, $m = n^{O(1)}$, $q \leq 2^{n^{O(1)}}$, let $s \sim \bZ_q^n$ be a secret vector, $A \sim \bZ_q^{m \times n}$, and $\varepsilon \in \bZ_q^m$ a sample from the discrete Gaussian with standard deviation $c = \alpha q$ with $\alpha = o(1)$ and $\alpha \in [0,1]$. The \emph{Learning with Errors assumption} states that there is no quantum inverter\footnote{In some instances there may also be some $s'$ and small enough
    $\varepsilon'$ such that  $As'+ \varepsilon' = As+\varepsilon$, in which case $s'$ would 
    also be a valid inversion of $(A,As+\varepsilon)$. However, as we discuss later, there are exponentially few matrices $A$ for which any such $s'$ will exist together with some 
    $\varepsilon'$ small enough. Thus, defining the problem in terms of unique inversion (as done by Regev \cite{regev2009lattices} and Piekert \cite{peikert2016decade}) is asymptotically equivalent to a more complex definition accounting for non-unique inversion.
    } $M$ running in time $n^{O(1)}$ such that $M(A, As+ \varepsilon)$ outputs $s$ 
    with noticeable probability
    over the choice of $s$, $A$, $\varepsilon$, and the internal randomness of $M$.
\end{assumption}

Note that it suffices to succeed in the above game to output $As$ with noticeable probability, as you can recover $s$ from $As$ via Gaussian elimination.

The security of this assumption relies on the existence of worst-case to average-case reductions to fundamental lattice problems conjectured to be hard. In particular, as shown by \citeauthor{regev2009lattices} \cite{regev2009lattices}, breaking LWE implies solving the $\gammaGSVP$ problem for an approximation factor $\gamma = n^2$. Here, $\gammaGSVP$ refers to the \emph{$\gamma$-Approximate Shortest Vector Problem}: given a lattice basis $\mathcal{B}\in\mathbb{Q}^{m\times n}$ and a distance threshold $r>0$, decide whether $\lambda_1(\mathcal{L(\cB}))\leq r$, or
$\lambda_1 (\mathcal{L(\cB)})>\gamma r$, when one of those cases is promised to hold.

The belief that $\gammaGSVP$ is intractable is backed by the fact that the problem is $\NP$-hard under randomized reductions when the approximation factor is constant \cite{ajtai1996generating,peikert2016decade, BP2022}. However, for the range of $\gamma$ in which the reduction to LWE works, $\NP$-hardness is not known. Obtaining $\NP$-hardness for polynomial approximation factors would imply the breakthrough consequence of basing cryptography on worst-case hardness assumptions. In turn, this would turn our non-automatability results into $\NP$-hardness results. As appealing as this might be, it is unlikely. For $\gamma \geq \sqrt{n}$, the problem $\gammaGSVP$ is known to be in $\NP \cap \coNP$ \cite{NPcoNP} and thus cannot be $\NP$-hard unless $\PH$ collapses.

\subsection{The formal theory $\LA$}
\label{subsec:LA-prelim}
The theory $\LA$ is a quantifier-free theory introduced by \citeauthor{cooksoltys} \cite{cooksoltys} whose main objects are matrices. This is not technically speaking a first-order theory of bounded arithmetic like those used by Krajíček and Pudlák \cite{KP98}, but like them it admits a propositional translation into Frege systems.

The system $\LA$ operates over three sorts: \emph{indices} (intended to be natural numbers), \emph{field elements} (over some abstract field $\bF$), and \emph{matrices} (with entries over $\bF$). Variables for these three sorts are usually denoted $i,j,k, \dots$ for indices, $a,b,c, \dots$ for field elements, and $A,B,C,\dots,$ for matrices. We sometimes use lower-case letters $v, w, \dots$ for vectors, which are seen as a special case of matrices. 

The language of $\LA$ consists of the following constant, predicate and function symbols, over the three different sorts:

\begin{itemize}
\item {Index sort}: \indent $0_{\text{index}}, 1_{\text{index}}, +_{\text{index}}, \cdot_{\text{index}}, -_{\text{index}}, \mathsf{div}, \mathsf{rem}, \mathsf{cond}_{\text{index}}, \leq_{\text{index}}, =_{\text{index}}$
    \item{Field sort}: $0_{\text{field}}, 1_{\text{field}}, +_{\text{field}}, \cdot_{\text{field}}, -_{\text{field}}, ^{-1}, \mathsf{r}, \mathsf{c}, \mathsf{e}, \Sigma, \mathsf{cond}_{\text{field}},  =_{\text{field}}$
    \item{Matrix sort}: $=_{\text{matrix}}$
\end{itemize}

The meaning of the symbols is the standard one, except for $-_{\text{index}}$ that denotes the cutoff subtraction ($i-j=0$ if $i< j$) and for $a^{-1}$, denoting the inverse of a field element $a$, with $0^{-1}=0$. For operations over matrices, $\mathsf{r}(A)$ and $\mathsf{c}(A)$ are, respectively, the number of rows and columns in $A$, $\mathsf{e}(A,i,j)$ is the field element $A_{i,j}$ (with $\mathsf{e}(A,i,j) = 0$ if either $i=0$, $j=0$, $i>\mathsf{r}(A)$ or $j>\mathsf{c}(A)$) and $\Sigma A$ is the sum of the elements in $A$. The function symbol $\mathsf{cond}(\alpha, t_1, t_2)$ is interpreted to mean that if $\alpha$ holds, then the returned value should be $t_1$, else $t_2$, where $\alpha$ is a formula all of whose atomic subformulas have the form $m\leq n$ or $m=n$, where $m$ and $n$ are of the index sort, and $t_1, t_2$ are terms either both of index sort or both of field sort.

The language of $\LA$ can be enriched with the following defined terms: index maximum ($\mathsf{max}$), matrix sum ($+$, when sizes of the matrices are compatible), scalar product ($\cdot$), matrix transpose ($A^\intercal$), zero ($0$) and identity matrices ($I$), matrix trace ($\tr$), dot product ($\langle \_, \_\rangle$), and matrix product ($\cdot$). See \autocite[Section 2.1]{cooksoltys} for details on the definitions of these terms. In general, whenever it is clear from context, we drop the subscripts indicating the sort and we use standard linear algebra notation for the sake of readability.

The theory then consists of several groups of axioms fixing the meaning of these symbols. These are rather lengthy to state, so we relegate them to \Cref{app:LA-axioms}, where we also include several theorems derived by \citeauthor{cooksoltys} inside $\LA$.

Observe that the theory is field-independent, but whenever we fix the field to be either finite or $\bQ$, $\LA$ has the robust property that every theorem translates into a family of propositional formulas with short $\TC^0$-Frege proofs. This is the main property of $\LA$ that we shall exploit.

\section{Quantum automatability and feasible interpolation}
\label{sec:qauto}

Following Bonet, Pitassi, and Raz \cite{BPR1997}, we say that a propositional proof system $S$ is \emph{automatable in time $t$} if there exists a deterministic Turing machine $A$ that on input a formula $\varphi$ outputs an $S$-proof of $\varphi$, if one exists, in time $t(\mathsf{size}_S(\varphi))$. We now consider the possibility of replacing $A$ by a probabilistic or quantum Turing machine. The main issue in the definition is now that the output of the machine may be erroneous, albeit with small probability. Note, however, that if a machine were to output an incorrect proof, we would be able to easily detect this, since we can verify the proofs in polynomial time. We may thus assume that when yielding an incorrect proof, the machine will restart and find another one. Hence, instead of asking for the error-probability of the machine to be bounded, we ask for the expected running time to be bounded. The following definition captures this idea.

\begin{definition}[Quantum and randomized automatability]
\label{def:q-aut}
Let $S$ be a propositional proof system and let $t : \mathbb{N} \to \mathbb{N}$ be a time-constructible function. We say that $S$ is \emph{quantum} (respectively, \emph{random}) \emph{automatable in time $t$} or simply \emph{quantumatable in time $t$} if there exists a quantum Turing machine (respectively, a randomized Turing machine) that on input a formula $\varphi$ outputs an $S$-proof of $\varphi$, if one exists, in expected time $t(\mathsf{size}_S(\varphi))$.
\end{definition}

In what follows, we assume $t$ to be a polynomial and talk simply about a system being \emph{automatable} or \emph{quantum automatable}, without reference to $t$. Since quantum circuits are often more convenient than quantum Turing machines, we also define automatability in the circuit setting. For this, we use the standard notion of $\P$-uniformity: a circuit family is uniform (or $\P$-uniform) if there exists a polynomial-time Turing machine which on input $1^\ell$ outputs a description of the circuit that solves the problem on inputs of size $\ell$.

\begin{definition}[Circuit automatability]
\label{def:circ-aut}
Let $S$ be a propositional proof system. We say that $S$ is \emph{circuit-automatable} if there exists a constant $c$ and a uniform multi-output circuit family $\{ C_{n, s}\}_{n, s\in \bN}$ of size $(n + s)^c$ such that $C_{n, s}$ takes as input a formula $\varphi$ of size $n$ and outputs an $S$-proof of size $s^c$ if a proof of size $s$ exists, and is allowed to output any string otherwise.
\end{definition}

The generalization to randomized and quantum circuits is now immediate.

\begin{definition}
\label{def:q-cir-aut}
Let $S$ be a propositional proof system. We say $S$ is \emph{quantum circuit-automatable} if there exists a constant $c$ and a uniform multi-output quantum circuit family $\{ C_{n, s}\}_{n, s\in \bN}$ of size $(n+s)^c$ such that $C_{n, s}$ takes as input a formula $\varphi$ of size $n$, and outputs an $S$-proof of size $s^c$ with probability at least $2/3$ if a proof of size $s$ exists, and is allowed to output any string otherwise. We say that $S$ is \emph{random circuit-automatable} if the circuit is classical but also takes as input a sequence $r$ of random bits and, for at least $2/3$s of the choices for $r$, $C_{n,s}(\varphi, r)$ outputs an $S$-proof of size $s^c$ if a proof of size $s$ exists, and is allowed to output any string otherwise.
\end{definition}

The machine-based and circuit-based definitions are equivalent.

\begin{proposition}
\label{prop:equivalences}
Let $S$ be a propositional proof system. The following equivalences hold:
    \begin{enumerate}
        \item[(i)] the system $S$ is automatable if and only if it is circuit-automatable;
        \item[(ii)] the system $S$ is random automatable if and only if it is random circuit-automatable;
        \item[(iii)] the system $S$ is quantum automatable if and only if it is quantum circuit-automatable.
    \end{enumerate}
\end{proposition}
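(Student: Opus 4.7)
The plan is to prove all three equivalences via a common template. For the forward direction (machine to circuit), I would combine a standard machine-to-circuit simulation with Markov's inequality to bound the stopping time of the automator. For the backward direction (circuit to machine), I would use a doubling-and-retry loop on the target proof size, exploiting the uniformity of the circuit family.

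First I would dispose of case (i), which sets the pattern. Given a deterministic automator $M$ running in time $t(\size{S}{\varphi})$, the circuit $C_{n,s}$ hardcodes $n$ and $s$, simulates $M$ on a formula of size $n$ for $t(n+s)$ steps, verifies that the output is a valid $S$-proof, and outputs it if so and anything otherwise. Since $\size{S}{\varphi} \le n + s$ whenever $\varphi$ admits an $S$-proof of size at most $s$, the simulation halts within the budget and the circuit has size polynomial in $n+s$, with uniformity inherited from the universal Turing machine. Conversely, given a uniform circuit family, the machine iterates $s$ through the powers of two, constructs $C_{n,s}$ using uniformity, runs it, verifies the output, and halts on the first iteration that yields a valid proof. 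The total running time is dominated by the final iteration and is polynomial in $\size{S}{\varphi}$.

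Case (ii) modifies both directions to handle expected running time and bounded error. For the forward direction, Markov's inequality applied to the expected running time of $M$ shows that $M$ halts within $3t(n+s)$ steps with probability at least $2/3$; simulating this truncated execution by a classical circuit that reads its coins as input yields $C_{n,s}$ of size polynomial in $n+s$, and the validity check converts occasional overruns into harmless non-outputs. For the backward direction I reuse the doubling loop, noting that at the correct proof size each attempt succeeds with probability at least $2/3$, so the expected number of passes through the loop before success is $O(1)$ and the total expected time remains polynomial in $\size{S}{\varphi}$. Case (iii) follows the same template using the standard simulation of a quantum Turing machine running in time $T$ by a uniform quantum circuit of size polynomial in $T$, together with Markov's inequality applied to the stopping time of the quantum machine; the backward direction is implemented by a quantum machine that classically controls the outer loop over $s$, runs a quantum simulation of $C_{n,s}$, measures the output register, and verifies classically whether the measured string is a valid proof.

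The main obstacle I expect is aligning the expected-time model of \Cref{def:q-aut} with the bounded-error model of \Cref{def:q-cir-aut} without losing polynomial factors, especially in the quantum case. Markov's inequality handles this cleanly in the forward direction provided the validation step rejects the overruns, but in the backward direction one must be careful that running $C_{n,s}$ at values of $s$ below the true proof size does not trap the machine in an infinite loop or repeated failure: this is resolved by truncating every circuit execution at its predetermined gate count and then unconditionally advancing to the next $s$. A secondary concern is uniformity, which must be preserved on both sides: the machine-to-circuit direction requires a description of the universal (quantum) Turing machine that is computable in the uniformity class fixed for $\{C_{n,s}\}$, while the circuit-to-machine direction requires that the machine can itself output the description of each $C_{n,s}$ within the running-time budget, which is precisely what the standard notion of polynomial-time uniformity provides.
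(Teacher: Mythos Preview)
Your template matches the paper's: Markov's inequality converts expected running time to bounded error in the forward direction, an iterative search over $s$ with proof verification handles the backward direction, and Yao's polynomial simulation of quantum Turing machines by quantum circuits covers case~(iii). There is, however, one genuine gap in your backward direction for (ii) and (iii).

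You double $s$ and unconditionally advance on every failed attempt. Then the expected \emph{number of passes} after reaching the first adequate scale $s_0$ is indeed $O(1)$, but the expected \emph{running time} need not be polynomial: the probability of still running at $s = 2^k s_0$ is at most $(1/3)^k$, while the cost of that pass is $(n+2^k s_0)^c \sim 2^{kc}(n+s_0)^c$, and $\sum_{k\geq 0} (2^c/3)^k$ diverges for every $c \geq 2$. The paper sidesteps this by incrementing $s$ linearly ($s=1,2,3,\dots$), so that each overshoot past $s_0$ costs only $(n+s_0+j)^c$ and $\sum_{j\geq 0} (1/3)^j (n+s_0+j)^c$ converges. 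If you prefer to keep doubling, repeat $C_{n,s}$ some $\Theta(c)$ times at each $s$ before advancing, driving the per-stage failure probability below $2^{-c}$ so that the series converges. Either fix is routine, but as written, ``expected number of passes is $O(1)$'' does not imply ``total expected time is polynomial'' when the per-pass cost grows geometrically.
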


We defer the rather simple proof to \Cref{app:equivalence}.

Even if a proof system is not automatable, one might still hope for an algorithm that finds some proof efficiently, even if it is in a different proof system. We say that a proof system $S$ is \emph{weakly automatable} if there exists another proof system $Q$ and an algorithm $A$ that given a formula $\varphi$, outputs a $Q$-proof of $\varphi$ in time $\size{S}{\varphi}^{O(1)}$. The concept was introduced by \citeauthor{AB04} \cite{AB04}, who further showed that this is equivalent to $S$ being simulated by a system $Q$ that is itself automatable \autocite[Thm.\ 1]{AB04}. Despite the fact that weak automatability has been conditionally ruled out for Resolution under hardness assumptions for certain two-player games \cite{atseriasManeva, huangPitassi, BPT14}, establishing whether weak proof systems---such as Resolution---are weakly automatable under more standard hardness conjectures remains one of the main open problems in the area. It is straightforward to extend the notion of weak automatability to the quantum setting, in the style of \Cref{def:q-aut} and \Cref{def:q-cir-aut}.

Weak automatability is closely related to feasible interpolation. We recall this connection in its classical form and then move to the quantum setting.

\begin{definition}[Feasible interpolation \cite{Kra97, pudlak03}]
\label{def:fi}
We say that a proof system $S$ has the \emph{feasible interpolation property} if there exists a polynomial-time computable function $I$ such that for every tautological split formula $\varphi(x, y, z) = \alpha(x, z) \lor \beta(z, y)$, whenever a proof $\pi$ in $S$ derives $\varphi$ in size $s$, $I(\pi)$ produces an \emph{interpolant circuit} $C_{\varphi}$ of size $s^{O(1)}$ that takes as input an assignment $\rho$ to the $z$-variables and such that
\begin{equation*}
    C_\varphi(\rho) = \begin{cases}
        0 &\text{ only if } \alpha(x, \rho) \text{ is a tautology} \\
        1 &\text{ only if } \beta(\rho, y) \text{ is a tautology} \\
    \end{cases}
\end{equation*}
indicating which side of the conjunction is tautological.
\end{definition}

Bonet, Pitassi, and Raz attribute the following crucial observation relating (weak) automatability and feasible interpolation to Impagliazzo. We refer to it as \emph{Impagliazzo's observation}.

\begin{proposition}[Impagliazzo's observation {\autocite[Thm. 1.1]{BPR1997}}]
    If a proof system is weakly automatable and closed under restrictions, then it admits feasible interpolation.
\end{proposition}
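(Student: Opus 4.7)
The plan is to reduce feasible interpolation directly to weak automatability. Given an $S$-proof $\pi$ of a tautological split formula $\varphi(x,y,z) = \alpha(x,z) \lor \beta(z,y)$ of size $s$, I would let $I(\pi)$ be the circuit $C_\varphi$ that, on input an assignment $\rho$ to the $z$-variables, performs the following steps: (i) construct the restricted formula $\alpha(x,\rho)$; (ii) simulate the weakly automating algorithm $A$ of $S$ on input $\alpha(x,\rho)$ for a time budget of $s^{O(1)}$ steps, where $A$ produces proofs in some proof system $Q$ implicitly simulating $S$; (iii) check whether the resulting string is a valid $Q$-proof of $\alpha(x,\rho)$ via the polynomial-time proof checker of $Q$; (iv) output $0$ if the check succeeds and $1$ otherwise. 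Since $A$ runs in polynomial time and $Q$-proofs are verifiable in polynomial time, $C_\varphi$ has size $s^{O(1)}$, and $I$ itself is polynomial-time computable.

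For correctness, the case $C_\varphi(\rho) = 0$ is immediate by the soundness of $Q$: a valid $Q$-proof of $\alpha(x,\rho)$ certifies that $\alpha(x,\rho)$ is tautological. For the case $C_\varphi(\rho) = 1$, I would argue contrapositively. Suppose $\beta(\rho,y)$ is not a tautology; then some assignment $b$ satisfies $\beta(\rho,b) = 0$, and since $\varphi$ is a tautology, $\alpha(x,\rho)$ must be one. By closure under restrictions applied to $\pi$ with $z$ set to $\rho$ and $y$ set to $b$, I get an $S$-proof of $\alpha(x,\rho) \lor \beta(\rho,b)$ of size $s^{O(1)}$. Since $\beta(\rho,b)$ is a variable-free formula evaluating to $\bot$, standard Frege-style simplifications turn this into an $S$-proof of $\alpha(x,\rho)$ of size $s^{O(1)}$, so $\size{S}{\alpha(x,\rho)} \leq s^{O(1)}$. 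Weak automatability then guarantees that $A$ finds a $Q$-proof of $\alpha(x,\rho)$ within the allotted time budget, hence $C_\varphi(\rho) = 0$, contradicting the assumption.

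I expect the main subtlety to be the final simplification step in the contrapositive argument: turning an $S$-proof of $\alpha \lor \bot$ into an $S$-proof of $\alpha$ with only polynomial overhead. For every standard Frege-style system this amounts to a routine $O(1)$-step derivation, but a careful formulation must spell out that this carries through for every system to which the proposition applies. A secondary remark, looking ahead to the quantum analogue of this statement, is that the same construction lifts by replacing $A$ with a quantum weakly automating Turing machine: one truncates its expected-polynomial-time execution at $O(s^{O(1)})$ steps via Markov's inequality to retain constant success probability, and then appeals to \Cref{prop:equivalences} to package the procedure as a uniform quantum interpolant circuit of polynomial size.
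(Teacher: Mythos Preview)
Your proposal is correct and follows essentially the same argument as the paper (which spells it out in the proof of \Cref{prop:rand-impa}): run the weak automator on $\alpha_{\restriction\rho}$ with a polynomial time budget determined by the size of the given proof, output $0$ if a valid proof is found and $1$ otherwise, and justify the budget via closure under restrictions applied to $\varphi$ with $\rho$ and a falsifying assignment to $\beta$. You are in fact slightly more explicit than the paper about the step passing from a proof of $\alpha_{\restriction\rho}\lor\beta_{\restriction\rho,b}$ to a proof of $\alpha_{\restriction\rho}$, which the paper handles by simply asserting that the restricted proof ``must clearly be deriving $\alpha_{\restriction\rho}$.''
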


Impagliazzo's observation is useful contrapositively: to rule out (weak) automatability it suffices to rule out feasible interpolation, as done in the previous works \cite{KP98, BPR1997}. We outline this strategy further in \Cref{sec:outline}, where we instantiate it together with our cryptographic assumption.

To use feasible interpolation in our setting, we suitably adapt the definition to the quantum world.

\begin{definition}[Quantum feasible interpolation]
\label{def:qfi}
We say that a proof system $S$ has the \emph{quantum feasible interpolation property} if there exists a polynomial-time computable function $I$ such that, for every tautological split formula $\varphi(x, y, z) = \alpha(x, z) \lor \beta(z, y)$, whenever a proof $\pi$ derives $\varphi$ in $S$ in size $s$, $I(\pi)$ prints the description of a quantum interpolant circuit $C_\varphi$ of size $s^{O(1)}$ as in \Cref{def:fi}. If the circuit is instead randomized, we call this property \emph{random feasible interpolation}.
\end{definition}

Interestingly, feasible interpolation is not affected by moving from classical automatability to randomized automatability. This is essentially folklore, but we reprove it for the sake of completeness.

\begin{proposition}
\label{prop:rand-impa}
If a proof system $S$ is weakly random automatable and closed under restrictions, then it has feasible interpolation by deterministic Boolean circuits.
\end{proposition}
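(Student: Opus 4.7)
The plan is to mimic Impagliazzo's classical argument and append an Adleman-style derandomization. By (the randomized analogue of) \cite{AB04}, weak random automatability of $S$ gives a proof system $Q$ simulating $S$ together with a randomized algorithm $A$ that on any tautology $\varphi$ outputs a valid $Q$-proof in expected time $\size{S}{\varphi}^{O(1)}$. Given an $S$-proof $\pi$ of size $s$ of the split tautology $\varphi = \alpha(x,z) \vee \beta(z,y)$, define the randomized interpolant $\tilde C(\rho, r)$ that runs $A$ on the restricted formula $\alpha(x,\rho)$ with randomness $r$ for at most $T = s^{c}$ steps and outputs $0$ if the result is a verifiable $Q$-proof of $\alpha(x,\rho)$, else $1$.

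For correctness, an output of $0$ is always sound by soundness of $Q$. An output of $1$ is vacuously correct when $\alpha(x,\rho)$ is non-tautological (since then $\beta(\rho,y)$ must be tautological), and is admissible when both sides are tautological. The only case requiring a probabilistic analysis is when $\alpha(x,\rho)$ is a tautology but $\beta(\rho,y)$ is not: then a falsifier $y^{*}$ of $\beta(\rho,y)$ exists, so closure under restrictions of $S$ yields an $S$-proof of $\alpha(x,\rho) \vee \beta(\rho,y^{*})$ of size $s^{O(1)}$; the closed formula $\beta(\rho,y^{*})$ simplifies to $0$ in any Frege-like system with only polynomial overhead, giving a short $S$-proof of $\alpha(x,\rho)$ alone, whence by Markov's inequality $A$ succeeds within $T$ steps with probability at least $2/3$ over $r$. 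To derandomize, replace $\tilde C$ by $\hat C$ that runs $\tilde C$ independently $k = \Theta(|z|)$ times with fresh randomness $r_{1}, \dots, r_{k}$ and outputs $0$ if any run outputs $0$, else $1$. Because a $0$-output is always sound, OR-combining introduces no new errors on $\rho$ where $\alpha(x,\rho)$ is non-tautological; on the single error-prone case the failure probability drops to $(1/3)^{k}$, so choosing $k$ a large enough constant multiple of $|z|$ and union bounding over the at most $2^{|z|}$ error-prone assignments yields fixed bits $(r_{1}^{*}, \dots, r_{k}^{*})$ on which $\hat C$ is correct everywhere. Hardcoding them produces a deterministic Boolean interpolant of size $s^{O(1)}$, computable from $\pi$ in polynomial time.

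The main obstacle is the correctness analysis in the delicate case: one must produce a short $S$-proof of $\alpha(x,\rho)$ \emph{in isolation} from $\pi$, not merely of the disjunction, which is what forces both the restriction-by-$y^{*}$ step and a constant-folding observation inside $S$. The rest is routine: the random automator is applied off-the-shelf, and the Adleman step works cleanly thanks to the one-sided soundness of the $0$-output, which sidesteps the usual need for majority voting and the associated symmetric error structure.
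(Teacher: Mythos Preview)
Your argument is correct and matches the paper's approach essentially line for line: run the weak automator on $\alpha(x,\rho)$ with a truncated time budget, output $0$ only upon finding a verifiable proof, handle the one error-prone case by restricting with a falsifier $y^*$ of $\beta(\rho,y)$ and invoking closure under restrictions to get a short proof of $\alpha(x,\rho)$, then derandomize \`a la Adleman. The only cosmetic differences are that the paper first passes to the randomized-circuit model via \Cref{prop:equivalences} rather than applying Markov directly to the expected-time machine, and uses generic error reduction where you (slightly more cleanly) exploit the one-sidedness of the $0$-output with an OR of independent runs.
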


\begin{proof}
    The proof is essentially the same as the original proof in \cite{BPR1997}, except for having to take randomness into account. Suppose $R$ is a probabilistic automating algorithm for $S$. By \Cref{prop:equivalences}.(ii), we can instead think of a family of randomized circuits $\{C_{n, s}\}_{n, s\in \bN}$ that, for some fixed constant $c$, outputs proofs of size $s^c$ when a proof of size $s$ exists. Furthermore, let $d$ be the constant in the exponent that bounds the blow-up in size happening in the closure under restrictions. Given a split formula $\varphi = \alpha \lor \beta$, we want to obtain an interpolant circuit $C_\varphi$.

    Use the automating algorithm to find some proof of $\varphi$. Let $s_0$ be the size of such a proof. We first show that it is possible to extract a polynomial-size randomized circuit that computes the interpolant with one-sided error.
    Consider the circuit that takes as input the restriction $\rho$ together with some random bits and proceeds to compute $C_{|\alpha|, s_0^d}(\alpha_{\restriction \rho}, r)$. If this circuit finds a proof of $\alpha_{\restriction \rho}$ and it is checked to be correct, we output $0$; else, we output $1$. We claim that for at least $2/3$ choices of $r$, this circuit is a correct interpolant (and, in fact, whenever it outputs $0$, it is always correct). First, note that if we output $0$ it is because a proof of $\alpha_{\restriction \rho}$ was found, in which case it is correct to say that $\alpha_{\restriction \rho}$ is a tautology. Otherwise, we will always output $1$. The only problematic case is when the circuit outputs $1$ while $\neg \beta_{\restriction \rho}$ is satisfiable. If such was the case, then let $\sigma$ be a satisfying assignment to the $z$-variables such that $\neg \beta_{\restriction \rho, \sigma}$ is satisfied. Since $S$ can prove $\varphi$ in size $s_0$ and $S$ is closed under restrictions, we know that $S$ can prove $\varphi_{\restriction \rho, \sigma}$ in size $s_0^{d}$, and this proof must clearly be deriving $\alpha_{\restriction \rho, \sigma} = \alpha_{\restriction \rho}$. Since $s_0^{cd} \geq s_0^d$, for a \say{good} choice of $r$ the circuit $C_{|\alpha|, s_0^d}(\alpha_{\restriction \rho}, r)$ would have found such a proof, so the only reason why we could have output $1$ is that we chose a bad $r$. But this of course only happens with probability at most $1/3$. So this randomized circuit interpolates $\varphi$, makes only one-sided error, and has size polynomial in the size of the shortest proof.

    We now replicate the strategy used in Adleman's theorem ($\BPP \subseteq \P/\poly$) to show that in fact randomness is not needed in the circuit. One can follow here the standard argument as presented, for example, by Arora and Barak \cite[Thm. 7.15]{AB09}: given the interpolant circuit $F_\varphi$, perform error reduction and then argue that there must be a string of random bits that is \say{good} for all inputs of the same size. The circuit no longer makes mistakes and computes $f_\varphi$ as desired.
\end{proof}

\begin{remark}[Constructive feasible interpolation]
\label{rem:constructive}
    Our definition of feasible interpolation deviates from the one given in standard texts like that of \citeauthor{krajicekBOOK} \cite{krajicekBOOK}, and follows instead the one given by \citeauthor{pudlak03} \cite{pudlak03}, who imposes the condition that the interpolant circuit must be constructed from the given proof in polynomial time. Note that even if we adopted the non-constructive definition, the kind of feasible interpolation obtained by the construction above achieves this property anyway.

    The constructivity requirement is useful to obtain a sort of converse of Impagliazzo's observation: if a propositional proof system has uniform polynomial-size proofs of its reflection principle, then it is weakly automatable (see \autocite[Prop. 3.6]{pudlak03}).
\end{remark}

Since randomness does not buy us anything when it comes to proof search, all hardness results immediately transfer to the randomized setting. In particular, for every proof system $S$ simulating $\TC^0$-Frege, $S$ is not weakly random automatable unless Blum integers can be factored by polynomial-size randomized circuits. For weak proof systems where automatability is known to be $\NP$-hard, the systems cannot be automatable unless $\NP \subseteq \BPP$.

When moving to the quantum setting, unfortunately, we do not know of any way to get a deterministic circuit for the interpolant. Instead, we have the following natural version of Impagliazzo's observation.

\begin{proposition}
\label{prop:quant-impa}
    If a proof system is quantum automatable and closed under restrictions, then it admits feasible interpolation by quantum circuits. 
\end{proposition}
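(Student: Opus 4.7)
The plan is to replay the classical Impagliazzo argument almost verbatim, with the step that uses the deterministic automating algorithm replaced by a quantum automating circuit obtained from \Cref{prop:equivalences}(iii). I would avoid any attempt to derandomize (unlike in the proof of \Cref{prop:rand-impa}), since an Adleman-style argument would require fixing measurement outcomes and so is not available for quantum circuits. Instead, I would rely on the fact that \Cref{def:qfi} itself is stated at the level of quantum circuits and therefore tolerates bounded error.

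More concretely, let $c$ be the polynomial overhead promised by quantum circuit-automatability and let $d$ be the blow-up constant witnessing that $S$ is closed under restrictions. Given a proof $\pi$ of a tautological split formula $\varphi(x, y, z) = \alpha(x, z) \lor \beta(z, y)$ of size $s$, the extractor $I$ would print (in classical polynomial time from $\pi$) the description of the following quantum interpolant $C_\varphi$ on input $\rho \in \{0,1\}^{|z|}$: first, classically compute $\alpha_{\restriction \rho}$; second, run the uniform quantum circuit $C_{|\alpha_{\restriction \rho}|,\, s^d}$ from \Cref{def:q-cir-aut} on $\alpha_{\restriction \rho}$ to produce a candidate proof $\pi'$; third, apply the classical polynomial-time proof checker for $S$ to $(\alpha_{\restriction \rho}, \pi')$; fourth, output $0$ if verification accepts and $1$ otherwise. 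Since each component has size polynomial in $s$, so does $C_\varphi$.

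For correctness, I would argue as follows. Since $\varphi$ is a tautology, for every $\rho$ at least one of $\alpha_{\restriction \rho}$, $\beta_{\restriction \rho}$ is a tautology. If $\alpha_{\restriction \rho}$ is a tautology, closure under restrictions guarantees an $S$-proof of size at most $s^d$, so $C_{|\alpha_{\restriction \rho}|,\, s^d}$ returns a valid proof with probability at least $2/3$, and $C_\varphi$ outputs $0$ with at least this probability. If $\alpha_{\restriction \rho}$ is not a tautology, soundness of $S$ ensures that no string is ever accepted by the verifier, so $C_\varphi$ deterministically outputs $1$, and in that case $\beta_{\restriction \rho}$ must be the tautological side. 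Thus $C_\varphi$ satisfies the conditions of \Cref{def:fi} in the bounded-error quantum sense required by \Cref{def:qfi}; in fact, it even retains the one-sided error of the randomized construction in \Cref{prop:rand-impa}.

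The only conceptual obstacle is the one already flagged above: there is no quantum analogue of Adleman's theorem available here, so the resulting interpolant is genuinely quantum rather than deterministic. All remaining steps are routine bookkeeping: checking that the composition of a uniform quantum circuit with the classical proof checker is itself a uniform quantum circuit of polynomial size, and that $I$ can print its description from $\pi$ in classical polynomial time.
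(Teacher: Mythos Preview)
Your approach matches the paper's exactly: the paper's proof simply says to follow the argument of \Cref{prop:rand-impa} but omit the final Adleman-style derandomization, leaving a quantum interpolant.

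There is, however, one slip in your correctness argument. You claim that ``if $\alpha_{\restriction \rho}$ is a tautology, closure under restrictions guarantees an $S$-proof of size at most $s^d$,'' but closure under restrictions applied to $\pi$ and $\rho$ only yields a size-$s^d$ proof of $\varphi_{\restriction \rho} = \alpha_{\restriction \rho} \lor \beta_{\restriction \rho}$, not of $\alpha_{\restriction \rho}$ by itself. The argument in \Cref{prop:rand-impa} (which the paper is invoking) fixes this by reasoning contrapositively about outputting $1$: if $\beta_{\restriction \rho}$ is \emph{not} a tautology, take an assignment $\sigma$ to the $y$-variables falsifying it; closure under restrictions by both $\rho$ and $\sigma$ then gives a size-$s^d$ proof of $\varphi_{\restriction \rho,\sigma}$, which is effectively a proof of $\alpha_{\restriction \rho}$ since the $\beta$-disjunct has been made false. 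With this correction your argument is complete.
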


\begin{proof}
    The proof follows the argument in \Cref{prop:rand-impa}, except we can no longer apply the final step to get rid of quantumness. The interpolant now is a quantum circuit, since it is simulating the quantum circuit $C_{|\alpha|, {s_0^{d}}}(\alpha_{\restriction \rho})$.
\end{proof}

\section{$\TC^0$-Frege is hard to quantum automate}
\label{sec:outline}

The quantum version of Impagliazzo's observation (\Cref{prop:quant-impa}) is the main tool needed for our hardness results, which we are now ready to state formally.

\begin{theorem}[Main theorem]
\label{thm:main}
    There is a constant $d_0 \in \bN$ such that for every $d \geq d_0$, if there exists a polynomial-time quantum algorithm that weakly automates $\TC^0_d$-Frege, then the LWE assumption (\Cref{ass:lwe}) is broken by a $\P$-uniform family of polynomial-size quantum circuits. Furthermore, if the weak automating algorithm is classical, the LWE assumption is broken by a uniform family of polynomial-size Boolean circuits.
\end{theorem}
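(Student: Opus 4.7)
The plan is to prove the contrapositive. Given a polynomial-time quantum algorithm that weakly automates $\TC^0$-Frege (respectively, a classical one), I would build a uniform polynomial-size quantum (resp.\ Boolean) circuit family that inverts LWE on a noticeable fraction of instances. Since $\TC^0$-Frege is closed under restrictions, \Cref{prop:quant-impa} turns a quantum weak automating algorithm into a quantum feasible interpolation procedure, while the classical case is handled by Impagliazzo's observation or \Cref{prop:rand-impa}, producing a deterministic interpolant. It therefore suffices to exhibit, for most matrices $A$ drawn as in \Cref{ass:lwe}, a polynomial-size $\TC^0$-Frege proof of a split tautology whose feasible interpolant inverts $f_A(x,\varepsilon) = Ax + \varepsilon$ coordinate by coordinate.

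For each such $A$ and each $i\in[n]$, the split tautology I would use is
\[
  \varphi_{A,i}(x,\varepsilon^{(1)};\,y,\varepsilon^{(2)};\,b)\ \equiv\ \neg\alpha_i(x,\varepsilon^{(1)};b)\,\vee\,\neg\beta_i(y,\varepsilon^{(2)};b),
\]
where the interpolated $z$-variables encode the LWE target $b$, the formula $\alpha_i$ asserts that $Ax+\varepsilon^{(1)} = b$, that $\varepsilon^{(1)}$ satisfies the LWE noise bound, and that $x_i=0$, and $\beta_i$ is the same statement with $y_i=1$. On the short-error domain this disjunction is exactly the injectivity of $f_A$ at coordinate $i$, so a feasible interpolant evaluated at $b = As + \varepsilon$ returns (up to a fixed complementation) the $i$-th bit of the unique short secret $s$. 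In the quantum case the interpolant is a bounded-error quantum circuit, to be amplified by repetition.

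The heart of the argument is producing short $\TC^0$-Frege proofs of $\varphi_{A,i}$ for a $1-o(1)$ fraction of $A$. For this I would reason inside the rational extension $\LAQ$ of Cook and Soltys's linear-algebra theory, since its theorems translate into polynomial-size $\TC^0$-Frege proofs. Given $A$, the proof is parameterised by an \emph{injectivity certificate} consisting of a left inverse $L\in\bQ^{n\times m}$ of $A$ together with a short basis $D$ of the dual $q$-ary lattice $\Delta_q^*(A)$; such a pair exists for a $1-o(1)$ fraction of $A$ by \Cref{lem:random-lattices}, combined with the closed forms in \Cref{lem:dual-basis} and \Cref{lem:q-basis} and Banaszczyk's transference theorem (\Cref{thm:transference}). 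With $(L,D)$ supplied as hard-wired parameters, the $\LAQ$ derivation is short: assuming $Ax+\varepsilon^{(1)} = Ay+\varepsilon^{(2)}$ with short $\varepsilon^{(j)}$, the difference $\varepsilon^{(1)}-\varepsilon^{(2)} \in \Delta_q(A)$ has integer inner product with each row of $D$ of absolute value strictly less than one and hence zero, so $\varepsilon^{(1)} = \varepsilon^{(2)}$, and multiplying the remaining equality by $L$ on the left yields $x=y$.

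Finally, running the interpolants for all coordinates $i$ in parallel on a random LWE instance $(A, As + \varepsilon)$, and amplifying in the quantum case, reconstructs the secret $s$ on the fraction of $A$ admitting a polynomial-size certificate. Since this fraction is $1 - o(1)$ and uniqueness of the short preimage on these $A$ is automatic from the very injectivity we exploit, the overall inverter succeeds with noticeable probability over $(A, s, \varepsilon)$, contradicting \Cref{ass:lwe}. The main technical obstacle, I expect, is the third step: faithfully encoding the bit-level LWE constraints (Euclidean norm bounds, modular reductions, and equalities modulo $q$) as matrix operations over $\bQ$ inside $\LAQ$, and then verifying that both the $\LAQ$ proof and its $\TC^0$-Frege translation remain polynomial-size uniformly in $A$. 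The closed-form lemmas \Cref{lem:dual-basis} and \Cref{lem:q-basis} are designed precisely to make this encoding tractable, but the accounting of precision, noise tolerance, and dual-basis lengths inside $\LAQ$ will require careful bookkeeping.
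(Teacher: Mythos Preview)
Your proposal is correct and follows essentially the same route as the paper: weak automatability gives (quantum) feasible interpolation via \Cref{prop:quant-impa}/\Cref{prop:rand-impa}; injectivity certificates $(A_L^{-1},W)$ exist for a $1-o(1)$ fraction of $A$ by the transference theorem and \Cref{lem:random-lattices}; the $\LAQ$ proof that a certificate implies injectivity is precisely the Cauchy--Schwarz/dual-lattice argument you sketch (integer inner products of the short lattice vector with the short dual vectors are forced to be zero, whence the vector vanishes by linear independence of $W$); and the propositional translation yields the short $\TC^0$-Frege proofs to which interpolation is applied bit by bit. The only organizational difference is that the paper proves the single implication $\textsc{Cert}(C_A)\to\textsc{Inj}(f_A)$ once with $A$ and $C_A$ free and then restricts, whereas you hardwire the certificate per $A$; these are equivalent since the $\LAQ$ derivation is uniform and $\TC^0$-Frege is closed under restrictions.
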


We can then extend the result to $\AC^0$-Frege under a stronger assumption. This is done by applying the fact that $\TC^0$-Frege proofs can be translated into $\AC^0$-Frege proofs of subexponential size (see, for example, Theorems 2.5.6 and 18.7.3 in \cite{krajicekBOOK} or the original work on the non-automatability of $\AC^0$-Frege \cite{BDGMP04}).

\begin{corollary}
\label{cor:main-AC0}
    There is a constant $d_0 \in \bN$ such that for every $d \geq d_0$, if there exists exists a polynomial-time (quantum) algorithm that weakly automates $\AC^0_d$-Frege, then the LWE assumption is broken by a $\P$-uniform family of (quantum) circuits of size $2^{n^{o(1)}}$.
\end{corollary}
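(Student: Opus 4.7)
The plan is to chain Theorem 3.4 with the standard subexponential simulation of $\TC^0$-Frege by $\AC^0$-Frege recalled in the statement. Assume a polynomial-time (quantum) algorithm $\mathcal{M}$ that weakly automates $\AC^0$-Frege. Since $\AC^0$-Frege is closed under restrictions, the classical and quantum versions of Impagliazzo's observation (Propositions~3.3 and~3.7) apply and endow $\AC^0$-Frege with (quantum) feasible interpolation: from any $\AC^0$-Frege proof of a split tautology of size $s$, we can extract in polynomial time a (quantum) interpolant circuit of size $s^{O(1)}$.

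Next, I would invoke the construction underlying the proof of Theorem 3.4. For each security parameter $n$ and each ``good'' matrix $A \in \bZ_q^{m \times n}$ (together with its injectivity certificate), that proof exhibits a $\mathrm{poly}(n)$-size split tautology and a $\mathrm{poly}(n)$-size $\TC^0$-Frege proof whose interpolant inverts the LWE-based one-way function $f_A$ bit by bit. Applying the subexponential translation of $\TC^0$-Frege proofs into $\AC^0$-Frege proofs (Theorems 2.5.6 and 18.7.3 of \cite{krajicekBOOK}, following \cite{BDGMP04}) turns each such proof into an $\AC^0$-Frege proof of size $2^{n^{o(1)}}$ of the corresponding $\AC^0$ encoding of the same split tautology.

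Combining the two steps, feasible interpolation of $\AC^0$-Frege applied to these translated proofs yields a uniform family of (quantum) circuits of size $\bigl(2^{n^{o(1)}}\bigr)^{O(1)} = 2^{n^{o(1)}}$ that invert $f_A$. Averaging over good matrices exactly as in Theorem 3.4 then breaks LWE by a uniform family of (quantum) circuits of size $2^{n^{o(1)}}$, as claimed. The (quantum) qualifier propagates through the argument unchanged, depending only on whether $\mathcal{M}$ is classical or quantum.

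The main point requiring care is that the $\TC^0$-to-$\AC^0$ translation must preserve the split structure $\alpha(x,z) \lor \beta(z,y)$, so that feasible interpolation applied to the translated proof continues to witness which side of the disjunction is tautological and thus still inverts $f_A$. This is standard but worth verifying explicitly: the translation acts line by line, replacing each threshold formula by an equivalent subexponential-size $\AC^0$ formula over the same input variables, so the partition of variables on which interpolation operates is preserved. Consequently, the extracted interpolant computes exactly the same Boolean function as in the $\TC^0$-Frege argument of Theorem 3.4, only with a subexponential overhead in size.
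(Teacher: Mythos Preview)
Your proposal is correct and follows essentially the same approach the paper sketches: the paper's entire argument is the one-line remark that $\TC^0$-Frege proofs translate into $\AC^0$-Frege proofs of subexponential size, and you have simply unpacked how this combines with Impagliazzo's observation and the construction of Theorem~\ref{thm:main}. Your explicit check that the line-by-line threshold-to-$\AC^0$ translation preserves the variable partition of the split formula is a detail the paper does not spell out but which is indeed needed; note that your internal cross-references (Propositions~3.3 and~3.7, Theorem~3.4) do not match the paper's numbering, but the intended statements are clear.
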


We devote the rest of the paper to formally proving \Cref{thm:main}.

Suppose $h : \{0,1\}^n \to \{0,1\}^n$ is an injective and secure one-way function. Let $x$, $y$ and $z$ denote variables ranging over $\{0,1\}^n$ and assume that $\TC^0$-Frege is able to state and refute efficiently the following unsatisfiable formula,
\[ (h(x) = z \land x_1 = 0) \land (h(y) = z \land y_1 = 1)\,, \]
where $x_1, y_1$ are respectively the first bit of $x$ and $y$. The unsatisfiability follows precisely from the fact that $h$ is injective, and hence every output has a unique preimage.

If $\TC^0$-Frege admits feasible interpolation, we are guaranteed the existence of a small circuit $C(z)$ such that
\begin{equation*}
    C(z) = \begin{cases}
        0 &\text{ if } h(x) = z \land x_1 = 0 \text{ is unsatisfiable} \\
        1 &\text{ if } h(y) = z \land y_1 = 1 \text{ is unsatisfiable}
    \end{cases}
\end{equation*}
meaning that $C$ is able to invert one bit of $z$. Since every output has a unique preimage, we can iterate the process to get the entire input string. This contradicts the assumption that $h$ is one-way.

In order to instantiate the proof strategy to rule out quantum feasible interpolation, we now need a candidate one-way function that is injective and conjectured to be post-quantum secure and for which injectivity can be proven inside the proof system. Unfortunately, to the best of our knowledge, no such candidate function is currently known, or not with enough security guarantees\footnote{In a previous version of this work we formalized the injectivity of several group-based post-quantum cryptographic assumptions, such as MOBS \cite{MOBS}, as well as variants of supersingular isogeny-based Diffie-Hellman protocols, which unfortunately all happen to be now broken more or less efficiently.}. Alternatively, we may use other cryptographic objects that do achieve some form of injectivity, such as bit commitments, but the formalization of the latter does not seem simpler than the approach we follow instead. We now explain how we avoid this issue.

The most reliable post-quantum cryptographic assumptions have their security based on worst-case reductions to lattice problems conjectured to be hard. This is the case of the Learning with Errors framework \cite{regev2009lattices}, on which we base the security of the following class of candidate one-way functions. For these functions, as well as the basic properties of them that we employ, we follow the treatment of \citeauthor{Micciancio2011} \cite{Micciancio2011}. We include the details for the proof complexity readers, who may not be familiar with these constructions.

\begin{definition}[The candidate functions $f_A$]
\label{def:fA}
Let $n\in \bN$, $m = n^{O(1)}$, $q \leq 2^{n^{O(1)}}$, and $c = {\alpha q}/{\sqrt{n}}$, where $\alpha \in [0,1]$. For every matrix $A \in \bZ_q^{m \times n}$, we define the function $f_A:  \mathbb{Z}_q^n \times  \{\varepsilon \in \mathbb{Z}_q^m : |\varepsilon| \leq 10c\sqrt{mn}\} \rightarrow \mathbb{Z}_q^m$ as
\[ f_A(s,\varepsilon)  \coloneqq  (As + \varepsilon) \bmod q\,. \]

\end{definition}

At this point, we would like to show inside $\TC^0$-Frege that the conjunction
\begin{equation}
\label{eq:nice}
   (f_A(x) = z \land x_1 = 0) \land (f_A(y) = z \land y_1 = 1)
\end{equation}
is a contradiction, where $A$ is represented by free variables and $x_1$ and $y_1$ refer to the first bits of $x$ and $y$. Unfortunately, the formula is not necessarily a contradiction, since for some choices of $A$, the function $f_A$ is not injective. We can show, however, that with high probability over the choice of $A$, the function $f_A$ will satisfy two conditions that imply injectivity. Namely, $A$ will be full rank and the shortest vector in the $q$-ary lattice spanned by $A$ will be large enough.

The following proposition, which captures this idea, is standard. We reprove it here for the sake of completeness, since we shall formalize part of it inside the proof systems later.

\begin{proposition}
\label{prop:injectivity-everywhere}
    Let $n \in \mathbb{N}, m = n\log n$ and $q \geq n^{5}$. With high probability over the choice of $A \in \bZ^{m\times n}_q$, $\rank(A) = n$ and $\lambda_1(\LL_q(A)) > 20c\sqrt{nm}$. Furthermore, when these hold, the function $f_A$ is injective.
\end{proposition}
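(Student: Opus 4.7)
The plan is to prove each of the three claims in turn, using Lemma~\ref{lem:random-lattices} for the two probabilistic assertions and a direct matrix calculation for the injectivity consequence. The union of the two failure events produced by the lemma will give the desired ``with high probability''; the implication to injectivity is then essentially a textbook lattice argument.

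For the rank claim, I would apply Lemma~\ref{lem:random-lattices}(i): with $m = n \log n$ and $q \geq n^5$, the quantity $n/q^{m-n+1}$ is superpolynomially small, so $\rank(A) = n$ with probability $1 - o(1)$. For the length claim, after conditioning on full rank, I would apply Lemma~\ref{lem:random-lattices}(ii) with $r = 20c\sqrt{nm}$, obtaining an upper bound of the form $(40c\sqrt{nm}+1)^m / q^{m-2n-1}$. Substituting $c = \alpha q/\sqrt{n}$ rewrites this as roughly $(40\alpha\sqrt{m})^m \cdot q^{2n+1}$, which, for $\alpha$ sufficiently small in the LWE regime and $m = n\log n$, $q \geq n^5$, becomes $o(1)$. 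A union bound then gives both conditions simultaneously with probability $1-o(1)$.

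For the injectivity consequence, suppose $f_A(s_1, \varepsilon_1) = f_A(s_2, \varepsilon_2)$. Then $A(s_1 - s_2) \equiv \varepsilon_2 - \varepsilon_1 \pmod{q}$, so $\varepsilon_2 - \varepsilon_1$ represents an element of $\LL_q(A)$. By the triangle inequality its Euclidean norm is at most $|\varepsilon_1| + |\varepsilon_2| \leq 20c\sqrt{nm}$, and for the chosen parameters its coordinates lie well below $q/2$ in absolute value, so the integer vector $\varepsilon_2 - \varepsilon_1$ is the shortest lift of its residue class. The hypothesis $\lambda_1(\LL_q(A)) > 20c\sqrt{nm}$ then forces $\varepsilon_1 = \varepsilon_2$ exactly. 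Substituting back gives $A(s_1 - s_2) \equiv 0 \pmod{q}$, and full column rank of $A$ over $\bZ_q$ forces $s_1 = s_2$. The main obstacle is the parameter bookkeeping in the length calculation, where the factor $(40c\sqrt{nm})^m$ has to be dominated by $q^{m-2n-1}$ via the LWE-dictated smallness of $\alpha$; the remaining steps are routine.
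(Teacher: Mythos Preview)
Your proposal is correct and follows essentially the same approach as the paper: invoke Lemma~\ref{lem:random-lattices}(i) and (ii) for the two probabilistic claims, union-bound, and then argue injectivity by observing that a collision would place the short vector $\varepsilon_2-\varepsilon_1$ in $\LL_q(A)$, contradicting the $\lambda_1$ bound, after which full rank forces $s_1=s_2$. The paper organizes the injectivity step as a two-case split ($\varepsilon=\varepsilon'$ versus $\varepsilon\neq\varepsilon'$) rather than your sequential deduction, and it is slightly less explicit than you are about the shortest-lift point and about the parameter bookkeeping on $(40c\sqrt{nm})^m/q^{m-2n-1}$, but the substance is the same.
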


\begin{proof}
From \Cref{lem:random-lattices}.i we get that $\Pr_{A\sim \mathcal{U}(\bZ_q^{m \times n}) }[\rank(A) < n] \leq {n}/{q^{m-n+1}}$ . By \Cref{lem:random-lattices}.ii we can see that
        \[ \Pr_{A\sim \mathcal{U}(\bZ_q^{m \times n})}[\lambda_1(\LL(A)) \leq 20c\sqrt{mn} \mid \rank(A) = n] \leq \frac{(40c\sqrt{mn}+1)^m}{q^{m-2n-1}}\,. \]
The probability that a random $A$ does not satisfy the conditions in the statement is at most the sum of the two probabilities above, which are both negligible for our choice of $m$ and $q$.

For injectivity, suppose for contradiction that there exist $x,x',\varepsilon,\varepsilon'$, with either $x \neq x'$ or $\varepsilon \neq \varepsilon'$, causing a collision $f_A(x, \varepsilon) = Ax+\varepsilon = Ax' + \varepsilon' = f_A(x', \varepsilon')$. We have two cases.
    \begin{enumerate}
        \item[(a)] If $\varepsilon=\varepsilon'$, then the collision happens if and only if $\rank(A) < n$, which contradicts the assumption.
        
        \item[(b)] Suppose that $\varepsilon\neq\varepsilon'$. We have that $\varepsilon-\varepsilon' = A(x'-x)$. Since the norm of $\varepsilon-\varepsilon'$ is at most $20c\sqrt{nm}$, by transitivity we have that the length of $A(x'-x)$ is bounded by the same quantity. However, the latter belongs to the lattice and therefore we obtain a contradiction.  
        \qedhere
    \end{enumerate}
\end{proof}

These two conditions turns out to be succinctly certifiable! Indeed, to certify that the matrix $A$ is full rank we may provide a left-inverse $A^{-1}_L$ such that $A^{-1}_LA = I_n$. Unfortunately, we cannot guarantee that all injective $f_A$ have simple certificates of the second property, $\lambda_1(\LL_q(A)) > 20c\sqrt{nm}$. Nevertheless, we show in \Cref{subsec:existence-cert} that almost all of them do. These certificates take the form of sets $W = \{ w_1, \dots, w_m\} \subseteq \Delta_q^*(A)$ of short linearly independent vectors in the dual of the $q$-ary lattice. We prove---using the left inequality of \citeauthor{banaszczyk1993new}'s Transference Theorem---that such a set suffices to certify the second property, and then show---using the right side of \citeauthor{banaszczyk1993new}'s Transference Theorem--- that the certificate $W$ exists with high probability.

\begin{definition}[Certificate of injectivity]
\label{def:cert-inj}
    A \emph{certificate of injectivity} for the function $f_A$, with $A \in \mathbb{Z}_q^{m \times n}$, is a pair $(A^{-1}_L, W)$ such that $A^{-1}_L$ is a left-inverse so that $A^{-1}_L A = I_n$, and $W = \{ w_1, \dots, w_m\} \subseteq \Delta_q^*(A)$ is a set of $m$ linearly independent vectors such that $\max_{i \in [m]} ||w_i|| <  1/20c\sqrt{nm}$.
\end{definition}

The relation between injectivity and these certificates is made formal as follows.

\begin{proposition}
\label{prop:W-equivalence}
    Let $n \in \mathbb{N}, m = n\log n$, $q = n^{5}$, and $A \in \bZ^{m\times n}_q$. The following hold:
    \begin{enumerate}
        \item[(i)] if there is a certificate of injectivity $(A^{-1}_L, W)$ for $f_A$, then $f_A$ is injective;
        \item[(ii)] if $\rank(A) = n$ and $\lambda_1(\LL_q(A)) > 20mc\sqrt{nm}$, then there exists a certificate of injectivity for $f_A$;
        \item[(iii)] with high probability over the choice of $A$, $\rank(A) = n$ and $\lambda_1(\LL_q(A)) > 20mc\sqrt{nm}$.
    \end{enumerate}
\end{proposition}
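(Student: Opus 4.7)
My plan is to handle the three parts in turn. Part (iii) is an immediate consequence of \Cref{lem:random-lattices}, while parts (i) and (ii) are mirror-image applications of the two directions of Banaszczyk's Transference Theorem (\Cref{thm:transference}), transported from the rank-$m$ lattice $\Delta_q(A) \subseteq \bZ^m$ to the modular lattice $\LL_q(A)$ via the identity $\lambda_1(\Delta_q(A)) = \min(q, \lambda_1(\LL_q(A)))$ noted just before \Cref{thm:transference}.

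For \emph{(iii)}, I would invoke \Cref{lem:random-lattices} directly. Its first clause bounds $\Pr[\rank(A) < n]$ by $n/q^{m-n+1}$, negligible for $m = n\log n$ and $q = n^5$. For the length condition, applying the second clause with $r = 20mc\sqrt{nm}$ and using $c = \alpha q/\sqrt{n}$ with $\alpha = o(1)$ likewise gives a negligible probability once $n$ is large enough; a union bound closes the part. For \emph{(ii)}, I would construct the certificate explicitly. A left-inverse $A^{-1}_L$ exists whenever $\rank(A) = n$ (by Gaussian elimination, after choosing $q$ to be a prime of the required magnitude). To obtain $W$, I note that the parameter regime ensures $q > 20mc\sqrt{nm}$, so the hypothesis upgrades to $\lambda_1(\Delta_q(A)) > 20mc\sqrt{nm}$; the \emph{right} inequality of \Cref{thm:transference} then yields $\lambda_m(\Delta_q^*(A)) \leq m/\lambda_1(\Delta_q(A)) < 1/(20c\sqrt{nm})$, so there exist $m$ linearly independent vectors in $\Delta_q^*(A)$ each shorter than $1/(20c\sqrt{nm})$, as required.

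Part \emph{(i)} is the converse and reduces cleanly to \Cref{prop:injectivity-everywhere}: the left-inverse witnesses $\rank(A) = n$, and the $m$ linearly independent short vectors in $W$ witness $\lambda_m(\Delta_q^*(A)) < 1/(20c\sqrt{nm})$, so the \emph{left} inequality of \Cref{thm:transference} gives $\lambda_1(\Delta_q(A)) > 20c\sqrt{nm}$. Dropping back to the modular lattice via $\min(q,\lambda_1(\LL_q(A))) = \lambda_1(\Delta_q(A))$ (and $q > 20c\sqrt{nm}$) recovers $\lambda_1(\LL_q(A)) > 20c\sqrt{nm}$, which is precisely the hypothesis of \Cref{prop:injectivity-everywhere}, whence injectivity of $f_A$ follows. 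I do not foresee a conceptual obstacle here: the only nuisance is bookkeeping the interplay between $\Delta_q(A)$ and $\LL_q(A)$ each time transference is invoked, together with verifying that the polynomial slack built into $m = n\log n$, $q = n^5$, and $\alpha = o(1)$ comfortably absorbs the constant and polynomial factors coming from \Cref{thm:transference} and \Cref{lem:random-lattices}.
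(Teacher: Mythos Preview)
Your proposal is correct and follows essentially the same route as the paper: part (iii) via \Cref{lem:random-lattices} and a union bound, part (ii) via the right inequality of \Cref{thm:transference} after passing to $\Delta_q(A)$, and part (i) via the left inequality together with \Cref{prop:injectivity-everywhere}. The only quibble is your parenthetical about ``choosing $q$ to be a prime'': the statement fixes $q=n^5$, so you cannot adjust it, and the paper (like you should) simply takes the existence of a left inverse as a direct consequence of $\rank(A)=n$.
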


Observe that given a certificate $(A^{-1}_L, W)$, verifying its correctness is a rather simple task: it is sufficient to check that $A^{-1}_L A = I_n$, to verify that  $W$ is a set of linearly independent vectors in $\Delta_q^*(A)$, and finally to ensure that the vectors in $W$ are small enough.

Let us return to the propositional system. We denote by $\operatorname{\textsc{Inj}}(f_{A})$ the propositional formula encoding that $f_A$ is injective. From this formula, $\TC^0$-Frege can derive that (\ref{eq:nice}) is a contradiction. However, $\operatorname{\textsc{Inj}}(f_{A})$ is false if we leave $A$ as free variables. We instead prove $\operatorname{\textsc{Inj}}(f_{A_0})$ for concrete injective $f_{A_0}$, where $A_0$ is hardwired. The concrete $f_{A_0}$ for which we do it are the ones that admit a certificate of injectivity.

Essentially, we formalize inside $\TC^0$-Frege that a certificate of injectivity implies injectivity. That is,
\begin{equation}
\label{eq:implication}
    \TC^0\text{-Frege} \vdash \operatorname{\textsc{Cert}}(C_A) \to \operatorname{\textsc{Inj}}(f_A)\,,
\end{equation}
where $\operatorname{\textsc{Cert}}(C_A)$ encodes that $C_A$ is a correct certificate for $f_A$. Here $C_A$ and $A$ are free variables. This implication is precisely \Cref{prop:W-equivalence}.i above. The proof inside the system is carried out in \Cref{subsec:formalization}.

Now, given a concrete certificate $C_{A_0}$ for $f_{A_0}$, the formula $\operatorname{\textsc{Cert}}(C_{A_0})$ is derivable inside $\TC^0$-Frege, which amounts to the system verifying the certificate's correctness. From this, $\TC^0$-Frege proves $\operatorname{\textsc{Inj}}(f_{A_0})$.

The rest of this section completes the missing parts in the proof. \Cref{subsec:security} sketches the known fact that $f_A$ is worst-case one-way based on the hardness of Learning with Errors, while \Cref{subsec:existence-cert} proves \Cref{prop:W-equivalence} showing the existence of certificates. We remark that the arguments and techniques are standard in cryptography and readers familiar with the area might want to skip them. We include them for the sake of completeness and to cater to the proof complexity reader that may have never come across these ideas before, and we refer to standard texts like \cite{Micciancio2011} for further details. Finally, \Cref{subsec:formalization} formalizes the certificate-to-injectivity implication above inside the theory $\LAQ$, which propositionally translates into $\TC^0$-Frege. \Cref{subsec:proof-main} reconstructs the final argument.

\subsection{Security of $f_A$}
\label{subsec:security}

The functions in $\{ f_A\}_{A \in \bZ_q^{m \times n}}$ very closely resemble the standard Learning with Errors functions, the only difference being that we have set a maximum value on the magnitude of the error vectors and allowed these to be chosen as a uniform part of the input (instead of being sampled from a Gaussian distribution). We now observe that inverting these functions allows us to invert LWE with high probability over the choice of the error vector.

\begin{lemma}[{\cite[Section 3.2]{Micciancio2011}}]
\label{lme:security}
Suppose there exists an algorithm $B$ taking as input $A\in \bZ_q^{m \times n}$ and a string $z$ and outputting a preimage in $f^{-1}_A(z)$ with probability $p$. Then, LWE can be broken with probability $0.99p$ over the choice of the error vector $\varepsilon$ and the internal randomness of $B$.
\end{lemma}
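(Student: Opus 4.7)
The plan is to give a direct reduction: given an LWE challenge $(A, b)$ with $b = As + \varepsilon$, where $\varepsilon$ is drawn from the discrete Gaussian of standard deviation $\alpha q$, simply feed $(A,b)$ to $B$ and output the first block $s'$ of the preimage it returns. The analysis then decomposes into a syntactic step (verifying that the LWE sample is a legitimate input to $f_A$) and a semantic step (verifying that when $B$ succeeds, it recovers the actual LWE secret).

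For the syntactic step, recall that the domain of $f_A$ restricts the error to $|\varepsilon|\leq 10c\sqrt{mn}$, which by the choice $c = \alpha q/\sqrt{n}$ is exactly $10\alpha q\sqrt{m}$, i.e.\ ten times the typical norm of a discrete Gaussian of standard deviation $\alpha q$ in $m$ dimensions. Standard tail estimates for the discrete Gaussian (e.g.\ Banaszczyk's bound, or a direct Chernoff argument on the coordinates) give $\Pr_\varepsilon[\,|\varepsilon|>10\alpha q\sqrt{m}\,]\leq 0.01$. Conditioned on the complementary event, the pair $(s,\varepsilon)$ lies in the domain of $f_A$ and $b = f_A(s,\varepsilon)$, so $(A,b)$ is distributed exactly as the input distribution on which $B$ is assumed to succeed with probability $p$.

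For the semantic step, observe that $B$ is only guaranteed to output \emph{some} preimage, not the specific one used by the LWE oracle. Here we invoke \Cref{prop:injectivity-everywhere}: with overwhelming probability over the choice of $A$, the function $f_A$ is injective, so the unique preimage of $b$ is $(s,\varepsilon)$ itself and $s' = s$. Since this holds for a $1 - \mathrm{negl}(n)$ fraction of matrices $A$ under the same distribution that defines $p$, the contribution of non-injective $A$ is absorbed into lower-order terms and can be safely folded into the $0.99$ constant.

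Combining the two steps, the reduction recovers $s$ with probability at least $(1-0.01)\,p\cdot (1-\mathrm{negl}(n)) \geq 0.99p$, taken over the Gaussian error $\varepsilon$ and $B$'s internal coins (with the probability over $A$ and $s$ already absorbed into $p$). The main technical point — and the only real obstacle — is the concentration inequality for the discrete Gaussian, which is a routine but non-trivial estimate that should be stated cleanly, since everything else in the reduction is essentially bookkeeping on top of \Cref{prop:injectivity-everywhere} and \Cref{def:fA}.
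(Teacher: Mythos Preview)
Your proposal is correct and follows the paper's approach: both rest on the Gaussian tail bound showing the LWE error lies in the domain of $f_A$ with probability at least $0.99$, after which $B$ can be applied directly. You go further than the paper by adding the injectivity argument (your ``semantic step'') to ensure $B$ returns the actual secret $s$ rather than an arbitrary preimage; the paper defers this concern to the footnote on the LWE assumption rather than addressing it in this proof.
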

\begin{proof}[Proof sketch]
It suffices to show that with high probability the error vectors in the standard Learning with Errors functions are bounded as in our definition of $f_A$, and thus the same inverter for $f_A$ will also work for most of the original LWE instances. This follows from a standard Gaussian tail bound. Thus, if we are able to invert $f_A$ on all outputs with probability $p$, then we are able to invert its corresponding LWE function with probability, say, $0.99p$ over the choice of $\varepsilon$.
\end{proof}

Note that it is in fact possible to invert with all but negligible probability, since finding a vector whose norm is far above the expectation with high probability requires that several independently sampled coordinates all return values much larger than the expected one. For simplicity, we use this weaker result which suffices for our applications.

\subsection{Existence of certificates of injectivity: Proof of \Cref{prop:W-equivalence}}
\label{subsec:existence-cert}

This section proves the three statements of \Cref{prop:W-equivalence}. 

\begin{customprop}{\ref{prop:W-equivalence}.i}[Correctness of certificates]
If there is a certificate of injectivity $(A^{-1}_L, W)$ for $f_A$, then $\rank(A)=n$ and $\lambda_1(\LL_q(A)) > 20c\sqrt{nm}$, and thus the function $f_A$ is injective.
\end{customprop}
\begin{proof}
    As discussed in the proof of \Cref{prop:injectivity-everywhere}, $f_A$ is injective if and only if both $\rank(A) = n$ and $\lambda_1(\LL_q(A)) > 20c\sqrt{mn}$. By elementary linear algebra, $\rank(A) = n$ if and only if there exists a left-inverse $A_L^{-1}$.  As previously observed we know that $\lambda_1(\Delta_q(A)) = \min(q,\lambda_1(\LL(A)))$ and since $20c\sqrt{mn} \leq q/m$, therefore it suffices to show that the existence of $W$ as described above implies that $\lambda_1(\Delta_q(A)) > 20c\sqrt{mn}$.

    Because it is a $q$-ary lattice we known that $\rank(\Delta_q(A)) = m$. By rearranging the left inequality of the Transference Theorem for rank-$m$ lattices, we get that $\lambda_1(\Delta_q(A)) \geq {1}/{\lambda_m(\Delta_q^*(A))}$. By the definition of $W$, we conclude that ${\lambda_m(\LL^*)} < 1/20c\sqrt{nm}$, which implies that $\lambda_1(\LL_q(A)) =\lambda_1(\Delta_q(A)) > 20c\sqrt{nm}$.

    Injectivity of $f_A$ now immediately follows from the argument in \Cref{prop:injectivity-everywhere}.
\end{proof}

\begin{customprop}{\ref{prop:W-equivalence}.ii}[Conditional existence of certificates]
If $\rank(A) = n$ and $\lambda_1(\LL_q(A)) > 20mc\sqrt{nm}$, then there exists a certificate of injectivity for $f_A$.
\end{customprop}
\begin{proof}
    Since we assumed that $\rank(A) = n$, there exists a left inverse $A_L^{-1}$ for $A$. It therefore suffices to show that if $\rank(A) = n$ and $\lambda_1(\LL_q(A)) > 20mc\sqrt{nm}$, then there exists a set of vectors $W$ satisfying the conditions above.

    By the right inequality of the Transference Theorem for rank-$m$ lattices, we can obtain that ${\lambda_m(\Delta_q^*(A))} \leq m/\lambda_1(\Delta_q(A))$. Since $\lambda_1(\LL_q(A)) > 20mc\sqrt{nm}  \leq q$, and $\lambda_1(\Delta_q(A)) = \min(q,\lambda_1(\LL_q(A))) = \lambda_1(\LL_q(A))$ there must exist a set of $m$ linearly independent vectors in $\Delta_q^*(A)$, such that $\max_{i \in [m]} ||w_i||<1/20c\sqrt{nm}$.
\end{proof}

\begin{customprop}{\ref{prop:W-equivalence}.iii}[Existence of certificates with high probability]

Let $n \in \mathbb{N}, m = n\log n$, $q \geq n^{5}$, $c \leq {\sqrt{nm}}/{40}$ and $A \in \bZ_q^{m\times n}$ be sampled uniformly at random. The probability that $\rank(A) = n$ and $\lambda_1(\LL_q(A)) > 20cm\sqrt{nm}$ is at least    
    \[ 1-\frac{n}{q^{m-n+1}}- m^{3m - ( m-2n-1)\log_m q} \text{.} \]
    This probability is at least exponentially close to $1$ for our choice of $q$ and $m$.
\end{customprop}
\begin{proof}

For the following equations we define $\textsc{E}_{\text{short}}$ to be the event that $\lambda_1(\LL_q(A)) \leq 20cm\sqrt{nm}$. We have that
\begin{align*}
\Pr_A[\rank(A) \neq n \lor \textsc{E}_{\text{short}}] & = \Pr_A[\rank(A) \neq n] + \Pr_A[\textsc{E}_{\text{short}} \land \rank(A) = n]\\
& \leq  \Pr_A[\rank(A) \neq n] + \Pr_A[\textsc{E}_{\text{short}} \mid \rank(A) = n]
\,.
\end{align*}

By \Cref{lem:random-lattices}.i, we know that $\Pr_A[\rank(A) \neq n] \leq n/q^{m-n+1}$, and by the second point of \Cref{lem:random-lattices}.ii, we have that
    \[ \Pr[\textsc{E}_{\text{short}} \mid \rank(A) = n] \leq \frac{(40mc\sqrt{mn})^m}{q^{m-2n-1}} 
    \leq \frac{(m^2n)^m}{q^{m-2n-1}} 
    \leq \frac{m^{3m}}{m^{(m-2n-1)\log_m q }} 
    =  m^{3m - (m-2n-1)\log_m q}\,. \]
\end{proof}

\subsection{Formalization}
\label{subsec:formalization}
At this point, the only thing left is the formalization of the implication  $\operatorname{\textsc{Cert}}(C_A) \to \operatorname{\textsc{Inj}}(f_A)$ inside the propositional system. Since this is rather cumbersome, we work instead in the more convenient theory $\LA$ of linear algebra of \citeauthor{cooksoltys} \cite{cooksoltys}. The theory, however, is field-independent, which means we cannot state or prove properties about the ordering of the rationals, which is needed in our arguments. Furthermore, we sometimes use the fact that certain matrices are over the integers, so we must be able to identify certain elements as integers. To solve this, we introduce a conservative extension of the theory, called $\LAQ$, which assumes the underlying field to be $\bQ$.

\subsubsection{The conservative extension $\LAQ$}
\label{subsubsec:LAQ}
On top of the existing symbols of the language of $\LA$, we have two new predicate symbols $\mathsf{int}$ and $<_\bQ$. The $\LAint$ predicate, applied to a field element $q$, written $\LAint(q)$, is supposed to be true whenever the rational $q$ is an integer.

The symbol $<_\bQ$, which we overload onto $<$ in what follows, is intended to represent the usual ordering relation over the rationals. For convenience, we also add the symbol $x \leq y$ together with an axiom imposing that its meaning is $x < y \lor x = y$. Recall that equality of field elements was a symbol in the base theory $\LA$, which already equipped it with its corresponding axioms.

We now extend the axiom-set of $\LA$ with axioms for the new symbols. Recall that the original axioms of $\LA$ are listed in \Cref{app:LA-axioms}.

\begin{multicols}{2}
\begin{enumerate}
    {\item[\phantom{1}] \textbf{Axioms for $\LAint$}
        \begin{enumerate}
            \item[($\text{Int}_1$)] $\LAint(0)$
            \item[($\text{Int}_2$)] $\LAint(1)$
            \item[($\text{Int}_3$)] $\LAint(-1)$
            \item[($\text{Int}_4$)] $\LAint(x) \land \LAint(y) \to \LAint(x + y)$ 
            \item[($\text{Int}_5$)] $\LAint(x) \land \LAint(y) \to \LAint(x \cdot y)$
            \item[($\text{Int}_6$)] $\LAint(x) \land 0 < x \to 1 \leq x$

        \end{enumerate}}
    {\item[\phantom{1}] \textbf{Axioms for $<_\bQ$}
        \begin{enumerate}
            \item[($\text{Ord}_1$)] $x \leq y \leftrightarrow (x < y \lor x = y)$
            \item[($\text{Ord}_2$)] $\neg (x < x)$
            \item[($\text{Ord}_3$)] $x < y \to \neg (x = y)$
            \item[($\text{Ord}_4$)] $x < y \land y < z \to x < z$
            \item[($\text{Ord}_5$)] $\neg(x = y) \to x < y \lor y < x$

            \item[($\text{Ord}_6$)] $x \leq y \land z \leq w \to x + z \leq y + w$
            \item[($\text{Ord}_{6'}$)] $x < y \land z < w \to x + z < y + w$
            \item[($\text{Ord}_7$)] $0 \leq x \land 0 \leq y \to 0 \leq x \cdot y$
            \item[($\text{Ord}_8$)]  $0 \leq x \cdot x$
            \item[($\text{Ord}_9$)]  $0 \leq x \land y < 0 \to x \cdot y \leq 0$
            \item[($\text{Ord}_{10}$)]  $a, b, c, d \geq 0 \land a < b \land c < d \to ac < bd$
        \end{enumerate}}
\end{enumerate}
\end{multicols}

The axioms for the ordering symbols are essentially the axioms of a strict total order ($\text{Ord}_2$-$\text{Ord}_5$), together with an axiom connecting $\leq$ and $<$ ($\text{Ord}_1$). We then ensure the compatibility of the operations with the ordering relation, ($\text{Ord}_6$-$\text{Ord}_{10}$). Our axioms are not necessarily minimal, since we are interested in convenience rather than succinctness. 

The axioms for $\LAint$ are more ad hoc and it might seem that they are not enough to fix the correct interpretation of the symbol. Indeed, the axioms for $\LAint$ only really force that addition and multiplication are closed under this predicate and that every integer in the standard model can be argued to be an integer in $\LAQ$, but they do not necessarily identify $\bZ$ as a substructure of $\bQ$. The reason this is not an issue is that we are only interested in $\LAQ$ for its propositional translation. For our purposes, these axioms are the only ones we need to prove the required claims about lattices, and once we translate to the propositional setting, the symbols will take the standard intended interpretation, so it does not matter that these are underspecified in the theory.

It is not hard to show that theorems of $\LAQ$ admit succinct $\TC^0$-Frege proofs. Showing this requires extending the propositional translation of \citeauthor{cooksoltys} to include the new symbols and axioms. We do this in \Cref{app:LAQ-trans}.

\subsubsection{Formalization of the proofs}
\label{subsubsec:formal-final}
We are ready to present the formal proofs needed inside our theory. In what follows, LA\_.\_ stands for the corresponding axiom in \Cref{app:LA-axioms}. 

First, we observe that under our new axioms, $\LAQ$ can argue that the inner product of a vector with itself is non-negative. Recall that the inner product operator $\langle u, v \rangle$ is a defined term in $\LA$, namely $u\cdot v^\intercal$.

\begin{lemma}
\label{lemma:inner}
    Provably in $\LAQ$, for every $v\in\bQ^n$, $ 0 \leq \langle v, v \rangle$. Furthermore, if $v \neq 0$, then $0 < \langle v, v \rangle$.
\end{lemma}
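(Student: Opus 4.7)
The plan is to unfold $\langle v, v \rangle$ to a sum of squares and then argue that a sum of non-negative field elements is non-negative. By the definition of the inner product as a defined term, $\langle v, v \rangle = v \cdot v^\intercal$, and by the $\LA$ axioms governing matrix multiplication and the $\Sigma$ operator (axioms LA\_$\Sigma$ in \Cref{app:LA-axioms}), this rewrites to the field-element term $\Sigma_{i=1}^{n} v_i \cdot v_i$, where we write $v_i$ for $\mathsf{e}(v, 1, i)$. This unfolding step is purely a matter of equational reasoning inside $\LA$ and introduces no use of the new axioms.

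The first use of the $\LAQ$ axioms is to apply $(\text{Ord}_8)$ to each summand: for every $i$, $0 \leq v_i \cdot v_i$. The main content of the argument is then the auxiliary statement
\[
 \text{for every matrix } M \in \bQ^{r \times c}, \quad \left(\forall i,j\ 0 \leq \mathsf{e}(M, i, j)\right) \to 0 \leq \Sigma M,
\]
which I would establish as a separate lemma in $\LAQ$. Its proof uses the $\LA$ axioms that characterize $\Sigma$ recursively by splitting off a row or column, combined with $(\text{Ord}_6)$ applied to $x = 0$, $z = 0$ at each step. Although $\LA$ is quantifier-free, this kind of statement about $\Sigma$ is precisely what the $\LA$ machinery is designed to handle via its defined term for $\Sigma$ and its identity rules, yielding a uniform $\LAQ$ proof of the lemma. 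Once this auxiliary lemma is available, the main claim follows by instantiating it with the $1 \times n$ matrix whose entries are $v_i \cdot v_i$ (viewed as the Hadamard product $v \odot v$ built from $v$) and appealing to $(\text{Ord}_8)$ entrywise.

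The main obstacle is the auxiliary lemma: transferring pointwise non-negativity to non-negativity of $\Sigma$ in a quantifier-free setting. All other ingredients — the algebraic unfolding of the inner product, the application of $(\text{Ord}_8)$, and the combination via $(\text{Ord}_6)$ — are routine. I expect the lemma to be provable by the same pattern $\LA$ uses to establish basic ring identities for $\Sigma$, e.g.\ $\Sigma(A + B) = \Sigma A + \Sigma B$, since that proof also proceeds by the same term-level recursion on the matrix dimensions and combining with a compatibility axiom; here we substitute $(\text{Ord}_6)$ for the relevant equational step.
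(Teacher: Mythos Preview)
Your proposal is correct and follows essentially the same approach as the paper: unfold $\langle v,v\rangle$ to $\sum_i v_i\cdot v_i$, apply $(\text{Ord}_8)$ to each summand, and combine via $(\text{Ord}_6)$. The paper phrases the last step simply as ``repeatedly applying axiom $(\text{Ord}_6)$'' rather than isolating your auxiliary lemma about $\Sigma M$, but the content is the same.
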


\begin{proof}
    Unfolding the definition of the term $\langle v, v \rangle$ in $\LAQ$, $\langle v, v \rangle = \sum_{i=1}^n v_i \cdot v_i$,
    so by axiom ($\text{Ord}_8$) each term in the sum satisfies $v_i \cdot v_i \geq 0$, and by repeatedly applying axiom ($\text{Ord}_6$), the entire sum can be proven to be non-negative. Here we are implicitly using the Induction Rule of $\LA$ to formalize the argument (see \Cref{app:inf-rules}). Now we show that non-zero elements have non-zero inner product: if $v > 0$, that means each entry $v_i$ satisfies $v_i > 0$; by ($\text{Ord}_{10}$) we have that $v_i \cdot v_i > 0$. The Induction Rule together with ($\text{Ord}_{6'}$) completes the argument.
\end{proof}

We now formalize inside $\LAQ$ the classical Cauchy-Schwartz inequality.

\begin{lemma}[Cauchy-Schwartz in $\LAQ$]
\label{lem:CS}
The theory $\LAQ$ proves that for every $u,v\in\mathbb{Q}^n$, $\langle u,v\rangle^2\leq\langle u,u\rangle\cdot\langle v,v\rangle$.
\end{lemma}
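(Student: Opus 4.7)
The plan is to mimic the textbook proof of Cauchy-Schwarz inside $\LAQ$. I introduce the auxiliary vector
$$w := \langle v, v\rangle \cdot u - \langle u, v\rangle \cdot v,$$
which is expressible as an $\LAQ$ term via scalar multiplication and vector subtraction. By Lemma~\ref{lemma:inner} applied to $w$, we have $0 \leq \langle w, w\rangle$ provably in $\LAQ$. Expanding the inner product using bilinearity (derivable from the $\LA$ axioms for matrix and scalar operations, as proved by \citeauthor{cooksoltys}~\cite{cooksoltys}), one computes
$$\langle w, w\rangle = \langle v, v\rangle^{2} \langle u, u\rangle - 2\langle v, v\rangle \langle u, v\rangle^{2} + \langle u, v\rangle^{2} \langle v, v\rangle = \langle v, v\rangle \bigl(\langle v, v\rangle \langle u, u\rangle - \langle u, v\rangle^{2}\bigr),$$
and hence $\LAQ$ derives the intermediate inequality
$$0 \leq \langle v, v\rangle \cdot \bigl(\langle v, v\rangle \langle u, u\rangle - \langle u, v\rangle^{2}\bigr). \qquad (\ast)$$

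It remains to cancel the factor $\langle v, v\rangle$ in $(\ast)$. Since this cancellation requires positivity, I split on the sign of $\langle v, v\rangle$ using the totality axiom ($\text{Ord}_{5}$), recalling that $\langle v, v\rangle \geq 0$ is already available from Lemma~\ref{lemma:inner}. If $\langle v, v\rangle > 0$, its multiplicative inverse is defined and positive by the sign axioms ($\text{Ord}_{7}$)--($\text{Ord}_{10}$), and multiplying $(\ast)$ by it yields $\langle u, v\rangle^{2} \leq \langle u, u\rangle \langle v, v\rangle$ directly. If instead $\langle v, v\rangle = 0$, then unfolding $\langle v, v\rangle = \sum_{i=1}^{n} v_{i}^{2}$, axiom ($\text{Ord}_{8}$) forces each summand to be non-negative, and repeated application of ($\text{Ord}_{6}$) shows that the sum vanishes only when each $v_{i}^{2} = 0$; hence each $v_{i} = 0$, so $\langle u, v\rangle = 0$ and both sides of Cauchy-Schwarz collapse to zero.

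The main obstacle will be the degenerate case $\langle v, v\rangle = 0$: showing that a sum of non-negative field elements being zero forces each summand to vanish is not a direct axiom of $\LAQ$. Once the statement is propositionally translated for a concrete dimension $n$, however, this unfolds into a finite chain of applications of ($\text{Ord}_{6}$) and ($\text{Ord}_{8}$), which $\TC^{0}$-Frege handles uniformly. As an alternative, one could sidestep the case split entirely by keeping only the strengthened inequality $(\ast)$ and carrying the extra factor of $\langle v, v\rangle$ through the subsequent lattice arguments, but proving Cauchy-Schwarz in its standard form is cleaner and matches how the inequality is invoked downstream in the formalization of certificate soundness.
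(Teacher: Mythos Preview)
Your proof is correct and follows essentially the same idea as the paper: introduce the auxiliary vector $w = \langle v,v\rangle u - \langle u,v\rangle v$, expand $\langle w,w\rangle$ using bilinearity, and invoke Lemma~\ref{lemma:inner}. The only difference is in how the factor $\langle v,v\rangle$ is cancelled. The paper premultiplies by $\langle v,v\rangle^{-1}$ from the outset and derives the identity
\[
\langle v,v\rangle^{-1}\,\langle w,w\rangle \;=\; \langle u,u\rangle\langle v,v\rangle - \langle u,v\rangle^2,
\]
then argues directly that $\langle v,v\rangle^{-1} \geq 0$ using the $\LA$ convention $0^{-1}=0$ together with axiom LA3.d and $(\text{Ord}_9)$, thereby avoiding your explicit case split on whether $\langle v,v\rangle$ vanishes. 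This sidesteps the ``sum of non-negative squares equals zero forces each term to vanish'' argument that you flag as the main obstacle. Your route is perfectly sound, just slightly more laborious in the degenerate branch; the paper's device of carrying $\langle v,v\rangle^{-1}$ through the computation is what lets it treat both cases uniformly in $\LAQ$ without descending to the propositional level.
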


\begin{proof}
If $v$ is the all-zeroes vector, then the inequality holds trivially, so let us assume $v > 0$.
We first show that $\LAQ$ can derive the following equality, 
\begin{equation}
\label{eq:CS}
        \frac{1}{\langle v,v\rangle}\langle(\langle v,v\rangle u-\langle u,v\rangle v), (\langle v,v\rangle u -\langle  u,v\rangle v)\rangle=\langle u,u\rangle\langle v ,v\rangle - \langle u,v\rangle^2 \,,
\end{equation}
where $\frac{1}{\langle v,v\rangle}$ can be explicitly referred to in $\LAQ$ as $\langle v,v\rangle^{-1}$.

We do this by deriving the following chain of equalities,
\begin{align}
       \frac{1}{\langle v ,v\rangle}\langle(\langle v ,v\rangle u-\langle  u,v\rangle v), (\langle v ,v\rangle u-\langle  u,v\rangle v)\rangle &=  \tag{\text{LA7.b}} \\   
     \frac{1}{\langle v,v\rangle}(\langle(\langle v ,v\rangle u-\langle  u,v\rangle v), (\langle v ,v\rangle u)\rangle + \langle(\langle v ,v\rangle u-\langle  u,v\rangle v), (-\langle  u,v\rangle v)\rangle) &= \tag{\text{LA7.b + LA7.c + LA7.a}} \\
\frac{1}{\langle v,v\rangle}(\langle v,v\rangle^2 \langle  u, u\rangle - \langle v,v\rangle\langle v ,u\rangle^2) &= \tag{\text{LA7.c}} \\
 \langle v,v\rangle \langle u, u\rangle - \langle v,u\rangle^2 \notag.
\end{align}

Observe now that from \Cref{lemma:inner} above, we know that $\langle(\langle v,v\rangle u-\langle u,v\rangle v), (\langle v,v\rangle u -\langle  u,v\rangle v)\rangle$ is non-negative. Furthermore, $ \frac{1}{\langle v ,v\rangle}$ is also non-negative (and, in fact, strictly positive): by \Cref{lemma:inner}, since $v > 0$ , we have $\langle v ,v\rangle > 0$. It now suffices to argue that for any field element $a$, if $a >0$, then $a^{-1} > 0$. Indeed, since $a >0$ and thus $a \neq 0$, axiom LA3.d guarantees that $a \cdot a^{-1} = 1$. If $a^{-1} \leq 0$ then we get a contradiction: either $a^{-1} = 0$, which contradicts $aa^{-1} = 1$, or $a^{-1} < 0$; in the latter case, by axiom $(\text{Ord}_9)$, $aa^{-1} \leq 0$, contradicting again $aa^{-1} = 1$. This argument applied to $\langle v, v \rangle > 0$ gives us  $\langle v ,v\rangle^{-1} > 0$.

Since both factors are non-negative, by $(\text{Ord}_7)$, their product is also non-negative, meaning that
\begin{equation*}
   0 \leq  \frac{1}{\langle v,v\rangle}\langle(\langle v,v\rangle u-\langle u,v\rangle v), (\langle v,v\rangle u -\langle  u,v\rangle v)\rangle=\langle u,u\rangle\langle v ,v\rangle - \langle u,v\rangle^2 \,,
\end{equation*}
and thus, by the transitivity axiom $(\text{Ord}_4)$, the right-hand side of \Cref{eq:CS} is non-negative as well: we have $\langle u,u\rangle\langle v ,v\rangle - \langle u,v\rangle^2 \geq 0$. Rearranging the inequality using $(\text{Ord}_6)$, the inequality follows.
\end{proof}

The other technical component needed in the final proof is a weakening of the lower bound in \citeauthor{banaszczyk1993new}'s Transference Theorem (see \Cref{thm:transference}). Informally, we need to prove that for every $A\in\mathbb{Q}^{m\times n}$, every non-zero vector $v\in\mathcal{L}(A)$ and any set of linearly independent vectors $W = \{w_1, \dots, w_n\} \subseteq \mathcal{L}^*(A)$, $\langle v ,v\rangle\cdot \langle w_i,w_i\rangle\geq 1$ for some $i \in [n]$.

In order for $\LAQ$ to process the conditions of the theorem, we provide certificate-like objects ensuring all the different hypotheses. For example, when we quantify over a vector $v$ belonging to a lattice $\LL(A)$, we provide the vector of coefficients $c_v$ such that $Ac_v = v$. Note as well that when we quantify over matrices with elements in $\bZ$, we are using the $\mathsf{int}$ predicate under the hood to enforce the entries to be integers. As a final remark, recall that we do not have existential quantifiers in $\LA$, but whenever we do some existential quantification in the following lemmas we are quantifying over small finite domains, meaning we can write everything as a small disjunction.

\begin{lemma}[\citeauthor{banaszczyk1993new}'s left inequality in $\LAQ$]
\label{lem:lwb-LAQ}
The theory $\LAQ$ proves the following implication. Let $A \in\mathbb{Z}^{m\times n}$, $B \in \bQ^{n\times n}$, $v \in \bQ^n$, $c_v \in \bZ^m$,  $W =[w_1|\dots|w_n]\in \mathbb{Q}^{m\times n}$, $c_{W} = [c_{w_1} \mid \dots \mid c_{w_n}] \in \mathbb{Z}^{m\times n}$, $W' \in \bQ^{m \times n}$ fulfilling the following conditions:

\begin{enumerate}
    \item the vector $v$ is non-zero, $v\neq 0_n$;
    \item the vector $v$ belongs to the lattice $\LL(A)$, $v = Ac_v$;
    \item the vectors in $W$ belong to the dual lattice\footnote{This dual lattice, in fact, admits a closed form for its base, as in \Cref{lem:dual-basis}. In particular, $B$ can be seen as $A(A^\intercal A)^{-1}$.} $\LL^*(A)$, $w_i = ABc_{w_i}$ for all $i \in [n]$;
    \item $(A^\intercal A)B = I_n$;
    \item the vectors in $W$ are linearly independent, $W'W^\intercal = I_n$.
\end{enumerate}
Then, for some $i \in [n]$, $\langle v ,v\rangle\cdot\langle w_i,w_i\rangle\geq 1 $.
\end{lemma}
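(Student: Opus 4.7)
The plan is to carry out the standard proof of Banaszczyk's left transference inequality inside $\LAQ$. The argument will proceed in three steps: first, I will show that every $\langle v, w_i \rangle$ is an integer; second, I will argue that at least one of them is non-zero; third, I will close with the Cauchy--Schwarz inequality already established in \Cref{lem:CS}.

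The integrality step should be straightforward. Using the witnesses $c_v$ and $c_{w_i}$ together with condition~4, the matrix algebra already available in $\LA$ by \citeauthor{cooksoltys} \cite{cooksoltys} lets me rewrite
\[ \langle v, w_i \rangle = (A c_v)^\intercal (A B c_{w_i}) = c_v^\intercal (A^\intercal A) B\, c_{w_i} = c_v^\intercal c_{w_i}\,, \]
which is a sum of products of integers and hence an integer by axioms $\text{Int}_4$ and $\text{Int}_5$.

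For the non-vanishing step, I will suppose for contradiction that $v^\intercal W = 0$. Condition~4 together with $A^\intercal A$ being square forces $\rank(A) = n$, which in turn forces the $n \times n$ integer matrix $c_W$ whose columns are the $c_{w_i}$ to be invertible over $\bQ$, because $W = A B c_W$ has linearly independent columns by condition~5 and $AB$ already has full column rank. Setting $d := c_W^{-1}(A^\intercal A) c_v$ inside $\LAQ$, a direct computation gives $Wd = AB c_W \cdot c_W^{-1}(A^\intercal A) c_v = A c_v = v$. Multiplying $v^\intercal W = 0$ on the right by $d$ then yields $\langle v, v\rangle = 0$, which together with axiom $\text{Ord}_8$ and a case split on the sign of each coordinate $v_j$ via $\text{Ord}_5$, $\text{Ord}_7$ and $\text{Ord}_{10}$ forces every $v_j = 0$, contradicting condition~1. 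Hence some $\langle v, w_i\rangle$ is non-zero, so axiom $\text{Int}_6$ (applied to either $\langle v, w_i \rangle$ or its negation, using $\text{Int}_3$ and $\text{Int}_5$ to stay within the $\LAint$ predicate) gives $\langle v, w_i\rangle^2 \geq 1$, and \Cref{lem:CS} closes the argument via $\langle v, v\rangle \cdot \langle w_i, w_i\rangle \geq \langle v, w_i\rangle^2 \geq 1$.

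The hard part will be producing $c_W^{-1}$ inside $\LAQ$ and justifying the passage from invertibility of $A^\intercal A$ to $\rank(A) = n$, since matrix inversion is not a primitive of $\LA$ and these properties need auxiliary results beyond those stated explicitly in \citeauthor{cooksoltys} \cite{cooksoltys}. A pragmatic workaround, should these turn out to be too cumbersome to prove from scratch, is to strengthen the hypothesis of the lemma by supplying the coefficient vector $d$ with $Wd = v$ (and, if needed, a witness for the invertibility of $c_W$) as additional components of the certificate. This would turn the entire non-vanishing step into routine verification of matrix identities already within reach of the core $\LA$ axioms.
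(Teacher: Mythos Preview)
Your first step (integrality of $\langle v, w_i\rangle$) and your third step (closing with Cauchy--Schwarz and $\text{Int}_6$) match the paper's proof essentially verbatim. The issue is entirely in your second step, and it is the one you yourself flag as problematic.

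You try to show $v = 0$ from $W^\intercal v = 0$ by constructing a coefficient vector $d$ with $Wd = v$, which forces you to invert $c_W$ and to reason about ranks---neither of which is available in $\LAQ$. You then propose patching the lemma by adding $d$ (or $c_W^{-1}$) to the certificate. This is unnecessary: you are overlooking that condition~5 already hands you exactly the witness you need. The paper's argument is simply to set $s \coloneqq W^\intercal v$, multiply on the left by the given $W'$, and use $W'W^\intercal = I$ to obtain $W's = v$; if every entry of $s$ vanishes then $v = W' \cdot 0 = 0$, contradicting condition~1. No inversion, no rank argument, no extra hypotheses---just one matrix product and axiom LA5.f. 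So the ``hard part'' you anticipate does not exist once you use $W'$ for what it was put there for, rather than only as a certificate of linear independence.
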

\begin{proof}
The proof has two steps. First, we show that for all $i\in[n]$, $\langle v ,w_i\rangle\in\mathbb{Z}$. To do this we use the following chain of equalities, where $w$ is some arbitrary column $w_i$ of $W$, and where the comments on the side refer to either axioms of $\LAQ$ or the assumptions in the statement of the lemma:
\begin{align}
     \langle v ,w\rangle    &=\langle Ac_v, ABc_w\rangle \tag{\text{by ass. 2 and 3}} \\   
        &= c_v^\intercal A^\intercal  ABc_w              \tag{\text{by def. of dot product}} \\
          &=  c_v^\intercal (A^\intercal A)Bc_w \tag{\text{by associativity, LA5.i}} \\
            &= c_v^\intercal c_w \text{.} \tag{by ass. 4}
\end{align}
By assumption, the entries in both $c_v$ and $c_w$ have integer entries, so by the closure under integer multiplication and addition ($\text{Int}_4$ and $\text{Int}_5$) we have that $c_v^\intercal c_w$ is an integer, and thus we deduce that for all $i \in[n]$, $\langle v ,w_i\rangle\in\mathbb{Z}$.

In the second step of the proof, we show that there is $i \in[n]$ such that $\langle v ,w_i\rangle\neq 0$. We consider the vector $s  \coloneqq  W^\intercal v$. Note that by definition, the $j$-th entry of $s$ is $\langle w_j,v\rangle$. We can multiply both sides by the same matrix $W'$, leading to $W's=W'W^\intercal v$. Using associativity (LA5.i), assumption (5) and properties of the identity matrix (LA5.f), we get that $W's=v$. Suppose that for all $i\in[n]$, $\langle v ,w_i\rangle=0$. Then, by definition, $s=0_n$. We can easily derive (using LA3.a, LA3.c and LA3.i) that $W's=0$ and therefore $v=0$. This contradicts assumption (1). 

Finally, let $i$ denote the particular index for which we have now derived that simultaneously $\langle v, w_i \rangle \in \bZ$ and $\langle v, w_i\rangle \neq 0$. By axiom ($\text{Ord}_8$), $\langle v, w_i\rangle^2 \geq 0$. Furthermore, it is easy to derive already in $\LA$ that for any field elements $a$ and $b$, if $a\neq 0$ and $b \neq 0$, then $ab \neq 0$ (this follows immediately from axioms LA3.a-d). Thus, by axiom ($\text{Ord}_1$), $\langle v, w_i\rangle^2 > 0$. Recall now that by axiom ($\text{Int}_6$) of $\LAQ$, every non-zero positive integer is greater or equal than $1$, so $\langle v, w_i\rangle^2 \geq 1$. Then, the Cauchy-Schwartz inequality from \Cref{lem:CS} gives us
\[ 1 \leq \langle v, w_i\rangle^2 \leq \langle v, v \rangle \cdot \langle w_i, w_i \rangle,
 \]
which together with transitivity (axiom $\text{Ord}_4$ together with $\text{Ord}_1$) yields the desired $1 \leq \langle v, v \rangle \cdot \langle w_i, w_i \rangle$.
\end{proof}

We are now ready to prove in $\LAQ$ that a correct certificate of injectivity implies the injectivity of $f_A$. Informally, we aim to prove that given a certificate of injectivity as in \Cref{def:cert-inj}, the function $f_A$ is injective. However \cref{lem:lwb-LAQ} only applies to the extension lattice $\Delta_q(A)$. Fortunately proving injectivity for $\Delta_q(A)$ is sufficient for inverting $f_A$. As before, we need to provide some additional objects together with the certificate to make sure $\LAQ$ can reason about this conditional implication and carry out the verification of the certificate. 

\begin{lemma}[Certificate-implies-injectivity in $\LAQ$]
\label{lem:inj-formal}
Let $A \in\mathbb{Z}^{m\times m}$, $B \in \bQ^{m\times m}$, $v_1 \in \bQ^m$, $c_{v_1} \in \bZ^m$, $v_2\in \bQ^m$, $c_{v_2} \in \bZ^m$, $\varepsilon_1 \in \bQ^m$, $\varepsilon_2 \in \bQ^m$, $W =[w_1|\dots|w_m]\in \mathbb{Q}^{m\times m}$, $c_{W} = [c_{w_1} \mid \dots \mid c_{w_n}] \in \mathbb{Z}^{m\times m}$, $W' \in \bQ^{m \times m}$ fulfilling the following conditions:

\begin{enumerate}
    \item the vector $v_1$ belongs to the lattice $\LL(A)$, $v_1 = Ac_{v_1}$;
    \item the vector $v_2$ belongs to the lattice $\LL(A)$, $v_2 = Ac_{v_2}$;
    \item the vectors $v_1$ and $v_2$ are distinct, $v_1 \neq v_2$;
    \item the vectors in $W$ belong to the dual lattice $\LL^*(A)$, $w_i = Bc_{w_i}$ for all $i \in [n]$;
    \item $(A^\intercal A)B = I_n$;
    \item the vectors in $W$ are linearly independent, $W'W^\intercal = I_n$;
    \item $\langle w_i, w_i \rangle < 1/400c^2nm$ for all $i \in[m]$;
    \item $\langle \varepsilon_2 - \varepsilon_1, \varepsilon_2 - \varepsilon_1 \rangle < 400c^2nm$. 
\end{enumerate}

Then, $v_1+\varepsilon_1\neq v_2 +\varepsilon_2$.

\end{lemma}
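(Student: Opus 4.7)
The plan is to argue by contradiction. Assume $Av_1 + \varepsilon_1 = Av_2 + \varepsilon_2$; rearranging with the ring axioms LA3 and LA5 yields $v_1 - v_2 = \varepsilon_2 - \varepsilon_1$. Set $v \coloneqq v_1 - v_2$ and $c_v \coloneqq c_{v_1} - c_{v_2}$. By linearity of matrix-vector multiplication, hypotheses (1) and (2) give $v = A c_v$, exhibiting $v$ as a member of $\LL(A)$ with integer coefficient vector $c_v$. Hypothesis (3) rules out $v_1 = v_2$ and hence forces $v \neq 0$, by the $\LA$ axioms for vector equality.

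Next, invoke \Cref{lem:lwb-LAQ} on the tuple $(A, B, v, c_v, W, c_W, W')$: hypotheses (4)--(6) of the current lemma transfer verbatim, while the preceding paragraph supplies the remaining ones. The conclusion is the disjunction $\bigvee_{i \in [n]} \langle v, v\rangle \cdot \langle w_i, w_i\rangle \geq 1$; we derive a contradiction in each disjunct. Fix $i \in [n]$ and assume $\langle v, v\rangle \cdot \langle w_i, w_i\rangle \geq 1$. Substituting the identity $v = \varepsilon_2 - \varepsilon_1$ and appealing to hypothesis (8) gives $\langle v, v\rangle < 400 c^2 n m$, while hypothesis (7) gives $\langle w_i, w_i\rangle < 1/(400 c^2 n m)$. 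Both quantities are non-negative by \Cref{lemma:inner} (and $(400 c^2 n m)^{-1}$ is provably positive in $\LAQ$ via LA3.d and the ordering axioms), so axiom ($\text{Ord}_{10}$) multiplies the two strict inequalities to yield $\langle v, v\rangle \cdot \langle w_i, w_i\rangle < (400 c^2 n m) \cdot (400 c^2 n m)^{-1} = 1$, contradicting the assumed lower bound through the irreflexivity and transitivity axioms for $<$.

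The main obstacle is bureaucratic rather than mathematical: carrying out these manipulations using only the quantifier-free fragment afforded by $\LAQ$ and the listed axioms. In particular, justifying the bilinearity step $v = \varepsilon_2 - \varepsilon_1 \Rightarrow \langle v, v\rangle = \langle \varepsilon_2 - \varepsilon_1, \varepsilon_2 - \varepsilon_1\rangle$ requires unfolding the dot product as a $\Sigma$-sum and invoking the relevant LA7 axioms, and the case analysis over the $n$ disjuncts must be organised uniformly in $i$ so that the propositional translation into $\TC^0$-Frege remains of polynomial size.
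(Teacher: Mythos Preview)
Your argument is essentially the paper's: argue by contradiction, extract a nonzero lattice vector equal to $\varepsilon_2-\varepsilon_1$, invoke \Cref{lem:lwb-LAQ}, and combine hypotheses~(7) and~(8) through $(\text{Ord}_{10})$ to reach $1\le\langle v,v\rangle\langle w_i,w_i\rangle<1$. The one slip is the rearrangement: from $Av_1+\varepsilon_1=Av_2+\varepsilon_2$ one only gets $A(v_1-v_2)=\varepsilon_2-\varepsilon_1$, not $v_1-v_2=\varepsilon_2-\varepsilon_1$, so the paper takes $v\coloneqq A(v_1-v_2)$ (with witness $c_v=v_1-v_2$, integral since $v_j=Ac_{v_j}$) rather than your $v\coloneqq v_1-v_2$; with that correction your later substitution $\langle v,v\rangle=\langle\varepsilon_2-\varepsilon_1,\varepsilon_2-\varepsilon_1\rangle$ becomes justified and the rest goes through verbatim.
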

\begin{proof}
The proof proceeds by contradiction. Suppose that $Av_1+\varepsilon_1= Av_2+\varepsilon_2$, meaning that a collision exists in the range of $f_A$. By simple algebraic manipulations in $\LAQ$, we derive that $A(v_1-v_2)=\varepsilon_2-\varepsilon_1$. Let $v \coloneqq A(v_1-v_2)$ and $\varepsilon \coloneqq \varepsilon_2 - \varepsilon_1 $, so that we have $v = \varepsilon$. 

Observe now that assumptions (7) and (8), together with axiom ($\text{Ord}_{10}$) and (LA3.d) give us $\langle w_i, w_i \rangle \langle \varepsilon, \varepsilon \rangle < 1$ for every $i\in[n]$. With the existing assumptions of the theorem we can in fact apply \Cref{lem:lwb-LAQ} to $v$, getting that there exists $i$ such that $\langle v,v \rangle \langle w_i, w_i \rangle \geq 1$. Since $v = \varepsilon$, we get $\langle \varepsilon,\varepsilon \rangle \langle w_i, w_i \rangle \geq 1$, but this means
\[ 1 \leq \langle \varepsilon,\varepsilon \rangle \langle w_i, w_i \rangle < 1. \]
We remark that while technically the transitivity axiom ($\text{Ord}_4$) is stated for strict orders, the existing set of axioms immediately implies the \say{mixed} version, namely that for field elements $a$, $b$ and $c$, if $a \leq b$ and $b < c$, then $a < c$. Thus we now have $1 < 1$, but this contradicts axiom ($\text{Ord}_2$).
\end{proof}

\subsection{Proof of \Cref{thm:main}}
\label{subsec:proof-main}

We are ready to put all the pieces together.

\begin{proof}[Proof of \Cref{thm:main}]
Let $d_0$ be the smallest constant for which $\TC^0$-Frege satisfies \Cref{eq:implication}, meaning that $\TC^0$-Frege admits polynomial-size proofs of the propositional translation of \Cref{lem:inj-formal} in depth $d_0$, and let $d \geq d_0$. Suppose that $\TC^0_d$-Frege is weakly quantum automatable, that is, suppose that $S$ is a quantum automatable proof system simulating $\TC^0_d$-Frege. Let $Q$ be the quantum algorithm automating $S$. We describe a quantum algorithm $Q'$ that takes as input a matrix $A$ defining a function $f_A$ as in \Cref{def:fA} and an output $z$ of this function and succeeds in finding a preimage of $z$ with high probability.

For a specific input matrix $A_0$, using \Cref{lem:q-basis} we can find $A_1$ such that $\LL(A_1) = \Delta(A_0)$. We can then consider the formula $\operatorname{\textsc{Cert}}(C_A) \to \operatorname{\textsc{Inj}}(f_A)$, where $C$ and $A$ are free variables. In \Cref{lem:inj-formal} this implication was proven inside $\LAQ$, and by the propositional translation for $\LAQ$ in \Cref{thm:new-propositional-translation} we get an efficient proof inside $\TC^0_d$-Frege, and thus also in $S$. Craft now the formula $\operatorname{\textsc{Inj}}(f_{A_1})$ for the particular $A_1$ obtained earlier. By \Cref{prop:W-equivalence}, for most $f_{A_0}$ there exists a certificate of injectivity $C_{A_1}$ such that $\operatorname{\textsc{Cert}}(C_{A_1})$ is true and, in fact, has no free variables. Consider this certificate as a partial restriction and apply it to the implication above. Since $\TC^0$-Frege systems are closed under restrictions, there must be a polynomial-size proof of $\operatorname{\textsc{Inj}}(f_{A_1})$, and so $S$ also proves this efficiently. Recall that, as noted in \Cref{rem:constructive}, Impagliazzo's observation guarantees that under the existence of an automating algorithm we get constructive feasible interpolation, so from the proof of $\operatorname{\textsc{Inj}}(f_{A_1})$ we can get a circuit that recovers one bit of $s'$ such that $A_1s' \equiv As \text{ }(\text{mod } q)$. By iterating this process we can recover the entire preimage. Taking the modulus of $s' (\text{mod } p)$ we recover $As$ which as noted after our definition of LWE suffices to win the security game. This procedure works as long as $f_{A_0}$ is injective and admits a certificate of injectivity, but by \Cref{prop:W-equivalence} this is the case with overwhelming probability. Then, by \Cref{lme:security}, we break LWE and get the desired conclusion.
\end{proof}

    The proof above is phrased from the starting assumption of a weak automating algorithm rather than feasible interpolation. The reason is that, intuitively, feasible interpolation alone does not seem to immediately break the cryptographic assumption: for every fixed matrix $A$, feasible interpolation only seems to guarantee the existence (with high probability) of a circuit breaking $f_A$, but this circuit seems to essentially depend on $A$. By starting the argument from an automating algorithm, we have a uniform way of finding the proofs of injectivity for each particular $f_A$ to then construct the corresponding interpolating circuit.

    While we find this more intuitive, we can still phrase the argument directly in terms of interpolation (and hence rule out this too under the same assumption). It suffices to argue that $\TC^0$-Frege can refute the contradictory formulas
    \[  (f_A(x) = z \land x_i = 0) \land ((f_A(y) = z \land y_i = 1 ) \land \textsc{Cert}(C_A)), \]
    where $x_i$ and $y_i$ refer to the $i$-th respective bits, and $\textsc{Cert}(C_A)$ is the certificate predicate, as in \Cref{eq:implication}. Observe that this is still a split formula, since the variables encoding the certificate $C_A$ appear only on the right-hand side. That the proof system can show this is a contradiction follows immediately from the fact that it can prove the implication in \Cref{eq:implication}. More importantly, the refutation of this formula is uniform and known, with $A$ as free variables, meaning we can extract the interpolants directly. It is not hard to see that interpolating on this formula we can still break the same functions that we would break with the aid of an automating algorithm. This remark is due to Impagliazzo. Thus, the following corollary also follows from our formalization.

\begin{corollary}
    For every large enough $d \in \bN$, if $\TC_d^0$-Frege (respectively, $\AC^0_d$-Frege) admits feasible interpolation by (quantum) circuits, then the LWE assumption can be broken by a uniform family of polynomial-size (respectively, subexponential-size) (quantum) circuits.
\end{corollary}

\subsection*{Acknowledgments}
The question of the quantum non-automatability of strong proof systems was suggested to us by three different people. We thank Vijay Ganesh for bringing it up during the Dagstuhl Seminar 22411 \emph{Theory and Practice of SAT and Combinatorial Solving}. We thank Susanna F. de Rezende for bringing our attention to the problem later and for insightful comments and careful proofreading. We would also like to thank Ján Pich for independently pointing us to the problem and for discussing many details with us. We are particularly grateful for him directing us to the work of Soltys and Cook on $\LA$, which greatly simplified our formalization. We also thank Rahul Santhanam for his insights and conversations and Yanyi Liu for pointing us to the existence of the certificates of injectivity. We also thank María Luisa Bonet, Jonas Conneryd, Ronald de Wolf, Eli Goldin, Peter Hall, Russell Impagliazzo, Erfan Khaniki, Alex Lombardi, Daniele Micciancio, Angelos Pelecanos and Michael Soltys for useful comments, suggestions and pointers.

The journal version of this work appeared in \emph{Computational Complexity} in October of 2025 \cite{ACG25_CC_journal}. A preliminary version of this work was presented at the poster session of the International Conference on Quantum Information Processing (QIP 2024) and an extended abstract appeared at the 39th Computational Complexity Conference (CCC 2024) \cite{ACG24_CCC}. We are thankful to the anonymous reviewers, particularly for observing that some crucial axioms were missing from the definition of $\LAQ$ in an earlier version of this work.

This work was done in part while the authors were visiting the Simons Institute for the Theory of Computing at UC Berkeley during the spring of 2023 for the \emph{Meta-Complexity} and \emph{Extended Reunion: Satisfiability} programs.

The first author was supported by the Wallenberg AI, Autonomous Systems and Software Program (WASP) funded by the Knut and Alice Wallenberg Foundation.

\printbibliography[
        heading=bibintoc
]

\newpage
\appendix

\section{The theories $\LA$ and $\LAQ$}

\Cref{app:LA-axioms} below lists the axioms of the theory $\LA$ of linear algebra of Soltys and Cook \cite{cooksoltys}, together with several theorems proven inside the theory in the original paper. \Cref{app:LAQ-trans} proves that the conservative extension $\LAQ$ admits a propositional translation into $\TC^0$-Frege.

\subsection{Axioms and basic theorems of $\LA$}
\label{app:LA-axioms}
\begin{multicols}{2}
\begin{enumerate}
    \item \textbf{Equality axioms}
    \begin{enumerate}
        \item $x=x$
        \item $x = y \rightarrow y = x$
        \item $(x=y \land y=z) \rightarrow x=z$
        \item $\bigwedge_{i}^n (x_i = y_i) \to f(\bar x)=f(\bar y)$
        \item $i_1=j_1, i_2=j_2, i_1\leq i_2\rightarrow j_1\leq j_2$.
    \end{enumerate}
    \item \textbf{Axioms for indices}
    \begin{enumerate} 
        \item $i+0=i$ 
        \item $ i+(j+1)=(i+j)+1$
        \item $ i\cdot(j+1)=(i\cdot j)+i$
        \item $i+1=j+1 \to i=j$
        \item $ i+1\neq 0$
        \item $ i \leq i+j$
        \item $ i\leq j, j\leq i$
        \item $i\leq j, i+k=j\to (j-i=k)$
        \item $i\leq j, i+k=j \to (i \nless j \to j-i=0)$
        \item $j \neq 0 \to \mathsf{rem}(i, j) < j$
        \item $j \neq 0 \to i = j \cdot \mathsf{div}(i,j) + \mathsf{rem}(i, j)$
        \item $\alpha \to \mathsf{cond}(\alpha, i, j)=i$
        \item $\neg \alpha\to \mathsf{cond}(\alpha, i, j)=j$
    \end{enumerate}
    \item \textbf{Axioms for field elements}
    \begin{enumerate}
        \item $0\neq 1\wedge a+0=a$  
        \item $ a+(-a)=0$
        \item $1 \cdot a = a$ 
        \item $a\neq 0\rightarrow a \cdot(a^{-1})=1$
        \item $ a+b = b+a$ 
        \item $ a\cdot b = b\cdot a$ 
        \item $ a+(b+c)=(a+b)+c$
        \item $ a\cdot (b\cdot c)=(a\cdot b)\cdot c$ 
        \item $ a\cdot (b+c)=a\cdot b+a\cdot c$ 
        \item $\alpha \to \mathsf{cond}(\alpha, a, b)=a $
        \item $\neg\alpha\to \mathsf{cond}(\alpha, a, b)=b$
    \end{enumerate}
    \item \textbf{Axioms for matrices}
    \begin{enumerate}
        \item $(i=0\vee \mathsf{r}(A)<i\vee j=0\vee \mathsf{c}(A) <j) \rightarrow \mathsf{e}(A,i,j)=0$
        \item $\mathsf{r}(A)=1, \mathsf{c}(A)=1\rightarrow\Sigma(A)=\mathsf{e}(A,1,1)$
        \item $\mathsf{c}(A)=1 \rightarrow \sigma(A)=\sigma(A^\intercal)$
        \item $\mathsf{r}(A)=0\vee \mathsf{c}(A)=0\rightarrow \Sigma(A)=0$
    \end{enumerate}
    \columnbreak
    \item \textbf{Theorems for ring properties} 
    \begin{enumerate}
        \item $\max(i,j)=\max(j,i)$
        \item $\max(i, \max(j,k))=\max(\max(i,j),k)$
        \item $\max(i,\max(j,k))=\max(\max(i,j),\max(i,k))$
        \item $A+0=A$
        \item $A+(-1)A=0$ 
        \item $AI=A \text{ and } IA=A$ 
        \item $A+B=B+A$ 
        \item $A+(B+C)=(A+B)+C$ 
        \item $A(BC)=(AB)C$ 
        \item $A(B+C)=AB+CA$ 
        \item $(B+C)A=BA+CA$  
        \item $\Sigma 0 =0_{\text{field}}$
        \item $\Sigma(cA)=c\Sigma(A)$
        \item $\Sigma(A+B)=\Sigma(A)+\Sigma(B)$
        \item $\Sigma(A)=\Sigma(A^\intercal)$
    \end{enumerate}
    \item \textbf{Theorems for module properties}
    \begin{enumerate}
        \item $(a+b)A=aA+bA$ 
        \item $a(A+B)=aA+aB$ 
        \item $(ab)A=a(bA)$
    \end{enumerate}
    \item \textbf{Theorems for inner product}
    \begin{enumerate}
        \item $A\cdot B= B\cdot A$ 
        \item $A\cdot (B+C)=A\cdot B+ A\cdot C$ 
        \item $aA\cdot B=a(A\cdot B)$
    \end{enumerate}
    \item \textbf{Miscellaneous theorems}
    \begin{enumerate}
        \item $a(AB)=(aA)B\wedge (aA)B=A(aB)$
        \item $(AB)^\intercal=B^\intercal A^\intercal$
        \item $I^\intercal=I$
        \item $0^\intercal=0$
        \item $(A^\intercal)^\intercal=A$ 
    \end{enumerate}
\end{enumerate}
\end{multicols}

\subsection{Inference rules of $\LA$}
\label{app:inf-rules}
\citeauthor{cooksoltys} devised $\LA$ more as purpose-specific formal system than a proper first-order theory and, as such, the definition of $\LA$ consists of both the axioms above as well as inference rules on how to manipulate these. The underlying inference rules are those of Gentzen's Sequent Calculus (in its propositional version, without quantifier rules), together with two additional rules described below. In what follows, $\Gamma$ and $\Delta$ are cedents (finite sequences of formulas), which may be empty. See \autocite[Section 2.2]{cooksoltys} for further discussion of the rules.

\subsubsection*{Matrix equality rule}

\begin{prooftree}
    \AxiomC{$\Gamma \longrightarrow \Delta, \mathsf{e}(T, i, j) = \mathsf{e}(U, i, j)$}
    \AxiomC{$\Gamma \longrightarrow \Delta, \mathsf{r}(T) = \mathsf{r}(U)$}
    \AxiomC{$\Gamma \longrightarrow \Delta, \mathsf{c}(T) = \mathsf{c}(U)$}
    \TrinaryInfC{$\Gamma \longrightarrow \Delta, T = U$}
\end{prooftree}

Here, the index variables $i, j$ may not occur free in the bottom sequent, and $T$ and $U$ are any matrix terms.

\subsubsection*{Induction rule}

\begin{prooftree}
    \AxiomC{$\Gamma , \alpha(i) \longrightarrow \alpha(i+1), \Delta$}
    \UnaryInfC{$\Gamma, \alpha(0) \longrightarrow \alpha(n), \Delta$}
\end{prooftree}
Here $\alpha$ is an arbitrary formula, the index variable $i$ does not occur free in either $\Gamma$ or $\Delta$ and $n$ stands for any terms of index type.

\subsection{The propositional translation for $\LAQ$}
\label{app:LAQ-trans}
We show that theorems of $\LAQ$ still have short propositional proofs in $\TC^0$-Frege despite the presence of new symbols and axioms not present in $\LA$.

The first step in the propositional translation is the conversion of $\LA$ formulas into propositional ones. The translation is analogous to the usual propositional translations used elsewhere in bounded arithmetic (see, for example, \cite{cookBOOK, krajicek95}). Let $\varphi$ be a formula of $\LA_\bQ$ and let $\sigma$ be an object assignment that assigns a natural number to each free index variable occurring in $\varphi$ and to each term of the form $\mathsf{r}(A)$ and $\mathsf{c}(A)$ occurring in $\varphi$. We denote by $N$ the maximum value in the range of $\sigma$. For every variable $q$ standing for a rational number in $\varphi$, we introduce enough Boolean variables to represent $q$ as a fraction $a/b$, where $a$ and $b$ are integers represented in binary. We may assume that $N$ is also an upper bound on the binary precision of these integers. We adopt the convention that denominators are always positive. Note that the number of Boolean variables introduced is at most polynomial in $N$.

The translation of $\varphi$ then proceeds by substituting every function and predicate symbol by the corresponding $\TC^0$ circuit of the appropriate size, which is also at most $\poly(N)$. It is easy to verify that all the functions and predicate symbols in $\LA$ are computable in $\TC^0$, and this is also the case for the extended vocabulary ($<_\bQ$ and $\LAint$). For $<_\bQ$, given $a/b$ and $c/d$, we check whether $ac < bd$, which only requires operations over the integers. For $\LAint$ we shall use the circuit computing whether the rational $q = a/b$ satisfies that $\operatorname{MOD}(a, b) = 0$. This only requires the standard remainder function, computable in $\TC^0$, plus an equality check. We denote by $||\varphi||_\sigma$ the propositional formula obtained by carrying out this translation process. The size of $||\varphi||_\sigma$ is again polynomial in $N$.

Now, given an $\LA$ proof $\pi$ of a sentence $\varphi$ and an object assignment $\sigma$, we translate $\pi$ into a $\TC^0$-Frege proof of $||\varphi||_\sigma$. It suffices to translate each line in the proof into its corresponding propositional formula and observe that the required inference steps can be carried out in $\TC^0$-Frege. More precisely, the underlying proof system for $\LA$ and $\LA_\bQ$ is the sequent calculus, and we may think of $\TC^0$-Frege in its sequent calculus formulation, $\mathsf{PTK}$. Since $\LA$ and $\LA_\bQ$ formulas are quantifier-free, no derivation rules for quantifiers are present in $\pi$. Every inference step of an $\LA$ proof matches the corresponding sequent calculus rule of the propositional sequent calculus. It is also not hard to see that the cut rule is always over a $\TC^0$ circuit, since the $\LA_\bQ$ formula over which we cut translates into a $\TC^0$ circuit.

The only problem occurs when reaching a leaf in the proof $\pi$, which corresponds to an axiom of $\LA$ or $\LA_\bQ$. These are not axioms of $\TC^0$-Frege and hence require a proof to be appended in the translation. Cook and Soltys observed that when instantiated over the rationals, all of the axioms of $\LA$ are either directly proven or follow easily from the basic properties of arithmetic proven by Bonet, Pitassi, and Raz \cite{BPR1997} inside $\TC^0$-Frege. Hence, the only thing left to complete the propositional translation for $\LA_\bQ$ is to provide small $\TC^0$-Frege proofs of the new axioms not present in $\LA$.

\begin{lemma}
\label{lem:translation-new}
    There are polynomial-size $\TC^0$-Frege proofs of the propositional translation of the axioms of $\LA_\bQ$.
\end{lemma}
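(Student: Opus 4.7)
The plan is to reduce the propositional translation of each new axiom of $\LAQ$ to a short chain of integer-arithmetic identities and comparisons whose $\TC^0$-Frege-provability was established by Bonet, Pitassi, and Raz \cite{BPR1997}. Recall that the translation represents every rational variable $q$ by a pair $(a,b)$ of binary integers with $b > 0$, so the new symbols unfold into standard integer operations: $q <_\bQ q'$ with $q = a/b$ and $q' = c/d$ becomes the integer comparison $ad < bc$, and $\LAint(a/b)$ becomes the divisibility check $\operatorname{rem}(a,b) = 0$. Every new axiom therefore translates into a propositional tautology expressed purely in terms of binary integer addition, multiplication, comparison, and divisibility.

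First I would handle the order axioms. $(\text{Ord}_1)$ is immediate from the definition of the circuit for $\leq$. $(\text{Ord}_2)$ becomes $\neg(ab < ab)$, a direct tautology of integer comparison. $(\text{Ord}_3)$ follows from the fact that integer strict comparison is incompatible with integer equality after cross-multiplication. $(\text{Ord}_4)$ reduces to the statement that $ad < bc$ and $cf < de$ imply $af < be$, provable by multiplying each inequality by the appropriate positive denominator and chaining. $(\text{Ord}_5)$ unfolds to trichotomy for integers. The compatibility axioms $(\text{Ord}_6)$--$(\text{Ord}_{10})$ express monotonicity of integer addition and multiplication with respect to the integer ordering, each of which expands into a polynomial-size identity directly provable from the existing $\TC^0$-Frege formalization of integer arithmetic.

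For the $\LAint$ axioms, $(\text{Int}_1)$--$(\text{Int}_3)$ translate into constant divisibility checks such as $\operatorname{rem}(0,1) = 0$, decided by direct evaluation. $(\text{Int}_4)$ requires that $b \mid a$ and $d \mid c$ jointly imply $bd \mid (ad + bc)$, which unfolds through $a = bk$ and $c = d\ell$ into $ad + bc = bd(k + \ell)$; $(\text{Int}_5)$ is analogous with $ac = bd(k\ell)$. Both derivations are short manipulations of binary integer multiplication. $(\text{Int}_6)$ reduces to: if $b \mid a$ with $b > 0$ and $a > 0$, then $a \geq b$, which follows because writing $a = bk$ forces $k \geq 1$ and hence $a \geq b$.

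The main obstacle is bookkeeping rather than any single conceptual difficulty: rational arithmetic amplifies bit-widths, and one must ensure that the resulting propositional proofs remain polynomial in the precision bound $N$ fixed by the object assignment. This is handled by invoking the $\TC^0$-Frege-provable correctness of iterated addition, multiplication, comparison, and divisibility testing from \cite{BPR1997}. Once those lemmas are imported, each axiom's translation closes in a derivation whose size grows only polynomially in $N$, completing the extension of the Cook--Soltys propositional translation to $\LAQ$.
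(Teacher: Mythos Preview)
Your proposal is correct and follows essentially the same approach as the paper: both reduce the new $\LAQ$ axioms to integer-arithmetic facts (comparisons, divisibility, quotient identities) already established in $\TC^0$-Frege by \cite{BPR1997}, handling the $\LAint$ axioms via the decomposition $a = b\cdot\operatorname{div}_b(a)$ when $[a]_b = 0$. The paper is slightly more explicit in citing the specific BPR lemmas (L7.19, L7.27--L7.29) and does a three-way case split for $(\text{Int}_6)$, but your direct argument via $k \geq 1$ is equivalent.
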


\begin{proof}
    The axioms of $\LA$ were handled in the original work of Cook and Soltys (see Theorem 6.3 in \cite{cooksoltys}). Furthermore, the axioms imposing that $<_\bQ$ is an ordering relation were already proven in \cite{BPR1997} as well (these are precisely the lemmas proven in their Section 7.2). We therefore focus on the translation of the axioms for $\LAint$.
    
    For the sake of consistency with the previous work of \citeauthor{BPR1997} we adopt here the notation $[a]_b$ for the $\operatorname{MOD}(a,b)$ function and $\operatorname{div}_b(a)$ for the integer division between $a$ and $b$. We also reuse the following lemmas proved by them inside $\TC^0$-Frege, where L7.\_ stands for the corresponding lemma in \cite{BPR1997}:

    \begin{itemize}[leftmargin=1.3cm]
        \item[(L7.19)] $(a < b) \lor (b< a) \lor (a = b)$.
        \item[(L7.27)] $a = [a]_b + \operatorname{div}_b(a) \cdot b$.
        \item[(L7.28)] $ x+y\cdot p = u+ v \cdot p \land y < v \to p \leq x$.
        \item[(L7.29)] $[a]_b =[a + k\cdot b]_b$.

    \end{itemize}

    The first three axioms for $\LAint$ clearly admit constant-size proofs, so we only need to write the proofs for the axioms ($\text{Int}_4$), ($\text{Int}_5$), and ($\text{Int}_6$) 
    \begin{enumerate}[leftmargin=1.15cm]
            \item[($\text{Int}_4$)]
            
            Let $x$ and $y$ be represented by the fractions $a/b$ and $c/d$ respectively. The translation of the axiom
            \[ \LAint(x) \land \LAint(y) \to \LAint(x + y) \]
            yields the propositional formula
            \[ [a]_b = 0 \land [c]_d = 0 \to [ad + bc]_{bd} = 0\text{.} \] 

            From $[a]_b = 0$ and $[c]_d = 0$, L7.27 gives us that $a = \operatorname{div}_b(a)\cdot b$ and $c = \operatorname{div}_d(c)\cdot d$. Then,
                \begin{align*}
                    [ad + bc]_{bd} &= [{\underbrace{\operatorname{div}_b(a)\cdot b}_{a}}  \cdot d + {\underbrace{\operatorname{div}_d(c)\cdot d}_{c}}\cdot b]_{bd} \\
                    &=[bd \cdot (\operatorname{div}_b(a) + \operatorname{div}_d(c))]_{bd} \\
                    &= [0]_{bd} \\
                    &= 0
                \end{align*}
            where the second to last equality follows by applying L7.29.
            
            \item[($\text{Int}_5$)] In this case the translation of
            \[ \LAint(x) \land \LAint(y) \to \LAint(x \cdot y) \]
            yields the formula
            \[ [a]_b = 0 \land [c]_d = 0 \to [ac]_{bd} = 0\text{.} \]
            We have again that L7.27 gives us that $a = \operatorname{div}_b(a)\cdot b$ and $c = \operatorname{div}_d(c)\cdot d$. Then,
            \begin{align*}
                    [ac]_{bd} &= [ac + (-\operatorname{div}_b(a) \cdot \operatorname{div}_d(c)) \cdot bd]_{bd} \\
                    &= [ac - {\underbrace{\operatorname{div}_b(a) \cdot b}_{a}} \cdot {\underbrace{\operatorname{div}_d(c) \cdot d}_{c}}]_{bd} \\
                    &= [ac - ac]_{bd} \\
                    &= [0]_{bd} \\
                    &= 0
                \end{align*}
            where the first equality follows again from L7.29.
            
            \item[($\text{Int}_6$)] We first write the propositional translation of 
            \[ \LAint(x) \land 0 < x \to 1 \leq x\text{.} \]
            Recall that we adopted the convention that denominators of fractions are always positive, and the comparator circuit between two rationals $a/b$ and $c/d$ checks the integer inequality $ac < bd$. The consequent $1 \leq x$ stands for $1 = x \lor 1 < x$, which translates as $\operatorname{div}_b(a) = 1 \lor b < a$ when writing $x$ and $a/b$. Thus, the formula to prove is 
            \[ [a]_b = 0 \land 0 < a \to \operatorname{div}_b(a) = 1 \lor b<a\text{.} \]
            By L7.19, either $b< a$, $a < b$ or $a = b$. If $b < a$, we are done. If $a = b$, using L7.27, it is easy to show that $\operatorname{div}_a(a) = 1$, since by L7.29,
            \[ [a]_a = [0 + a]_a = [0]_a = 0 \]
            and thus
            \[  a = \operatorname{div}_a (a) \cdot a + [a]_a =  \operatorname{div}_a (a) \cdot a\text{.} \]
            Since by assumption $0 < a$, we have $a \neq 0$, so $\operatorname{div}_a(a) = 1$. Then,
            \[ \operatorname{div}_b(a) = \operatorname{div}_a(a) = 1\text{.} \]
            
            Finally, if $a < b$, we prove that the antecedent of the formula is falsified. We first show that $\operatorname{div}_a(b) = 0$. Suppose not, then it must be $\operatorname{div}_a(b) > 0$. 
            When taking $x = a$, $y = 0$, $p = b$ and $v = \operatorname{div}_b(a)$ in L7.28 above, we immediately get $b \leq a$, contradicting $a < b$.

            Now that we have $\operatorname{div}_b(a) = 0$, by L7.27 we get $a = [a]_b$. But if both $[a]_b = 0$ and $0 < a$, we get a contradiction. \qedhere
        \end{enumerate} 
\end{proof}

\begin{theorem}[Propositional translation for $\LAQ$]
\label{thm:new-propositional-translation}
    For every theorem $\varphi$ of $\LAQ$ and every object assignment $\sigma$, the propositional formula $||\varphi||_\sigma$ admits polynomial-size $\TC^0$-Frege proofs.
\end{theorem}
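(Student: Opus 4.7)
The plan is to proceed by structural induction on the derivation $\pi$ of $\varphi$ in the sequent calculus underlying $\LAQ$, lifting the Cook–Soltys propositional translation for $\LA$ \cite{cooksoltys} to the extended vocabulary. Fix an object assignment $\sigma$; as noted in the preamble to this subsection, $\sigma$ bounds every index and every dimension term $\mathsf{r}(A),\mathsf{c}(A)$ occurring in $\pi$ by some $N$, and the binary precision of the rationals is also bounded by $N$, so each line of $\pi$ translates into a $\mathsf{PTK}$-sequent whose formulas have size $\poly(N)$. The first task is to verify that every function and predicate symbol in the extended language is still computable by a $\TC^0$ circuit on the fractional representation $a/b$: for $<_\bQ$ this reduces to comparing $ad$ and $bc$, integer operations that are in $\TC^0$; for $\LAint(a/b)$ this reduces to checking that $\operatorname{MOD}(a,b)=0$, using the $\TC^0$ remainder circuit from \cite{BPR1997} together with equality. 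Thus $||\varphi||_\sigma$ is a legal $\mathsf{PTK}$-formula.

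Next I would show that every inference step of $\pi$ lifts to a polynomial-size block of $\mathsf{PTK}$ inferences. Because $\LAQ$ is quantifier-free, only propositional structural rules, logical rules, and cuts appear. Logical and structural rules translate line-by-line to the corresponding $\mathsf{PTK}$ rules with no blow-up. The one point to check is the cut rule: cut formulas in $\pi$ are arbitrary $\LAQ$-formulas, and under the translation these become $\TC^0$ circuits of size $\poly(N)$, which is precisely the class of formulas allowed as cuts in $\mathsf{PTK}$. Hence no restriction is violated.

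The leaves of $\pi$ are axiom instances, and these are the only nontrivial case. Axioms inherited from $\LA$ are handled by the original Cook–Soltys translation, which gives polynomial-size $\TC^0$-Frege proofs of each translated axiom. The ordering axioms $(\text{Ord}_1)$–$(\text{Ord}_{10})$ translate into propositional statements about integer arithmetic on the numerators and positive denominators, and each follows from the arithmetic lemmas already formalised in $\TC^0$-Frege by Bonet, Pitassi and Raz \cite{BPR1997} (indeed, their Section 7.2 develops exactly the strict-total-order and monotonicity facts needed). Finally, the $\LAint$ axioms are taken care of by \Cref{lem:translation-new}, which was proved precisely to supply the missing leaves.

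Putting the three ingredients together by induction gives a $\mathsf{PTK}$-derivation of $||\varphi||_\sigma$ whose size is bounded by the number of lines in $\pi$ times the maximum cost of translating a single step, each of which is $\poly(N)$; the total is therefore $\poly(|\pi|+N)$, which is polynomial in the size of the translated endsequent as required. The main obstacle I anticipate is purely bookkeeping: one must be careful that the convention of positive denominators is preserved under every arithmetic operation introduced in a derivation step so that the propositional circuits for $<_\bQ$ and $\LAint$ behave as intended, and that the axiom proofs of \Cref{lem:translation-new} compose uniformly across the inductive hypothesis rather than only in isolation. Neither of these is a conceptual difficulty, but both require explicit checking in order to honestly claim a polynomial size bound.
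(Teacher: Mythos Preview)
Your proposal is correct and follows essentially the same approach as the paper: the paper's proof simply invokes Theorem~6.3 of \cite{cooksoltys} for the $\LA$ part and appeals to \Cref{lem:translation-new} for the new axioms, with the structural-induction details (translation of symbols, propositional rules, cuts over $\TC^0$ formulas, axiom leaves) already laid out in the discussion preceding the theorem. Your write-up spells out these steps more explicitly but adds no new ideas beyond what the paper sketches.
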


\begin{proof}
    The proof is analogous to Theorem 6.3 in \cite{cooksoltys}, except we need to handle the new axioms. By \Cref{lem:translation-new} above, the translations of the new axioms have short $\TC^0$-Frege proofs. This completes the proof.
\end{proof}

\section{Proof of \Cref{prop:equivalences}}
\label{app:equivalence}
We prove the equivalence between the machine-based and circuit-based definitions in the three settings.
\begin{enumerate}
        \item[(i)] \textbf{Classical automatability.} For the forward direction, suppose $A$ is an automating deterministic Turing machine. In order to simulate $A$ by a circuit, we need to introduce a uniform bound on the running time of $A$. We know $A$ runs in time $\mathsf{size}_S(\varphi)^c$ for some constant $c$. Consider now the machine $A'$ that takes as input both $\varphi$ and a size parameter $s$ in unary and runs $A(\varphi)$ for $s^c$ steps, and outputs a proof if one was found, and some other string otherwise. This machine $A'$ can be simulated by a uniform circuit family of size $O((|\varphi|+ s)^{2c})$, which is still polynomial in $|\varphi|+ s$, and which outputs a proof of size polynomial in $s$ if one exists.

        For the backwards direction, assuming a circuit family $\{ C_{n, s} \}_{n, s\in \bN}$, the machine on input $\varphi$ simulates $C_{|\varphi|, 1}(\varphi)$, $C_{|\varphi|, 2}(\varphi)$ and so on, checking every time whether the output proof is valid, up to the first value of $s$ for which a valid proof is obtained. This takes time polynomial in $\mathsf{size}_S(\varphi)$.
        
        \item[(ii)] \textbf{Randomized automatability.} The argument here is similar, except that we have to account for the equivalence between the bounded expected running time of the machine and the bounded error probability of the circuits.
        
        For the forward direction, let $R$ be a probabilistic machine automating $S$ in expected time $\mathsf{size}_S(\varphi)^c$ for some constant $c$. Let $T_\varphi$ be the random variable that denotes the number of steps $R$ takes to find a proof on input $\varphi$, when $\varphi$ does have some proof. We know that $\operatorname{E}[T_\varphi] \leq \size{S}{\varphi}^c$. Consider now the modified machine $R'$ that takes $\varphi$ and a size parameter $s$ and simulates $R(\varphi)$ for $k\cdot (n + s)^c$ steps, for some constant $k$ such as $k = 100$. This machine can be turned into a random circuit with $k\cdot (n + s)^c$ random bits. It just suffices to argue that for at least $2/3$ of the choices for the random bits, the circuit will output a proof when one exists. Indeed, by Markov's inequality, the probability that $R'$ might not output a proof in time $k\cdot (n + s)^c$ is just
        $\Pr[T_\varphi > k\cdot \operatorname{E}[T_\varphi]] \leq {1}/{k}$,
        which bounds the error of the circuit as desired.

        For the backwards direction, from the sequence $\{ C_{n, s}\}_{n, s \in \bN}$ of randomized circuits we get an error-bounded probabilistic Turing machine $R(\varphi, s)$ that first obtains the description of $C_{|\varphi|, s}$ (recall that the circuit family is uniform) and then simulates $C_{|\varphi|, s}(\varphi)$. This machine $R$ always halts after $(|\varphi| + s)^{O(1)}$ steps, and, whenever a proof of size $s^c$ exists, finds one with probability at least $2/3$. Now, consider the machine $R'$ that takes as input just the formula $\varphi$ and runs $R(\varphi, 1), R(\varphi, 2), \dots$ and so on, until a proof is found. For very small values of $s$ the the machine $R$ will never find a proof, because none exists. Once we get to values of $s$ large enough such that $s^c \geq \size{S}{\varphi}$, we might still be unlucky and not find a proof when running $R(\varphi, s)$, and move to $R(\varphi, s+1)$. Note, however, that the number of times we may increments the parameter $s$ before a proof is found follows a geometric distribution, and so the expected number of trials is at most $1/p$, where $p$ is the probability of success. Since $p$ is at least $2/3$, the expected number of times we will increment $s$ before a proof is found is at most $3/2$. Altogether, the machine $R'$ will run in expected time polynomial in $\size{S}{\varphi}$.
           
        \item[(iii)] \textbf{Quantum automatability.} The proof is identical to (ii). By Yao's result that quantum circuits can simulate quantum Turing machines running in time $T$ in size $O(T^2)$ \cite{yao}, we get the right transformations between circuits and machines, and the probability analysis is exactly the same.
    \end{enumerate} \qed

\section{Properties of random lattices (Proof of \Cref{lem:random-lattices})}\label{app:counting}
This section proves the two statements of \Cref{lem:random-lattices}. We start by proving that almost every randomly sampled matrix is full-rank (\Cref{lem:random-lattices}.i). We then prove two technical lemmas. Finally we show that almost every randomly sampled full rank matrix generates a lattice with no short vectors (\Cref{lem:random-lattices}.ii). 

From now on, unless otherwise specified, we consider lattices of the form $\LL_q(A)$ where $A \in \bZ_q^{m \times n}$ and $\rank(A) = n$. Note that for any such lattice $|\LL_q(A)|=q^n$.

\begin{proof}[Proof of \Cref{lem:random-lattices}.i]
When selecting a column vector there are $q^m$ different options. At each step $i$ the previously selected columns span a subspace of $\bZ_q^m$ with $q^{i-1}$ elements, meaning that on the $i$-th selection the odds of selecting a linearly dependent vector are only ${q^{i-1}}/{q^m} = {1}/{q^{m-i+1}}$. For each step $i$, this probability is less than ${1}/{q^{m-n+1}}$. By union bounding over the $n$ opportunities, the probability of ever selecting a linearly dependent column is less than the sum of these probabilities which in turn is less than ${n}/{q^{m-n+1}}$.
\end{proof}

\begin{lemma}\label{tec:lat-num}
Let $\{\LL_q\}$ be the set of all the possible distinct rank-$n$ lattices in $\bZ_q^m$. It holds that
\begin{equation*}
    |\{\LL_q\}| = \prod_{i=0}^{n-1}\left(\frac{q^m - q^i}{q^n - q^i}\right).
\end{equation*}
\end{lemma}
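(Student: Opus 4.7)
The plan is to prove the lemma by a double counting argument applied to the set of pairs $(B, \LL)$ where $\LL$ is a rank-$n$ lattice in $\bZ_q^m$ and $B$ is an ordered basis for $\LL$, i.e., an ordered $n$-tuple of linearly independent vectors in $\bZ_q^m$ such that $\LL_q(B) = \LL$.

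First I would count these pairs by projecting onto the first coordinate. An ordered basis $B$ is determined by an $m \times n$ matrix over $\bZ_q$ of rank $n$, and counting column-by-column exactly as in the proof of \Cref{lem:random-lattices}.i already given, the $i$-th column must avoid the $q^{i-1}$-element span of the previously chosen columns. This gives
\begin{equation*}
    \#\{\text{ordered bases}\} \;=\; \prod_{i=0}^{n-1}\bigl(q^m - q^i\bigr).
\end{equation*}
Each such $B$ determines a unique lattice $\LL_q(B)$, so this also counts the pairs $(B, \LL)$.

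Next I would count the pairs by projecting onto the second coordinate. For this, I need to show that every rank-$n$ lattice $\LL \subseteq \bZ_q^m$ admits the same number of ordered bases. Two matrices $A$ and $A'$ with columns in $\bZ_q^m$ satisfy $\LL_q(A) = \LL_q(A')$ precisely when $A' = A M$ for some invertible $M \in \mathrm{GL}_n(\bZ_q)$. So, for a fixed $\LL$, the number of ordered bases is exactly $|\mathrm{GL}_n(\bZ_q)|$, and the same column-by-column argument applied inside $\bZ_q^n$ yields $|\mathrm{GL}_n(\bZ_q)| = \prod_{i=0}^{n-1}(q^n - q^i)$. Counting pairs this way gives $|\{\LL_q\}| \cdot \prod_{i=0}^{n-1}(q^n - q^i)$, and equating the two counts yields the claimed formula.

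The only subtle point is the claim that $\LL_q(A) = \LL_q(A')$ iff $A' = AM$ for some $M \in \mathrm{GL}_n(\bZ_q)$, which requires treating $\bZ_q$ as a field (so that rank and linear independence behave as expected, and so that the number of invertible $n\times n$ matrices is computed from the same formula). This is the only potential obstacle; it is handled by taking $q$ prime, which is the standard LWE setting, and then the argument is essentially a direct analogue of counting ordered bases of $\bF_q^m$ and the order of $\mathrm{GL}_n(\bF_q)$.
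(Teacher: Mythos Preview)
Your proposal is correct and follows essentially the same double-counting argument as the paper: count rank-$n$ ordered bases in $\bZ_q^m$ column-by-column, count ordered bases per lattice, and divide. The only cosmetic differences are that the paper phrases the second count as ``any $n$ linearly independent vectors in $\LL_q(A)$ form a basis'' and counts them directly inside the $q^n$-element lattice (rather than via $|\mathrm{GL}_n(\bZ_q)|$), and it introduces a superfluous $n!$ factor in both counts that cancels; your explicit remark that $q$ must be prime is a point the paper leaves implicit.
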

\begin{proof}
The cardinality of $\{\LL_q\}$ is equal to the number of rank-$n$ bases divided by the number of possible bases for each given lattice. Formally,
\[ |\{\LL_q\}| = \frac{|\{A| \rank(A) = n\}|}{|\{A' | \LL_q(A) = \LL_q(A')\}|}. \]
We take an algorithmic approach to counting the number of $A$ for which $\rank(A) = n$. To select such an $A$, we first set $a_0$ equal to one of the $q^m - 1$ non-zero points in $\mathbb{Z}_q^m$. Then for each subsequent $i$ we set $a_i$ equal to a point in $\mathbb{Z}_q^m$ not contained in the rank $i$ lattice spanned by $(a_0, \dots, a_{i-1})$. We know that there are $q^{i}$ vectors in that lattice leaving us with $q^m - q^{i}$ possible choices for $a_i$. To avoid double counting the permutations of a given basis we divide by $n!$, concluding that there are ${\prod_{i=0}^{n-1}\left(q^m - q^i\right)}/{n!}$ matrices $A$ with $\rank(A) = n$.

Next, to count the number of possible bases $A'$ we first note that any set of $n$ linearly independent vectors in $\LL_q(A)$ is a basis of $\LL_q(A)$. Then we follow the same method as above for generating a basis except our choices are now limited to the $q^n$ vectors in the lattice. So we end up with a total number of possible bases of ${\prod_{i=0}^{n-1}\left(q^n - q^i\right)}/{n!}$. 

If we divide the number of rank-$n$ bases by the number of bases per lattice we get $\prod_{i=0}^{n-1}\left(\frac{q^m - q^i}{q^n - q^i}\right)$.
\end{proof}

\begin{lemma}\label{tec:log-inequality} 
When $q \geq n \geq 1$, $\log_q(q+1)(n-1) \leq n$.
\end{lemma}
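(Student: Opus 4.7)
The plan is to reduce the inequality to an elementary bound and then dispatch a couple of small cases directly. First, I would observe that the statement is equivalent, by exponentiating both sides with base $q$, to the claim
\[ (q+1)^{n-1} \leq q^n, \]
and dividing both sides by $q^{n-1}$, to the more suggestive form
\[ \left(1 + \tfrac{1}{q}\right)^{n-1} \leq q. \]

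The key step is to invoke the standard bound $\left(1 + \tfrac{1}{q}\right)^q \leq e$. Since the hypothesis $q \geq n$ gives $n - 1 < q$, monotonicity yields
\[ \left(1 + \tfrac{1}{q}\right)^{n-1} \leq \left(1 + \tfrac{1}{q}\right)^q \leq e < 3, \]
which settles the inequality whenever $q \geq 3$.

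It then remains to treat the small cases $q \in \{1, 2\}$. If $n = 1$, the left-hand side of the original inequality is $0$ and there is nothing to check (this also covers $q = 1$, which forces $n = 1$). The only genuinely new case is $q = 2$, $n = 2$, where $\log_2(3) \approx 1.585 \leq 2$ holds directly. I do not anticipate any real obstacle: the only mild subtlety is making sure the reduction to $(1+1/q)^{n-1} \leq q$ is valid (it requires $q > 1$, handled by the trivial $n = 1$ case), and choosing to use $(1 + 1/q)^q \leq e$ rather than Bernoulli-style bounds, which would give a tighter but unnecessary estimate.
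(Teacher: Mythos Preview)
Your argument is correct and takes a genuinely different route from the paper. The paper first uses that $\log_q(q+1)$ is decreasing in $q$ to reduce to the worst case $q=n$, rewrites the target as $\frac{\ln(n+1)}{\ln n}\le \frac{n}{n-1}$, and then appeals to the mean-value-type bound $\ln(n+1)\le \ln n + 1/n$ together with $n\ln n \ge n-1$. You instead exponentiate to reach the equivalent $(1+1/q)^{n-1}\le q$ and finish with the classical inequality $(1+1/q)^q\le e$, handling $q\le 2$ by inspection. Your route is arguably cleaner: it avoids any calculus and makes the role of the hypothesis $n\le q$ completely transparent (it is exactly what lets you pass from exponent $n-1$ to exponent $q$). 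The paper's approach, on the other hand, never needs a case split on small $q$, but it silently relies on $n\ge 2$ (the ratio $\ln(n+1)/\ln n$ and the fraction $n/(n-1)$ are both undefined at $n=1$), so in practice both proofs have a trivial boundary case to dispatch.
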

\begin{proof}
    Because $\log_q(q+1)$ is monotonically decreasing it suffices to show that $\log_n(n+1)(n-1) \leq n$. By change of basis and reordering this is equivalent to proving that
    \[ \frac{\ln(n+1)}{\ln(n)} \leq \frac{n}{n-1}. \]
    It is well known that \[ \frac{d(\ln(n))}{dn} = \frac{1}{n}, \] which is also monotonically decreasing, meaning that \[ \ln(n+1) \leq \ln(n) + 1\frac{1}{n}. \]
    Thus,
    \[ \frac{\ln(n+1)}{\ln(n)} \leq \frac{\ln(n) + \frac{1}{n}}{\ln(n)} = 1 + \frac{1}{n\ln(n)} \leq 1 + \frac{1}{n-1} = \frac{n}{n-1}\,. \]
\end{proof}

\begin{proof}[Proof of \Cref{lem:random-lattices}.ii]
    Every lattice with a short vector can be specified by a tuple of a lattice of rank $n-1$ and a short vector. All vectors with length less than $r$ must lie in the region $(-r,r)^m$ so we know there are fewer than $(2r+1)^m$ short vectors. Combining this with the number of lattices of rank $n-1$ from \Cref{tec:lat-num} we get that 
    \[ |\{\LL_q(A) \mid \lambda_1(\LL_q(A)) \leq r\}| \leq(2r+1)^m \prod_{i=0}^{n-2}\left(\frac{q^m - q^i}{q^{n-1} - q^i}\right). \]
    
    If we divide this upper bound on the number of lattices with short vectors by the exact count of the number of total lattices from  \Cref{tec:lat-num} we can see that the fraction of lattices which contain a vector of length less than $r$ is less than
    \begin{align*}
    \frac{(2r+1)^m \prod_{i=0}^{n-2}\left(\frac{q^m - q^i}{q^{n-1} - q^i}\right)}{\prod_{i=0}^{n-1}\left(\frac{q^m - q^i}{q^n - q^i}\right)} &= \frac{(2r+1)^m}{\frac{q^m-q^{n-1}}{q^n-q^{n-1}}} \prod_{i=0}^{n-2} \frac{\left(\frac{q^m - q^i}{q^{n-1} - q^i}\right)}{\left(\frac{q^m - q^i}{q^n - q^i}\right)} \\
    & = (2r+1)^m \frac{q^n - q^{n-1}}{q^m - q^{n-1}}\prod_{i=0}^{n-2}\frac{q^n-q^i}{q^{n-1}-q^i}\\
    & \leq (2r+1)^m \frac{q^n}{q^{m-1}}\left(\frac{q^n-q^{n-2}}{q^{n-1}-q^{n-2}}\right)^{n-1} \\
    &= \frac{(2r+1)^m}{q^{m-n-1}}(q+1)^{(n-1)} \\
    &= \frac{(2r+1)^m}{q^{m-n-1}}q^{\log_{q}(q+1)(n-1)}\\
&\leq \frac{(2r+1)^m}{q^{m-2n -1}}. \tag{by \Cref{tec:log-inequality}}
    \end{align*}
\end{proof}

\newpage
\section{Closed form for the basis of a $q$-ary lattices (Proof of \Cref{lem:q-basis})}\label{app:q-ary}

\Cref{lem:q-basis} follows immediately from the following technical statement, as the applications of $C$ and $C^{-1}$ simply change the indexing of the basis so that the first $n$ rows of the basis are linearly independent.

\begin{lemma}\label{$q$-ary lattice basis}
Let $\cB \in \bZ_q^{m \times n}$, with $\cB_1 \in \bZ_q^{n \times n}$ and $\cB_2 \in \bZ^{m - n \times n}$ so that
$$\cB = \begin{bmatrix}
\cB_1 \\
\cB_2
\end{bmatrix}.$$

If the the first $n$ rows of $\cB$ are linearly independent, then  $\Delta_q(\cB) = \LL(\cB')$ where $\cB' \in \bZ^{m \times m}$ is defined as 
$$\cB' = \begin{bmatrix}
I_n & 0\\
\cB_2\cB_1^-1 & qI_{m-n}
\end{bmatrix},$$
with $\cB_1^{-1}$ the inverse of $\cB_1$ in the modular field $\bZ^{m}$.
\end{lemma}


\begin{proof}
To prove equality we need to show $\LL(\cB') \subseteq \Delta_q(\cB)$ and $\Delta_q(\cB) \subseteq \LL(\cB')$. First we show  $\LL(\cB') \subseteq \Delta_q(\cB)$.

Assume $x =  \begin{bmatrix}x_1 \\ x_2 \end{bmatrix} \in \LL(\cB')$ where $x_1 \in \bZ^n$ and $x_2 \in \bZ^{m-n}$. Then we know there exists $z = \begin{bmatrix}z_1 \\ z_2 \end{bmatrix} \in \bZ^m$ such that $\cB' z = x$. Thus,

\begin{align*}
    x = \cB' z &= \begin{bmatrix}
I_n & 0\\
\cB_2\cB_1^-1 & qI_{m-n} 
\end{bmatrix}
\begin{bmatrix}
z_1\\
z_2
\end{bmatrix}
= \\
&= \begin{bmatrix}
z_1\\
\cB_2\cB_1^{-1}z_1 + q*z_2 
\end{bmatrix}
\equiv
\begin{bmatrix}
z_1\\
\cB_2\cB_1^{-1}z_1
\end{bmatrix}
\text{(mod q)} \\
&\equiv
\begin{bmatrix}
\cB_1\cB_1^{-1}z_1\\
\cB_2\cB_1^{-1}z_1
\end{bmatrix}
\text{(mod q)} \\
&=
\begin{bmatrix}
\cB_1\\
\cB_2
\end{bmatrix}
\begin{bmatrix}
\cB_1^{-1}z_1
\end{bmatrix}
\end{align*}


and so $x \in \Delta_q(\cB)$. Next we prove $\Delta_q(\cB) \subseteq \LL(\cB')$.

Assume $x = \begin{bmatrix}x_1 \\ x_2 \end{bmatrix}\in \Delta_q(\cB)$ then there is $z \in \bZ_q^m$ such that  

$$x \equiv \begin{bmatrix}
\cB_1\\
\cB_2
\end{bmatrix}
\begin{bmatrix}
z
\end{bmatrix}
(\text{mod } q)
= \begin{bmatrix}
\cB_1 z\\
\cB_2 z
\end{bmatrix}
= \begin{bmatrix}
x_1 + aq\\
x_2 + bq
\end{bmatrix},$$
meaning that $ z \equiv \cB_1^{-1}x_1$ and $\cB_2\cB_1^{-1}x_1 = x_2 + cq$. Now,

$$\cB' \begin{bmatrix}
x_1\\
-cq
\end{bmatrix}
=\begin{bmatrix}
x_1\\
x_2 + cq - cq
\end{bmatrix} = x,$$
so $\Delta_q(\cB) \subseteq \LL(\cB')$.
\end{proof}

\end{document}